\newcolumntype{+}{!{\vrule width 2pt}}
\newlength\savedwidth
\newtheorem{theorem}{Theorem}[section]
\newtheorem{lemma}[theorem]{Lemma}
\newtheorem{mydef}{Definition}[section]
\newtheorem{mydefs}[mydef]{Definitions}
\newcommand{\bea}{\begin{eqnarray}}
\newcommand{\eea}{\end{eqnarray}}
\newcommand{\beq}{\begin{equation}}
\newcommand{\eeq}{\end{equation}}
\providecommand{\eqref}[1]{(\ref{#1})}
\newcommand{\figref}[1]{Fig.\ \ref{#1}}
\newcommand{\Figref}[1]{Fig.\ \ref{#1}} 
\newcommand{\tabref}[1]{Table~\ref{#1}}
\newcommand{\Tabref}[1]{Table~\ref{#1}}
\newcommand{\secref}[1]{Section~\ref{#1}}
\newcommand{\Secref}[1]{Section~\ref{#1}}
\newcommand{\appref}[1]{Appendix~\ref{#1}}
\newcommand{\lemmaref}[1]{Lemma~\ref{#1}}
\newcommand{\tsedef}[1]{\textsc{#1}}
\newcommand{\urlname}[1]{\href{http://#1}{\texttt{#1}}}
\newcommand{\equationref}[1]{Eq.~\ref{#1}}
\newcommand{\hmax}{h_\mathrm{max}}
\newcommand{\kin}{k^{\mathrm{(in)}}}
\newcommand{\kout}{k^{\mathrm{(out)}}}
\newcommand{\Dcal}{\mathcal{D}}
\newcommand{\Ecal}{\mathcal{E}}
\newcommand{\Gcal}{\mathcal{G}}
\newcommand{\Ncal}{\mathcal{N}}
\newcommand{\Pcal}{\mathcal{P}}
\newcommand{\Vcal}{\mathcal{V}}
\begin{document}
 \begin{flushright}
 \end{flushright}
 \vspace*{0.5cm}

\begin{center}
	{\Large\textbf{Order in Innovation}	}
	\\[0.5\baselineskip]
	{\large Martin Ho\textsuperscript{1,2\S},
	Henry CW Price\textsuperscript{3,4\S},
	Tim S Evans\textsuperscript{3,4\ddag},
	Eoin O\textsc{\char13}Sullivan\textsuperscript{1,2\ddag}
	}
	\\[0.5\baselineskip]
	\textbf{1} Centre for Science Technology \& Innovation Policy, University of Cambridge, Cambridge CB3 0HU, United Kingdom
	\\
	\textbf{2} Institute for Manufacturing, Department of Engineering,  University of Cambridge, Cambridge CB3 0HU, United Kingdom
	\\
	\textbf{3} Centre for Complexity Science, Imperial College London, London SW7 2AZ, United Kingdom
	\\
	\textbf{4} Theoretical Physics group, Department of Physics, Imperial College London, London SW7 2AZ, United Kingdom
	\\[0.5\baselineskip]
	
	%
	%
	\S These authors contributed equally to this work.
	
	\ddag These authors also contributed equally to this work.
	
	* Corresponding author: \texttt{wtmh3@eng.cam.ac.uk}
	
\end{center}

\begin{abstract}
\noindent
Is calendar time the true clock of innovation? By combining complexity science with innovation economics and using vaccine datasets containing over three million citations and eight regulatory authorisations, we discover that calendar time and network order describe innovation progress at varying accuracy. First, we present a method to establish a mathematical link between technological evolution and complex networks. The result is a path of events that narrates innovation bottlenecks. Next, we quantify the position and proximity of documents to these innovation paths and find that research, by and large, proceed from basic research, applied research, development, to commercialisation. 
By extension, we are able to causally quantify the participation of innovation funders. When it comes to vaccine innovation, diffusion-oriented entities are preoccupied with basic, later-stage research; biopharmaceuticals tend to participate in applied development activities and clinical trials at the later-stage; while mission-oriented entities tend to initiate early-stage research. 
Future innovation programs and funding allocations would benefit from better understanding innovation orders.
\end{abstract}

\section{Introduction: Why do we need to understand the order of innovations?}
\label{intro}

We all know time flows linearly in one direction. Innovation, on the other hand, is historically one-directional but nonlinear~\cite{RN285}. Therefore, studies that present innovation events on a linear calendar timescale alone cannot represent causality, importance, and convergence of innovation intermediaries. In multi-step reactions in chemistry, reactants do not jump straight to products; there are intermediaries with different activation energy and always a rate-determining step that the overall reaction cannot proceed faster than. Chemists often catalyse the rate-determining step to speed up the overall reaction. Likewise, an innovation process contains intermediary outputs, and we further propose that there are bottlenecks whose catalysis would accelerate the overall innovation process. Accordingly, we investigate:

In what \emph{order} did individual technological breakthroughs occur to realise innovation outcomes? And, by extension, in what \emph{order} did innovating entities support the most rate-limiting innovations along an order? To answer these questions, we prototype the use of a multilayer \tsedef{directed acyclic graph} (DAG) to order scientific and technological precursors of innovation breakthroughs.

We propose methods to understand innovation order because contemporary analytical regimes may fall short of systematic causal explanations of complex, multi-phase innovation. Evolutionary economics acknowledges the existence of innovation intermediaries: Science and technology are subject to evolution,
which, in turn, leads to constant changes\footnote{Evolutionary economists state that it is technologies and organisations that evolve; price, quantities, and GDPs are downstream changes.} in the macroeconomy~\cite{RN202}. Sociotechnical transition describes the emergence of new technologies via evolutionary intermediaries and their incorporation in the society. However, this approach has so far been limited to qualitative case studies of longitudinal innovation because there is a tradeoff between scope of the innovation being analysed and depth of causal explanation. As a result, sociotechnical transition theorists call for techniques that can analyse the ``heterogeneity and multi-dimensionality of large scale sociotechnical systems''~\cite{RN1295}.

We argue potential outcome reasoning
``potential outcome reasoning''  
in natural experiments, commonly used in policy evaluation, is not the most appropriate in establishing in evolutionary economics. Rooted in clinical statistics, potential outcome reasoning estimates the causal effect of a treatment variable on an outcome variable by randomly assigning subjects into treatment and control groups so that, on average\footnote{As permitted by central limit theorem and law of large numbers}, the treatment and control groups only differ by their treatment status and is independent of all other factors (\secref{natural_experiment} for details). In innovation ecosystems, however, it is unlikely that (quasi-)random assignment can be achieved because ideas respect no barriers: There is no such thing as a  ``natural control'' in innovation due to low marginal cost to adopt knowledge. Another requirement of natural experiment is that treatment, outcome, and all confounding variables be accounted for unless there is an appropriate instrumental variable. Not only is it impractical to regress all relevant variables in an innovation system, often, the representativeness of innovation variables are sensitive to time. To illustrate, a drug in a Phase I trial is focused on toxicity, whereas the same drug at Phase III relies on efficacy variables. Problems about potential outcome reasoning are not unique to innovation economics: epidemiologists, who chiefly use randomised controlled trials, struggle with different states of a same variable, the specificity of variables (e.g. what variable can exhaustively denote innovation?), the context dependence of causality, and using different types of evidence to arrive at one overall verdict~\cite{RN1636}.

Interestingly, graph theory is used alongside, rather than as an alternative to, randomised controlled trials in epidemiology~\cite{RN1635}. However, epidemiologists' use of DAGs is limited to non-parametric visual representations of variables in randomised controlled trials and \emph{a priori} exploration of causal variables~\cite{RN1632,RN1633}. Citation networks are a prime example of DAG being analytically applied to causally order innovation events~\cite{RN1427,RN996}. Their ability to support causal inference is, nevertheless, marred by incomplete data. This is firstly because citation networks typically rely on one type of data, either patents only or journal publications only, meaning not all technological maturities are represented. Secondly, citation datasets are typically generated using keyword searches only~\cite{RN1317,RN1316}. Searching for patents using keyword search, for example,  ``biofuel'', would not result in a citation network that captures early scientific advances, in, for example, genetic engineering because the future applications of these advances were unknown. Thirdly, contrary to natural experiments that are restrictive in the dataset being used, it is often a challenge to delimit a specific dataset to construct a citation network. For instance, some citation networks lose specificity by clustering millions of patents ever filed in a country crudely by patent classification codes.

As a DAG contains a causally ordered chain of events, provided the network data is sufficient and relevant, from there we can directly observe the causal path of input A to outcome B along with all causal intermediaries. Network science describes and analyses complex systems through abstraction: with nodes representing entities and edges representing a connection between a node pair. The network approach has been successful in deducing properties of real networks, such as the fat-tailed degree distribution (power law) and community behaviours (e.g. centrality) of entities~\cite{RN1647,RN1648}. However, attempts to deduce causal relations in multilayer networks remain scarce, but this is fundamental to understanding complex systems such as innovation.

This paper shows how to represent innovation order and demonstrates how this can better our understanding of innovation from an evolutionary perspective. \Secref{methods} develops an original method of applying graph theory to innovation; \secref{s:data} introduces the empirical data; \secref{results} interprets results; \secref{discussion} discusses the use of longest path to order innovation; \secref{conclusion} concludes.


\section{Methods} \label{methods}

In this section, we look at how we move from raw data to produce the citation networks which encode the multiplicity of innovation phases.  One important feature is our integration of multiple sources of data. Another key difference to earlier work is that the direction of time in a citation network is fundamental to our approach. We give a formal set of definitions in the Supplementary Material.


\subsection{Data}\label{s:data}

We create a multilayer citation network which is a directed acyclic graph (DAG) in order to observe innovation patterns
and to test the relationship between critical scheduling events and documents on or close to the longest paths in the network, as discussed later in \secref{discussion}.

Data on medical innovations is an excellent source, not only because the concept of translation is most established in medical research,
but also because new therapeutics are required by law to be reported and registered. In particular, we focus on vaccine approvals where there is excellent data available.

Vaccination confers long-lasting and protective immunity by presenting antigens of interest to elicit specific antibody production in recipients. Historically, vaccines present antigen through inactivated or attenuated version of whole or protein subunits of pathogens. Beyond efficacy, to prevent the spread of infectious agents, vaccines are administered to a large proportion of a population. Hence, vaccine must be safe and inexpensive. As a rapid countermeasure to such pathogenic outbreaks, other bottlenecks for vaccine platforms are manufacturability and ease of deployment. We outline the four vaccine platforms covered in this analysis and some technical events we expect to recover from the network in \tabref{tab:empirical_data}. In \appref{a:datasource} we give further details of the data sources used and the innovation events we expect.

Each network we create starts from a single document approving a particular vaccine, and this is the only source node in that DAG.

We obtain our data on clinical approvals from the US Food and Drug Administration (FDA), 
European Medicines Agency (EMA), and the UK Medicines and Healthcare products Regulatory Agency (MHRA). 
When a product is authorised by any of the three entities,
we use the first authorised date to represent novelty and scan for all available references from all three agencies' authorisations~\cite{RN1723,RN1724,RN1722,RN1721,RN1725,RN1726,RN1650,RN1728}.

\begingroup
\renewcommand{\arraystretch}{1.5}
\begin{table}[!ht]
\centering
\small
\caption{\label{tab:empirical_data}Information on the vaccines analysed here. The number of nodes and edges are those present in the multilayer citation network created from a multi-step snowball sample starting from the vaccine approval document.
}

\begin{tabular}{lllllll}
 \hline
\textbf{Vaccine}          &\textbf{Technology}  & \textbf{Disease} & \textbf{Developer} & \textbf{Year first} & \textbf{Source} & \textbf{Data}\\
\textbf{network}           & \textbf{platform}   & \textbf{targeted} & & \textbf{approved} &\textbf{node} & \textbf{source}\\
 \hline
Spikevax  & \multirow{2}{*}{mRNA} & \multirow{3}{*}{COVID-19}& Moderna &2020 & \cite{RN1723} & \cite{RN1723,RN1857,RN1858}\\
\cline{1-1}  \cline{4-7}
Comirnaty & &&BioNTech& 2020& \cite{RN1724} & \cite{RN1724, RN1859, RN1860} \\\cline{1-2}  \cline{4-7}

Vaxzeria   & \multirow{2}{*}{Viral Vector}  && AstraZeneca& 2020&\cite{RN1722} & \cite{RN1722, RN1861}\\ \cline{1-1} \cline{3-7}

Zabdeno  & & Ebola & Janssen &2020& \cite{RN1721} & \cite{RN1721}\\
\hline
Dengvaxia  & \multirow{2}{*}{Live Attenuated} & Dengue & Sanofi Pasteur&2019&\cite{RN1725} & \cite{RN1725,RN1862} \\
  \cline{1-1} \cline{3-7}
Imvanex &&Smallpox&Bavarian Nordic&2013& \cite{RN1726} & \cite{RN1726,RN1863,RN1864}\\
\hline
Nuvaxovid & \multirow{2}{*}{Subunit}  &COVID-19&Novavax&2022&\cite{RN1650} & \cite{RN1650, RN1865, RN1866}  \\
\cline{1-1} \cline{3-7}
Shingrix  && Shingles &GSK&2017&\cite{RN1728} &\cite{RN1728,RN1867}\\
\hline

\end{tabular}

\end{table}

\endgroup

\subsection{Innovation network}

We start by defining the key properties of the innovation networks used in our work. Formally, a network (or graph) is a set of nodes, and pairs of nodes can be connected by an edge. In our networks, each node represents a single document which is one of four types: an innovation outcome represented by regulatory authorisation,
a clinical trial, a patent, or an academic publication. So, our networks are examples of what are called \tsedef{multilayer networks}, for example see \cite{C21}, as each type of node can be visualised as placed on a different \tsedef{layer}, see \figref{fig:multilayer_framework}.
Our edges, written as $(u,v)$, are citations from one node $u$ to another node $v$
so our networks are examples of \tsedef{citation networks}. Note that edges in citation networks have a sense of direction as $(u,v)$ represents an entry listing document $v$ in the bibliography of a document $u$, not the other way round. So, citation networks are examples of what are known as directed networks.

Citation networks also have a sense of order since a document cannot cite a later document so for an edge $(u,v)$, document $v$ must have been published before\footnote{Our data gives a single date for each document but in reality one can associate several different `publication' dates: application and grant dates for patents, date first appeared online as opposed to the official publication date written in the text of a journal publication \cite{HBC15}, etc. So, the data used to build a citation network can have edges that go from an earlier to a later document at least according to any single date we assign to each document, something seen in any work with citation networks such as \cite{CGLE14}. To portray novelty consistently, we use the first published date for publications, priority date for patents, and start date for clinical trials.}
document $u$.
As a result, there should be no cycles (loops) in our networks.  That is, if we move from one node to a neighbour, respecting the direction of the edge, and then repeat these steps as often as we want (this defines what is called a `walk' in a network \cite{C21}), we will never return to the same node twice. Thus, our citation networks are examples of what are called \tsedef{directed acyclic graphs} (DAG).
The direction and the lack of cycles in a DAG are a direct result of a sense of order
that is present in all DAGs. In a citation network, the order is the arrow-of-time implicit in a citation network. This order in a DAG leads to several special properties, which we exploit in our work.

In practice, we find that our data initially gives networks where 0.07\% of all edges are part of a cycle, for example, due to authors citing each others' paper during journal submission or mislabelling. We always remove these cycles (as described below) to ensure reduce our the networks we analyse are always DAGs.

\subsection{Growing an innovation network}

The networks we use all start from a single seed node, known as the \tsedef{source node}, representing the regulatory marketing authorisation for one vaccine. This is because the regulatory decision represents the first time a therapeutic product is marketed and thus marks an innovation breakthrough. This regulatory authorisation node will be the newest node in each network we consider and so the only node in that network with no citations, that is, no incoming edges. This is the only node in our initial set of nodes denoted $\Vcal(0)$.

In the second step, we scan the regulatory authorisation for any publications, clinical trials, and patents. We denote these documents as part of the set of nodes $\Vcal(1)$ at a `depth' of one from the source vertex. An edge is added from the source regulatory document to each of these document nodes at depth one. 

In addition, since regulatory documents do not normally contain patent ids, we also locate precise patents $p$ associated with vaccines through supplementary information on drug manufacturer inserts and websites. We add a node $p$ and a link $(f,p)$ from the regulatory document $f$ to each associated patent.

Once we have all the patents $p$ associated with the regulatory document $f$, directly and indirectly through drug manufacturer inserts and websites and through the Intervention sections of clinical trial documents, we finish by looking at \tsedef{patent families}. Each of the patents we have found is part of a patent family, something mentioned in the patent information, giving us further patents, say $p_{a}$ where label $a$ identifies a patent in the same family as $p$. However, we do not add new nodes for each of these patents $p_a$. We do find all references from any associated patent $p_{a}$ to any further document, say $d$. However, all of these references are represented as links from the single $p$ node to document $d$, a node in the second level set $\Vcal(2)$, that is we add a link $(p,d)$. In some sense, the patent nodes $p$ at this first level represent all patents in the same family. We do not do this for patents at higher levels.

Another way we expand the patents in the early parts of our citation network is that we look at the clinical trials $t$ in the regulatory document $f$, where there is already a link $(f,t)$. We then search patent databases for therapeutic names recorded in the ``Intervention'' sections of each clinical trial document. New patents $p_t$ found this way are also added as nodes at the next level, part of $\Vcal(2)$, with links $(t,p_t)$. We also perform the same search for documents cited by patents in the same patent family as $p_t$ as noted above. 

Thirdly, we perform snowball sampling. That is, at the $\ell$-th step of sampling, we have a set of documents $\Vcal(\ell)$ which form the nodes most recently added to our DAG. We start the process from the set of documents $\Vcal(1)$, those one step away from the regulatory approval document. We follow the references in these $\Vcal(\ell)$ documents to create new edges, say edge $(v,r)$ from document $v \in \Vcal(\ell)$ to a document $r$ listed in the bibliography of $v$. If we have not encountered a document $r$ so far, then we add $r$ to the next set of documents to be considered, namely $\Vcal(\ell+1)$.  
This process produces an exponential growth in the number of documents, so we have to terminate this process at some stage.  We do choose to do this after three steps because of computational limitations. This leaves us with vertices defined by the four distinct sets $\Vcal(0)$, $\Vcal(1)$, $\Vcal(2)$ and $\Vcal(3)$. We also have all the edges defined by the references of the documents in the first three sets. The final step of this part of our process is to look at the references given in the last set of documents found, $\Vcal(3)$. If any document $v$ in this last set of documents found refers to a document $r$ we have already added to our network, then we add an edge $(v,r)$. Should this reference be to a document $r$ not currently in our data set, we do not add this document $r$ as a new node to our network, and neither do we add an edge to $r$.  

We limit the growth of the network to three iterations for two reasons: (i) the graphs would have grown exponentially in the first iteration of the algorithm, such that any further network growth will capture innovations so distant in the past that it would be unclear whether we can attribute them to the innovation outcome; and (ii) such a growth will result in an unmanageable graph due to computational and data-sourcing limitations. At the end of the process, our network will have several nodes with no outgoing edges, and these nodes are known as \tsedef{sink nodes}.

In doing this snowball sampling, we have access to the citation data on three types of document: clinical trials from \urlname{ClinicalTrials.gov}; patents from \urlname{Lens.org}~\cite{RN1697}; and publication data from \urlname{Dimensions.ai}~\cite{RN1696}. The exhaustiveness of the first two data sources rests on the fact that all drugs and biological products conducted under regulatory investigational new drug application must be registered on, \urlname{ClinicalTrials.gov} and that the FDA maintains a database of all drugs and biological products it has approved. Similar requirements exist for the European Medicines Agency, but we do not include any results based on approvals from. 

As with any data on citations, our results will be incomplete. References in the original documents may not be captured in our data sets for a number of reasons, such as errors in the original document, incorrect transcription from the primary source to our electronic sources for citations, or simply some documents are not in our databases such as press releases or preprints (`grey' literature). The legal framework behind vaccines means that our data on regulatory approval, clinical trials and patents is likely to be better than journal citations, but no data is perfect. We have not attempted to study the effect of errors in our data, rather relying on the large scale of our data to provide some statistical safety net.

Our process up to this point has provided a directed network where the nodes always have two additional labels. First, we record which of the four types of documents a node represents. The second node label gives a single date which we call the publication date: the priority date for patents, the start date for clinical trials, and the official publication date for an academic article. Additional node information about the funding of the research reported in any document is discussed later.

However, we also require that our network is acyclic. 
While in principle one document only ever refers to older documents, which guarantees an acyclic network, in practice there are always cycles in raw citation networks. These arise because documents are never created in a single moment of time. In practice, documents have a range of dates, from the first formal submission of a document (such as the application to hold a clinical trial, filing of a patent, depositing a paper on a preprint server) through to a final version of a document (the end of a clinical trial, the award of a patent or the physical publication date assigned to a journal article). For journal articles, there are many possible dates we could use \cite{HBC15} but they usually differ by the smallest amount, typically less than a year. For clinical trials and patents, the range of dates associated with these documents can be over several years.
Across the eight networks studied, the mean and standard deviation of the number of cycles per edge was $0.0022 \pm  0.0008$.
Our final step is to remove one edge from every cycle to produce a true directed acyclic graph\footnote{In practice, we use the \texttt{find\_cycle()} function in the \texttt{NetworkX} package~\cite{hagberg2008exploring}. This is used iteratively to remove all cycles. When a cycle is found, the first edge of the cycle is removed from the graph and the function is run again until no cycles can be found. Alternative approaches can be used to produce an acyclic graph \cite{CGLE14}.}.

\begin{figure}
  \centering
    \includegraphics[width=1\textwidth]{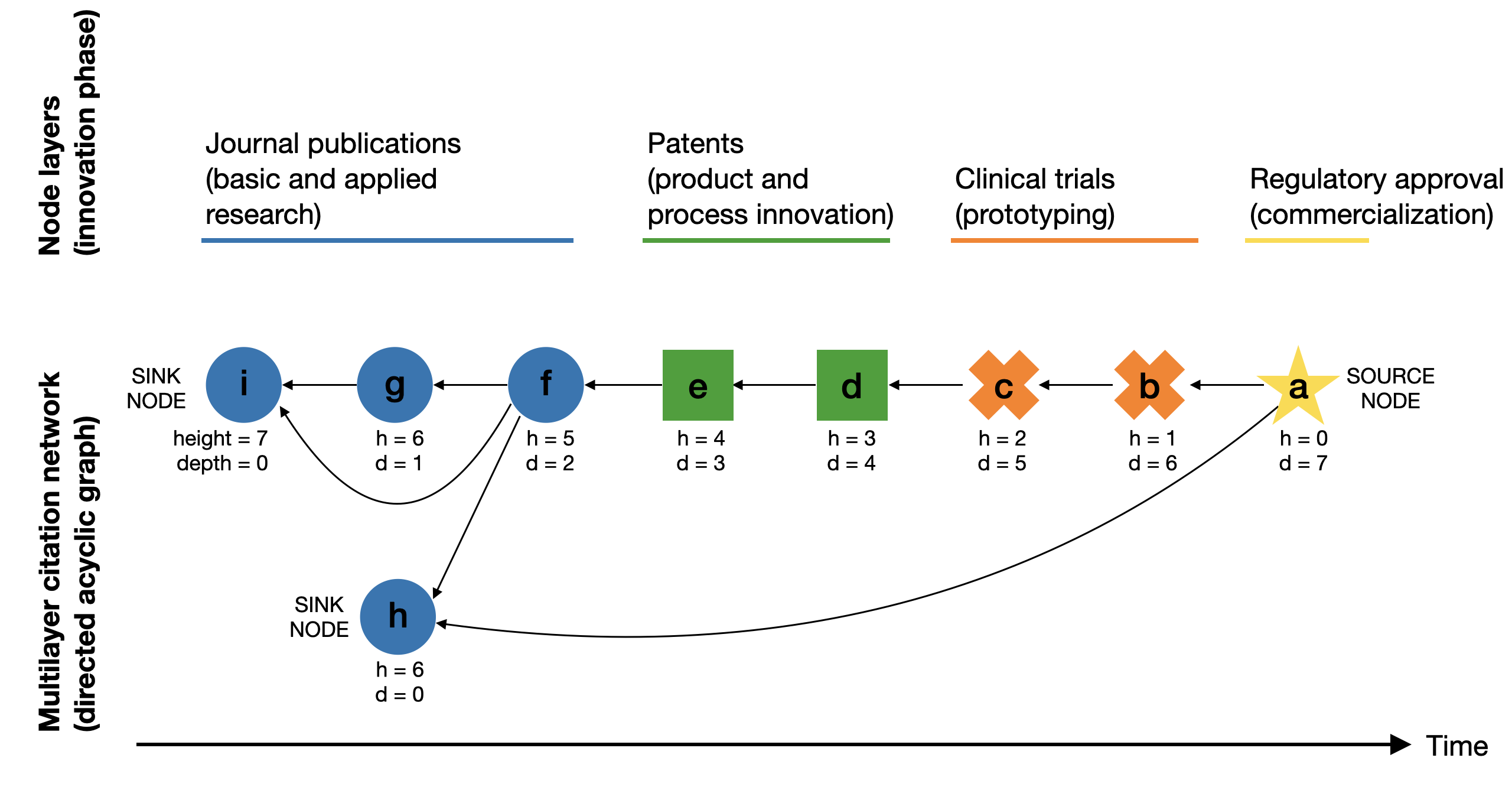}
  \caption{\textbf{Conceptual framework for multilayer innovation network.} Arrows represent direct citations from newer to older document, embedding causality and the flow of time. Colours represent different data sources and approximate innovation phases. Only one source node with height 0 exists, but multiple sink nodes with depth 0 may exist. In this illustrative figure, the critical path is formed by nodes \emph{abcdefgi} and represents the maximum distance between any two nodes in the graph. We further propose that the critical path in a multilayer innovation network represents a series of cumulative knowledge used for an innovation outcome. In contrast, any edge in a graph can be a shortest path, which might miss information on knowledge inheritance.}
  \label{fig:multilayer_framework}
\end{figure}

\subsection{Longest path in a citation network}

A path in a network is a sequence of distinct nodes, $\{u_0,u_1,\ldots,u_\ell\}$, where each consecutive pair of nodes forms an edge so $(u_i,u_{i+1})$ is an edge, from $u_i$ to $u_{i+1}$ if the edges are directed as here. In our case, paths are always moving backwards in time, as each document in a path can only cite an older document as the next step on a path. We will define the \tsedef{length of a path}, $\ell$, to be the number of edges in the path (one less than the number of nodes). In particular, we will focus on the \emph{longest} paths, not on the shortest paths normally encountered in network science, e.g.\ as in \cite{C21}. It is one of the special properties of a DAG that the longest paths are typically of a reasonable length, making them useful measures. See \secref{discussion} for a more detailed discussion of why we work with the longest path. We will define the \tsedef{distance} between pairs of nodes in a DAG to be equal to the length of the longest path between two nodes.

A key assumption in our work is that the most important steps for an innovation lie on or close to the longest path in an innovation citation network. We argue that this is because knowledge is built up incrementally.
Even when there are leaps in development, they are built on the success or failure of the most recent attempts to develop science.
Our longest paths contain many documents that made a small contribution to the final vaccine but we suggest that all the key documents will be there.
By way of contrast, had we used the shortest paths to study our innovation networks, the most widely used path in Network Science \cite{C21}, the shortest paths do not contain cumulative information of knowledge inheritance.
The shortest path would miss information because a document may cite important but old documents and so the shortest path will miss more recent critical developments, see \figref{fig:multilayer_framework}.
For a longer discussion of our choice and possible alternatives, including the differences between the longest path in a network and critical path in a schedule, see the discussion in \secref{discussion}.

In order to study the longest paths, it is convenient to define two standard properties of nodes in a DAG.
The \tsedef{height} $h(v)$ of a node $v$ is the maximum distance from the source node (the seed authorisation document) to the node $v$ while the \tsedef{depth} $d(v)$  is the maximum distance from node $v$ to any of the sink nodes.
The height of the DAG $\hmax$ is equal to the largest possible value of the height, $\hmax = \max \{h(v)\}$. The height of a DAG is also always equal to the largest depth of any node, which in our case is the depth of the seed node, the regulatory approval node and the only source node in our DAGs.

It is important to note that because our distance is integer valued, there can be many longest paths between any two nodes, not just one.
Further, while we argue that critical developments will lie on a longest path,
this is not something we can prove rigorously and, in any case, we can expect data used to form our citation network to be imperfect.
Therefore, it is extremely useful to be able to look at documents that are not on one of the longest paths to the source node
but instead lie on a path from source to sink that is one or two steps shorter than the longest path in the DAG.
That is we will also consider documents that are \emph{close} to a longest path. To quantify what we mean by `close' in this context, we define \tsedef{criticality} $c(v)$ for a node $v$ as:
\begin{equation}
	c(v)= \hmax - h(v) - d(v).
	\label{eq:criticality}
\end{equation}
Criticality $c$ takes integer values between zero and the largest possible value of height or depth,  the height of the DAG $\hmax$.
Any node which lies on a longest path of the DAG will have zero criticality. Equally, nodes which lie on a path from the source node to a sink node which is $c$ steps shorter than the longest path of the DAG will have a criticality value of $c$. Thus, the criticality value of a node can be thought of as the distance of a node to one of the critical paths down which the key innovations flow.
Applying \eqref{eq:criticality} to \figref{fig:multilayer_framework}, nodes $a$ to $i$  have a criticality of $0$, indicating they are on the longest path, whereas node $h$ has a criticality of 1.

In other words, for a given node in the innovation network, the node's height from the source node and depth from a sink node are uniquely defined. The novelty of our analysis is that we derive the longest path in the network by taking the criticality using height and depth. The criticality values $c(v)$ of nodes not only shows which nodes lie on longest paths (nodes with zero criticality) but also associated nodes lying on ``near-longest paths'' (small values of criticality). It is easy therefore for us to find other critical innovations which may have been missed by any method based on a single path, c.f. conventional main path analysis~\cite{RN1245,RN1231} which always returns a single path (see \secref{main_path_analysis} for further discussion of main path analysis).

\subsection{Measuring funder activity as a function of time}\label{funder_attribution}
For each node, there is a possibility that grants and funders linked to the research are reported in the associated document. We also look for specific entities in the acknowledgements for increased coverage. On \emph{Dimensions}, some publications, patents, and clinical trials are connected to grant nodes, providing additional details such as the value of the grant\footnote{However, we do not use monetary information in this study as we do not know how grants are split up by several publications, patents, or trials}, associated funder, and funding period. When measuring the effect of funders, we look at nodes and their associated funders, either directly via document-funder edges or indirectly via document-grant-funder edges, at one citation step from the grant attached to that project.

This information on the relationship between documents recorded in our multilayer citation DAG and funders means that every node $n$  
can be associated with a subset of funders\footnote{We could think of this as a new layer forming a bipartite network between document nodes and funder nodes. 
In our work, we only look at simple measures relating to funders, so such a network description of the funding landscape is unnecessary here. 
All the networks we discuss here are multilayer citation networks, no funding or grants are encoded in the network structures we analyse.}. 
We can now look at the properties of those nodes linked to any one funder, such as height and depth, and use various summary statistics, 
such as the median document height, to understand the different roles played by different funders in the innovation process.    

\section{Results}\label{results}

\subsection{Descriptive statistics}

The eight citation networks we study contain a  total of 569,660 nodes and 4,384,502 edges as shown in \tabref{tab:descriptive_stats}.
What is interesting is that the two vaccine platforms which were commercialised after 2020 contain roughly two-fold more publications
and five-fold more patents than the four new vaccines using more established vaccine platforms.
Similarly, the citation networks of the two new vaccine platforms contain more edges than that of the two older platforms.
More remarkable is the fact that the newer vaccine platforms contain twice the proportion of inter-layer edges\footnote{edges that involve two node types, e.g. publication-to-patent edges, as opposed to singular node type, e.g. patent-to-patent edges.} (5.7\%) than that of the older vaccine platforms (2.9\%).
This indicates that more translation is needed to commercialize the new vaccines.

\begin{table}[!ht]
\centering
\small
\caption{\label{tab:descriptive_stats}Basic network properties of the eight vaccine networks.
}

\begin{tabular}{lllllll}
 \hline
\textbf{Vaccine}        && \multicolumn{2}{c}{\textbf{Nodes}}  &&  & \multirow{2}{*}{\textbf{Edges}}\\
\textbf{network}           & \textbf{Publication}   & \textbf{Patent} & \textbf{Clinical trials}  & \textbf{Funders} & \textbf{Grants} &\textbf{} \\
 \hline
Spikevax  & 62,112 & 24,407 &10  & 1,286 & 25,043 & 786,563 \\

Comirnaty & 37,383 & 8,127 & 76  & 1,289 & 18,744 & 340,161  \\

Vaxzeria   &  58,210 & 32,367 & 5  & 1,274 & 21,528 & 648,877 \\

Zabdeno  & 77,359 & 47,145 & 9 & 1,371 & 27,561 & 953,002 \\

Dengvaxia  & 9,986  & 2,681 & 30  & 505 & 2,079 & 81,716 \\

Imvanex & 38,979 & 5,298 & 24  & 922 & 13,129 & 357,320\\

Nuvaxovid & 13,855 &  1,348 &  4  & 924 & 7,547 & 104,182 \\

Shingrix  & 12,987 & 6,993 & 22 & 753 & 6,288 & 174,881 \\
\hline

\end{tabular}

\end{table}

\subsection{Critical innovation path narrates causality in innovation}

Graphically, we follow \equationref{eq:criticality} and plot depth as a function of height.
We plot the data from the mRNA vaccine graph in Figure~\ref{fig:height_v_depth} to illustrate:
the  bottom left node represents regulatory authorization and the top right nodes represent the earliest nodes in the network.
The diagonal represents nodes which lie on at least one longest path of the DAG, our critical innovation paths while numerous sub-critical nodes populate the region above the diagonal.
From the hue of the diagram, we also observe a cluster of non-critical innovations at the region with low height and low depth.
We observe the same pattern in all eight DAGs as shown in \appref{all_figures}.
We set forth to inspect:
 (i) nodes that are critical,
 (ii) the order of critical nodes from oldest to newest,  and
 (iii) nodes that are of lowest criticalities to test the theoretical equivalence between critical schedule path and longest network path.

\begin{figure}
  \centering

    \includegraphics[width=0.7\textwidth]{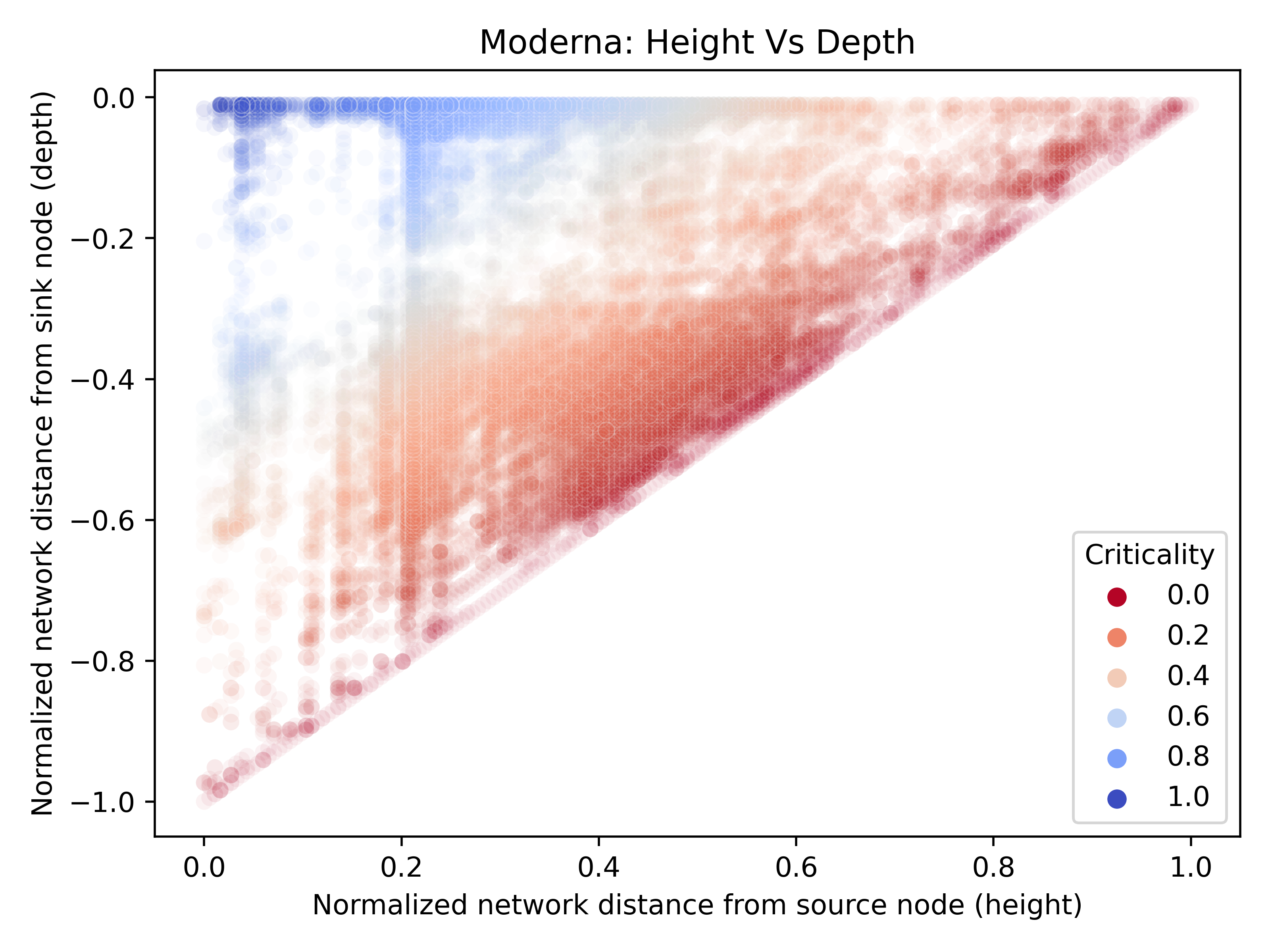}
  \caption{\textbf{Critical innovation path represented by low criticality nodes.}
  	Illustrative data from Moderna COVID mRNA vaccine DAG showing the depth of nodes plotted as a function of height, both normalised by the height of the DAG $\hmax$. The colour represents different values of criticality, again normalised by $\hmax$, with $0.0$ being the most critical and $1.0$ the least critical. The diagonal of the plot are documents lying on one of the longest paths where $c = 0$. The complete set of nodes with assigned criticalities are available at~\cite{figshare2}.
}
  \label{fig:height_v_depth}
\end{figure}

Looking at nodes whose criticality is strictly zero (i.e.\ most critical), in each DAG in \figref{fig:calendar_date_v_height} we see a mix of nodes representing publication, clinical trials, and regulatory authorisation.
If we relax the criticality threshold to consider nodes whose criticality is below 19.5\% of the maximum height, we begin to see many more publications, a few more clinical trials, and a few patents in this relaxed critical path region.
The order of the critical path, moving from high to low height nodes, always proceeds from publications, intertwined with a much smaller number of patents if in the version with the 19.5\% threshold, followed by phase 1, 2, and 3 clinical trials, before ending with the regulatory authorisation.
This sequence generally proceeds from basic research (publications), applied research (patents), development (clinical trials), to commercialization (regulatory authorisations).

A closer look at the critical path nodes unveils a logical sequence of technical progression.
For instance, the Moderna mRNA vaccine DAG has its longest paths formed by early attempts to apply mRNA as influenza vaccine platform~\cite{RN1679,RN1680},
using liposomal delivery system to enhance the expression kinetics of mRNA vaccine~\cite{RN1694,RN1695,RN1682}, methylation to enhance \emph{in vivo} antigen expression~\cite{RN1683},
the phases 1-3 clinical trials of mRNA COVID vaccines (NCT04283461, NCT04796896, NCT04847050, NCT04470427),
and finally the FDA emergency use authorisation letters~\cite{RN1684}  events that the scientific literature is well aware of~\cite{RN1371,RN998}.
In addition, the longest path of the same DAG also identified critical discoveries that may have been overlooked:
mRNA post-transcriptional modification mechanisms~\cite{RN1685,RN1686,RN1687,RN1688,RN1689,RN1690}
and early basic research about the potential to modify RNA to evade detection by toll-like receptors~\cite{RN1691,RN1692,RN1693}.

We are also interested in the identity of non-critical nodes. Having low criticality in a DAG does not mean the innovation is unimportant; it means events are not rate-limiting and can be perhaps parallelised.
Empirically, in the BioNTech/Pfizer COVID vaccine DAG, for example, nearly all reviewed nodes with low criticality are either clinical research about prevalence and risk factors for diseases non-specific to COVID. Low criticality events are likely non-critical to the approval of the vaccine by regulatory agency and, in this example, used to facilitate the design of clinical protocols.

\subsection{Calendar time against height reveals innovation speed }

The order inherent in a DAG gives a natural `clock' for the innovation process captured by our citation network.
It is interesting to see how this network order compares against calendar time.
To see this, we plotted the number of days between a document's date and the final regulatory authorisation against the height of that document in \figref{fig:calendar_date_v_height}. This shows that calendar date is strongly correlated with network order, but the relationship is non-linear.
Broadly speaking, the smallest calendar day at every height are nodes on the longest path
(i.e.\ they are nodes with 0 criticality).
Why do the slopes in \figref{fig:calendar_date_v_height} differ despite both describing the longest path? What is the difference between calendar days and depth? Time and network order in a citation network proceed in the same direction. This is because new documents can only cite older documents and, similarly, innovation is cumulative~\cite{RN1373}. However, their unit of progression differ: time proceeds in evenly spaced seconds or days, whereas network order proceeds in citation steps that are non-equidistant\footnote{An analogy is a clock where the ticks are spaced out differently.}. The latter means that the time gap and frequency of citations can increase and decrease over the course of an innovation lifecycle. This is possibly due to the cumulativeness of knowledge, entries and exits, consumer demand, and innovation policy. The different time gaps are observed across all eight vaccine networks (\figref{fig:all_date_v_height}).

\begin{figure}[!ht]
 \begin{center}
	\begin{tabular}{cc}
			\includegraphics[width=0.49\textwidth]{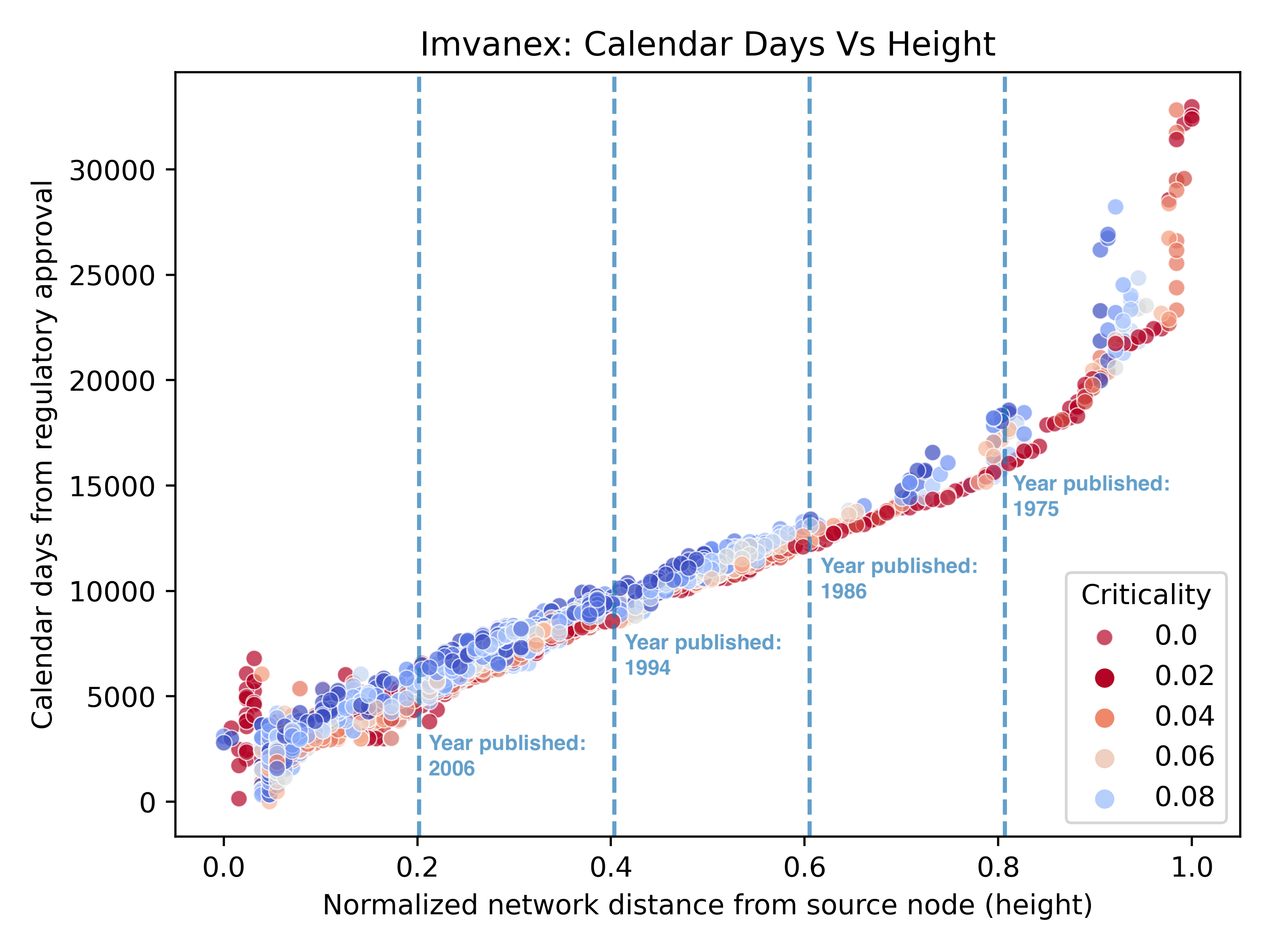}
			&
			\includegraphics[width=0.49\textwidth]{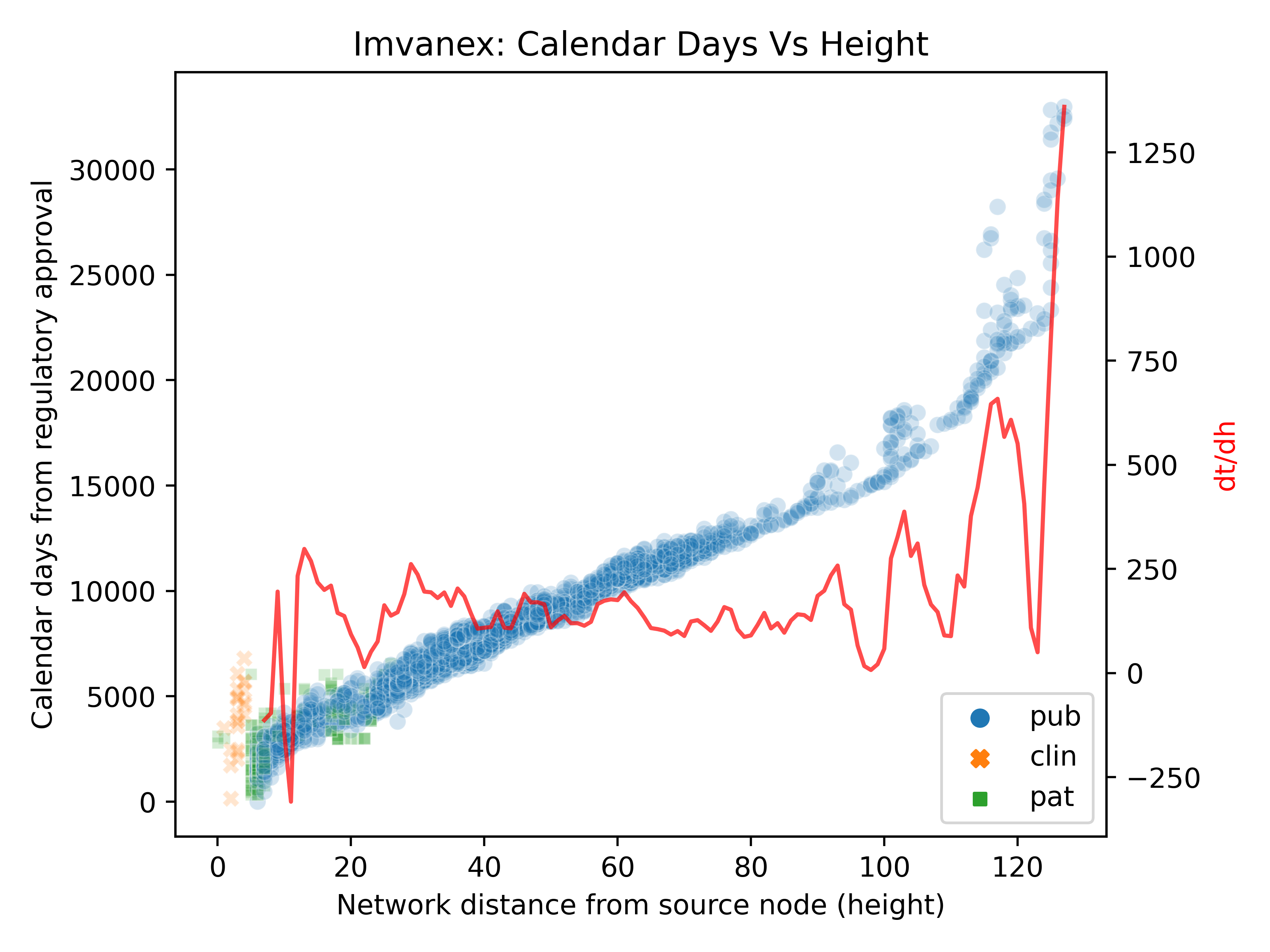}
			\\
			(a) \parbox[t]{0.45\textwidth}{Each point represents one document for nodes with low criticality (less than $0.1$) giving the time difference  in calendar days against height. Colour of the nodes gives the criticality value of the node.}
            &
            (b)\parbox[t]{0.45\textwidth}{Each point represents one document on a longest path (criticality zero) giving the time difference in calendar days against height.   Colour and shape of node indicate the type of document.
            	\\
            The red line shows an estimate of the derivative $d t/d h$ vs.\ $h$ where $t$ is publication date in calendar days and $h$ is height of a node. }
	\end{tabular}
	\caption{\textbf{Time as a function of height for the Imvanex network.}
				Height is normalised by the largest value so $0.0$ is the regulatory approval of the Imvanex vaccine \cite{RN1726} while $1.0$ is for nodes at the largest network distance from regulatory approval node. The time difference between the document publication date and the regulatory approval date is given in calendar days.
	}
		\label{fig:calendar_date_v_height}
 \end{center}
\end{figure}

Visually, \figref{fig:calendar_date_v_height}a suggests the publication date is rising at a constant rate for most critical nodes, 
but the slope increases for documents with a normalised height close to $1.0$.
We have tried to estimate the rate of change of publication date against height in \figref{fig:calendar_date_v_height}b 
by smoothing the data for those nodes on a longest path. 
On small scales, the change in height with calendar time fluctuates as seen in \figref{fig:calendar_date_v_height}b. 
On a larger scale, the trend overall shows that height and time are reasonably correlated. This could show that network order provides an alternative measure of innovation progress compared to calendar time (\secref{patent_prosecution}).

Vaccines have both forward (future) and backward (past) citation, and because the patent process often spans several years. We found that due to interactions between patent applicants and examiners during patent prosecution, the patent document may be updated with new references. We use the initial patent submission date as our patent publication date. A year or two into the patent process, a recent paper can be added to the application, one that was published after the patent was submitted. As a result, a patent may cite forward in time as well as the logically acceptable backwards in time. 
We could use the patent award date as our patent publication date, which would solve the problem with the example just given.  
However, now we run into problems with documents that cite a patent that is not yet approved and is a critical part of the innovation process. 
This again illustrates why our using the height of a node in our citation network can be a more consistent record of the logical order in the innovation process.

Second, the order of node types along the critical path in \figref{fig:calendar_date_v_height}b shows a clear progression\footnote{If we consider non-zero criticality nodes, we start to see more overlaps between node types.} of publications (basic research) to patents (applied research) to clinical trials (development).

Third, the rate of change fluctuates within each node type. The rates of change for critical patents and clinical trials fluctuate between 300 and -300 days, with the negative values indicating the problems of using a single publication date for patents, as these are revised over the several years it takes for a patent to be approved. On the other hand, it takes 50-1300 days for height to increase by one in the early critical journal publications, whereas more recent critical publications, those closer to the regulatory approval, have one year for a height increase of one, indicating an increasing rate of innovation.  A plausible explanation is towards late-stage (low calendar days from regulatory approval), the purpose of innovation activities are better known and focused towards the vaccine; more complete knowledge about and greater participation in the vaccine may have led to increasing frequency of critical innovations. In future studies, \figref{fig:calendar_date_v_height}b
could serve as a measurable interpretation of Utterback and Abernathy's~\cite{RN837} industry lifecycle model, which hypothesises that the ``rates of'' product and process innovation over time are convex and concave respectively; as well as the linear innovation model~\cite{RN958}, which prescribes that basic research, applied research, development, and production be carried out by different sets of actors one stage after another.

While the primary aim of this section is to propose new methods to measure the order and rate of innovation, we cannot help but observe some interesting differences across the vaccine data, see \figref{fig:all_day_per_height_v_height} in the Appendix: it always takes less time to make critical progress at later innovation phases. Future studies may compare data across sectors to reposition science policy's role in accelerating innovation.

\subsection{Division of innovation labour is quantifiable via network height}

Using the findings above, we demonstrate another real-world utility of innovation order. We portray the frequency of innovator funding as a function of network height to discern the innovation phases entities are supporting (\secref{funder_attribution} for methodological details. \Figref{fig:funder_network_order} shows illustrative data from the Novavax COVID protein subunits vaccine where we show the top five funders by number of nodes funded, three mission-oriented innovation agencies\footnote{Mission-oriented innovation agencies are entities that specifically fund frontier innovations to attain specific goals~\cite{RN960}. These entities are hypothesised to behave differently to diffusion-oriented agencies~\cite{RN445, RN1}, but this difference was awaiting quantification.}, and the top five pharmaceuticals by number of nodes funded. In the same table across the eight vaccines, we observe that the largest funders tend to occupy lower height, or late-stage; pharmaceuticals fund mostly late-stage documents; whereas mission-oriented innovation agencies are in the early- to mid-stage. Looking at calendar time, the median days of mission-oriented agencies and pharmaceuticals (2-19 years) are much closer to the regulatory approval than large funders are (10-27 years). This may indicate the strategies and division of labour among innovation entities: Larger funders fund basic and risk-averse research, mission-oriented agencies initiate high-risk research and translate discoveries to other funders, and pharmaceuticals playing their obvious commercialization role at the late-stage. However, we do not know whether this division of labour is deliberate or a result of their funding agenda\footnote{For instance, the mission statement of the National Institutes of Health is to ``to seek fundamental knowledge about the nature and behaviour of living systems...'' whereas that of the Biomedical Advanced Research and Development Authority is to ``develop and procure medical countermeasures...''}.

\begin{figure}
  \centering
    \includegraphics[width=1\textwidth]{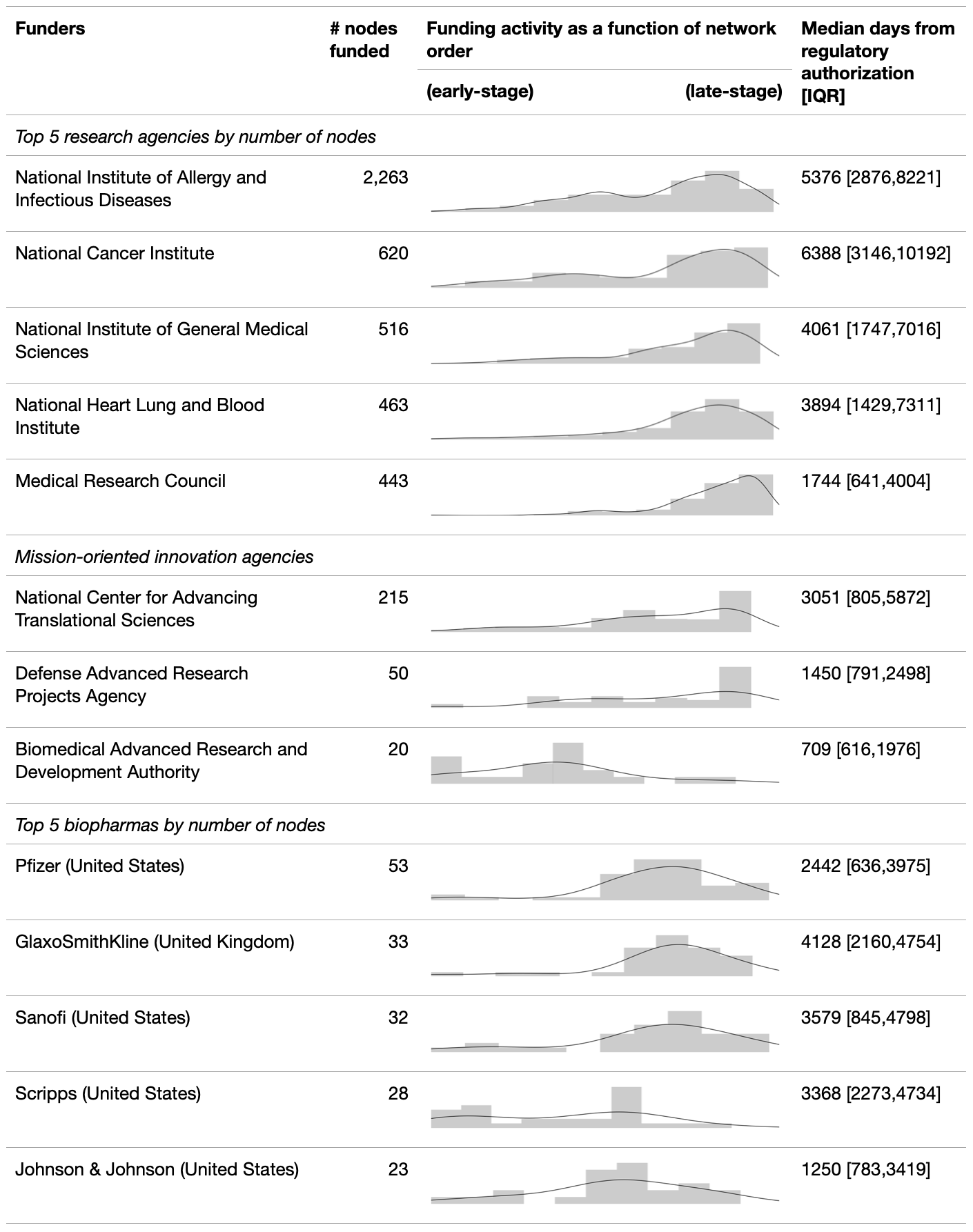}
  \caption{\textbf{Funding activities as a function of network height.} Illustrative data from Novavax vaccine. Funding activity is derived from kernel density estimations of heights of nodes associated with the funders; all entries are normalised on the same scale. Days are the  calendar days between the regulatory approval and documents funded by the funders. Some funders were involved in the manufacturing and procurement of vaccines, but these data are not available in the network; it is therefore likely that their actual funding activity curves are more skewed to the right.}
  \label{fig:funder_network_order}
\end{figure}

\subsection{Criticality of innovation funders is quantifiable via longest path}

We take the definition of ``critical'' from operations research to mean an event that delays the global project schedule when locally delayed (see \secref{discussion} for details). With the criticality information from the DAG, we also compute the criticality of funders, measured by the number of critical nodes funded by a particular entity divided by the total number of nodes funded by the entity in the DAG. We compare this performance metric with the citations received by documents the funder funds within the network. 

\Tabref{tab:funder_criticality} shows that funders who fund research that, in turn, leads to large number of citations are not necessarily the funders of critical research. Entities that fund a high proportion of critical nodes are avid removers of innovation bottlenecks, who, in turn, allow progression along the technological trajectory. One caveat is that we do not know whether these critical funders deliberately removed innovation hurdles or unintentionally produced innovations that were applied to advance a technology by other entities. Another caveat is that some funders specialise in advancing basic science without thought of practical ends, while others specialise in translating innovation.

\begin{table}[!ht]
\centering
\small
\caption{\label{tab:funder_criticality}\textbf{Funding performance measured by critical path hit rate.}
Illustrative data from Shingrix shingles subunits vaccine by GlaxoSmithKline. Critical path hit rate is the number of nodes funded on the critical innovation path divided by the number of nodes funded in the entire vaccine network (\secref{critical_path_hit_rate} for details). Total citation counts the number of in-degrees of nodes funded by a funder within the DAG; citations can be by publications, patents, clinical trials, or regulatory authroisation within the DAG. Defense-related and commercial funders may not disclose certain critical innovations. The true critical path hit rate of these funders are likely to be higher than observed.
}
\begin{tabular}{llllll}
 \hline
Funders on longest path & Funded   &  Citations       & Citations  & Critical & Critical \\
                         &nodes in &  from   & per funded & path     & path hit  \\
                         & entire   &  funded & node        & nodes     & rate (\%) \\
                         & network  & nodes         &           &           &           \\

 \hline
 \emph{Top 5 by critical path hit rate}               &  &  &  &  \\ \hline
  Defense Advanced Research  & 38 & 315 & 8  & 5  & 13.16 \\
 Projects Agency           &  &  &  &  &\\ \hline
Swedish Research Council     &  59  & 246   & 4 & 4  & 6.78 \\\hline
 GlaxoSmithKline (UK)      & 61 & 330  & 5 & 3  & 4.92 \\ \hline
 United States Public Health      & 217  & 1,139 & 5  & 6  & 2.76 \\
Service   &  &  &  & &  \\ \hline
   National Institute of Allergy and      & 2,803 & 18,138 & 6  & 67  & 2.39 \\
  Infectious Diseases   &  &  &  &  & \\ \hline

  \emph{Top 5 by citations in DAG}            &   &  &  &  &  \\ \hline
   National Institute of Allergy and      & 2,803 & 18,138 & 6  & 67  & 2.39 \\
  Infectious Diseases   &  &  &  &  & \\ \hline
   National Cancer Institute      & 1,366 & 9,099 & 7  & 23  & 1.68 \\\hline
   National Institute of General      & 753 & 5,208  & 7 & 11  & 1.46 \\
  Medical Sciences   &  &  &  &  & \\ \hline
   National Heart Lung and     & 451  & 3,193 & 7  & 8  & 1.77 \\
  Blood Institute  &   &  &  &  & \\ \hline
   National Institute of Diabetes and       & 382 & 2,555  & 7 & 0 & 0.00 \\
  Digestive and Kidney Diseases  &   &  &  &  \\ \hline

\end{tabular}

\end{table}

\subsection{Validation} 

We validate the critical path by checking for documents that also appear in literature reviews published by subject-matter experts\footnote{This is unlike most main path analyses (\secref{main_path_analysis}) that do not validate their results or only validate the keywords they use to generate the citation network.}. 
\Figref{fig:validation} shows the height versus depth diagrams (as described in \figref{fig:height_v_depth}) for the Moderna and Pfizer/BioNTech vaccines but with additional annotations showing 352 documents found in three literature reviews on mRNA vaccines~\cite{RN1738,RN998,RN1738}. 
We found that the critical path (the hypotenuses) are heavily populated by documents referenced in the literature reviews. 
\figref{fig:boxplots} shows that documents found both in the Pfizer/BioNTech vaccine network and literature review have lower median criticalities of 0.142 [0.0885, 0.0230] (where we give 25\% and 75\% in brackets) 
compared to documents found only in the former where the median is 0.504 [0.274, 0.788].
Similarly the figures for the Moderna network are 0.0710 [0.0328, 0.123] versus 0.0333 [0.169,0.574]. 
Kolmogorov-Smirnov tests indicate that criticalities of documents found in literature reviews is significantly different to that of documents not found in literature reviews 
(the p-values are always much less than $0.0001$),
validating the use of the critical path method to identify important innovation events.

\begin{figure}
     \centering
     \begin{subfigure}[b]{0.9\textwidth}
         \centering
         \includegraphics[width=\textwidth]{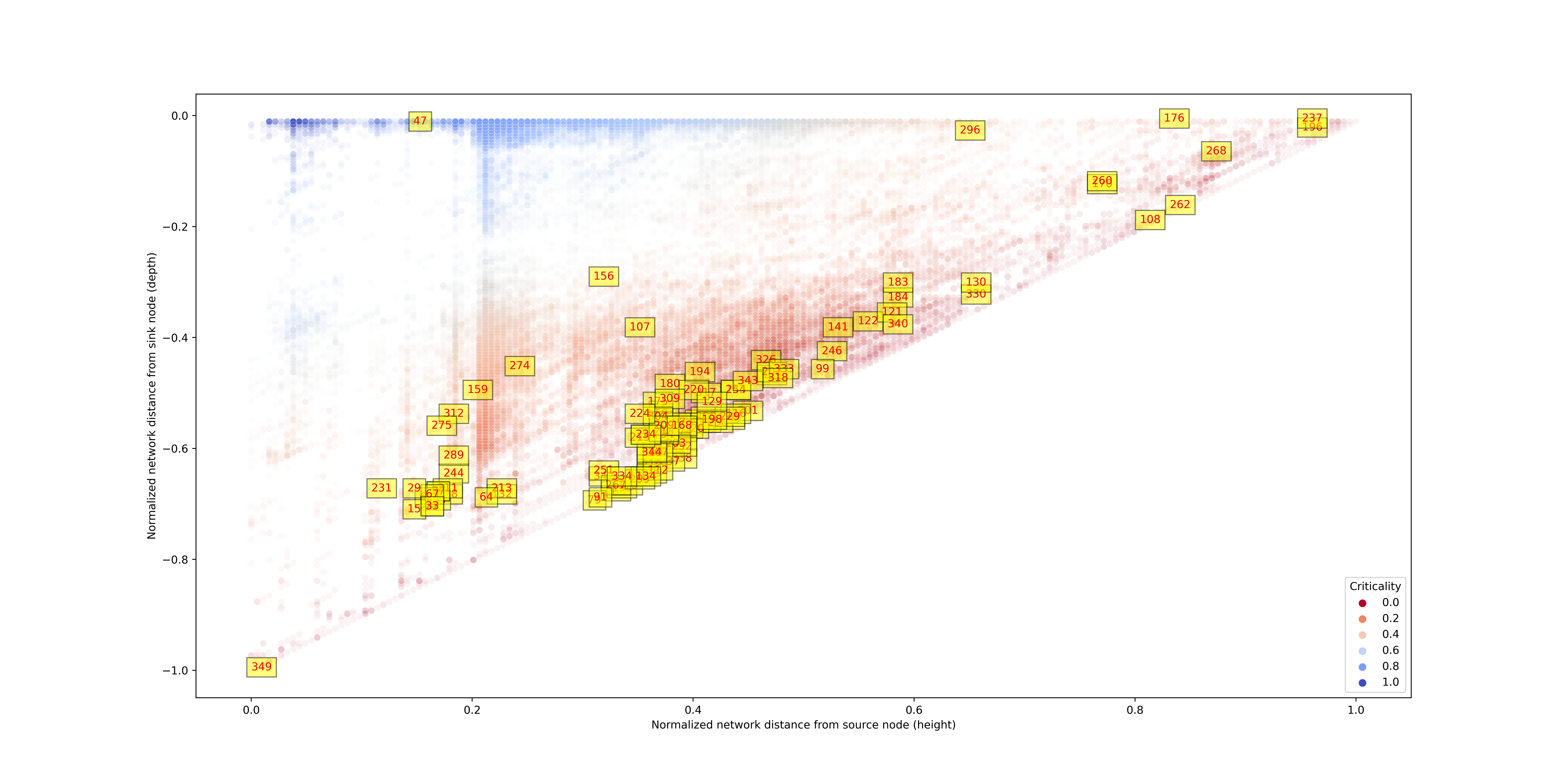}
         \caption{Moderna Spikevax}
     \end{subfigure}
     \hfill
     \begin{subfigure}[b]{0.9\textwidth}
         \centering
         \includegraphics[width=\textwidth]{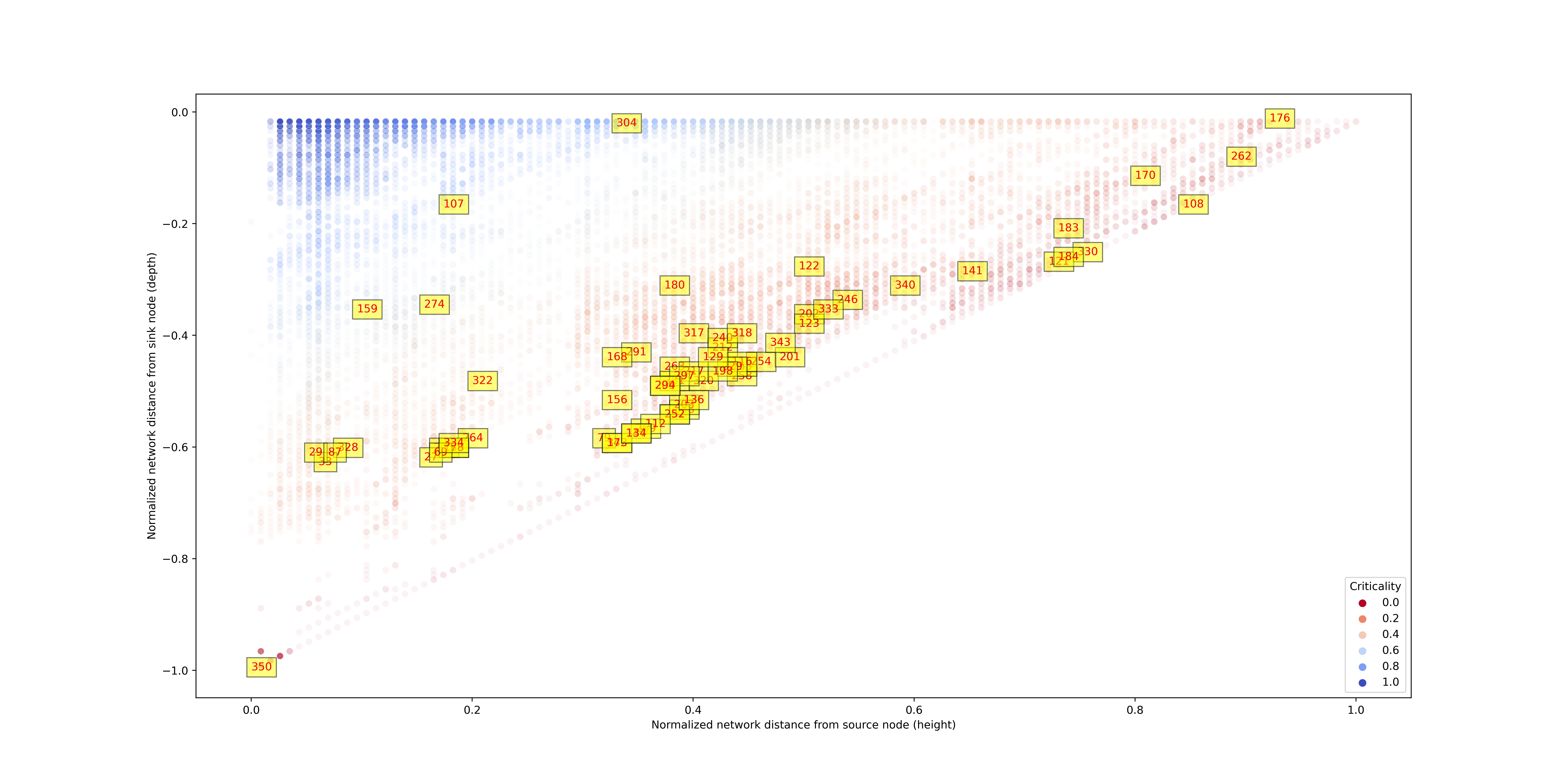}
         \caption{Pfizer/BioNTech Comirnaty}
     \end{subfigure}
     \hfill
 		\caption{\textbf{Critical innovation path of mRNA vaccine platform} represented by zero-criticality nodes. Depth: the maximum network distance from regulatory authorization to any node; Height: the maximum network distance from the earliest innovation events. A low depth or high height represents proximity to therapeutic approval in the citation network and vice versa; critical innovation path (red nodes): composed of nodes whose height ≈ depth meaning they are on the longest path of the network, approximating the importance of a node to the progression of the technology. A low distance from the longest path may indicate bottleneck to technological progress being overcome; a high distance may indicate the innovation event can happen at any time without obstructing technological progress. Documents in yellow are found in literature reviews by~\cite{RN1371,RN998,RN1738}; their presence validates our method. List of labelled documents are available at~\cite{figshare1}.}
		\label{fig:validation}    
\end{figure}

\begin{figure}
     \centering
     \begin{subfigure}[b]{0.45\textwidth}
         \centering
         \includegraphics[width=\textwidth]{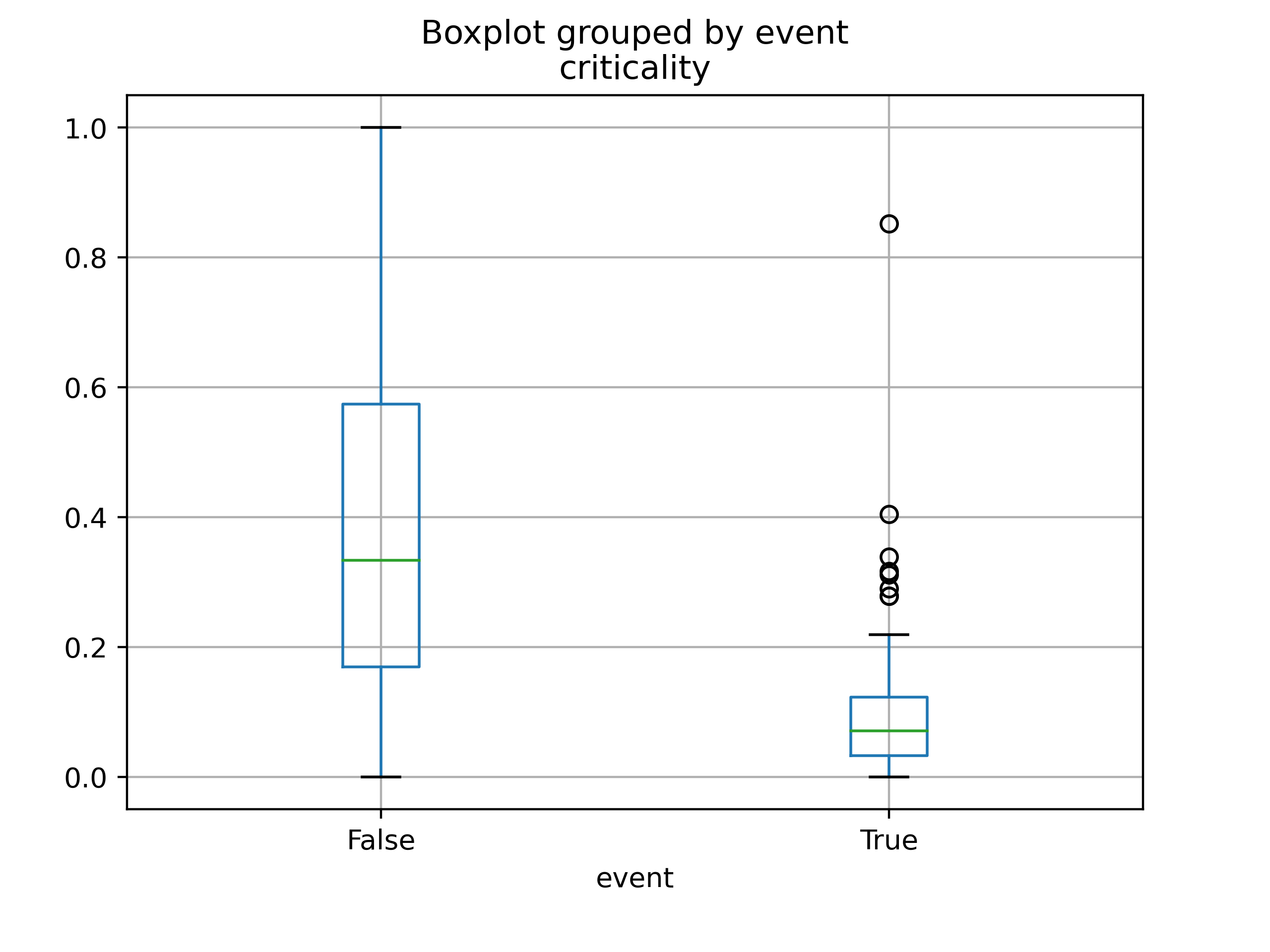}
         \caption{Moderna Spikevax}
     \end{subfigure}
     \hfill
     \begin{subfigure}[b]{0.45\textwidth}
         \centering
         \includegraphics[width=\textwidth]{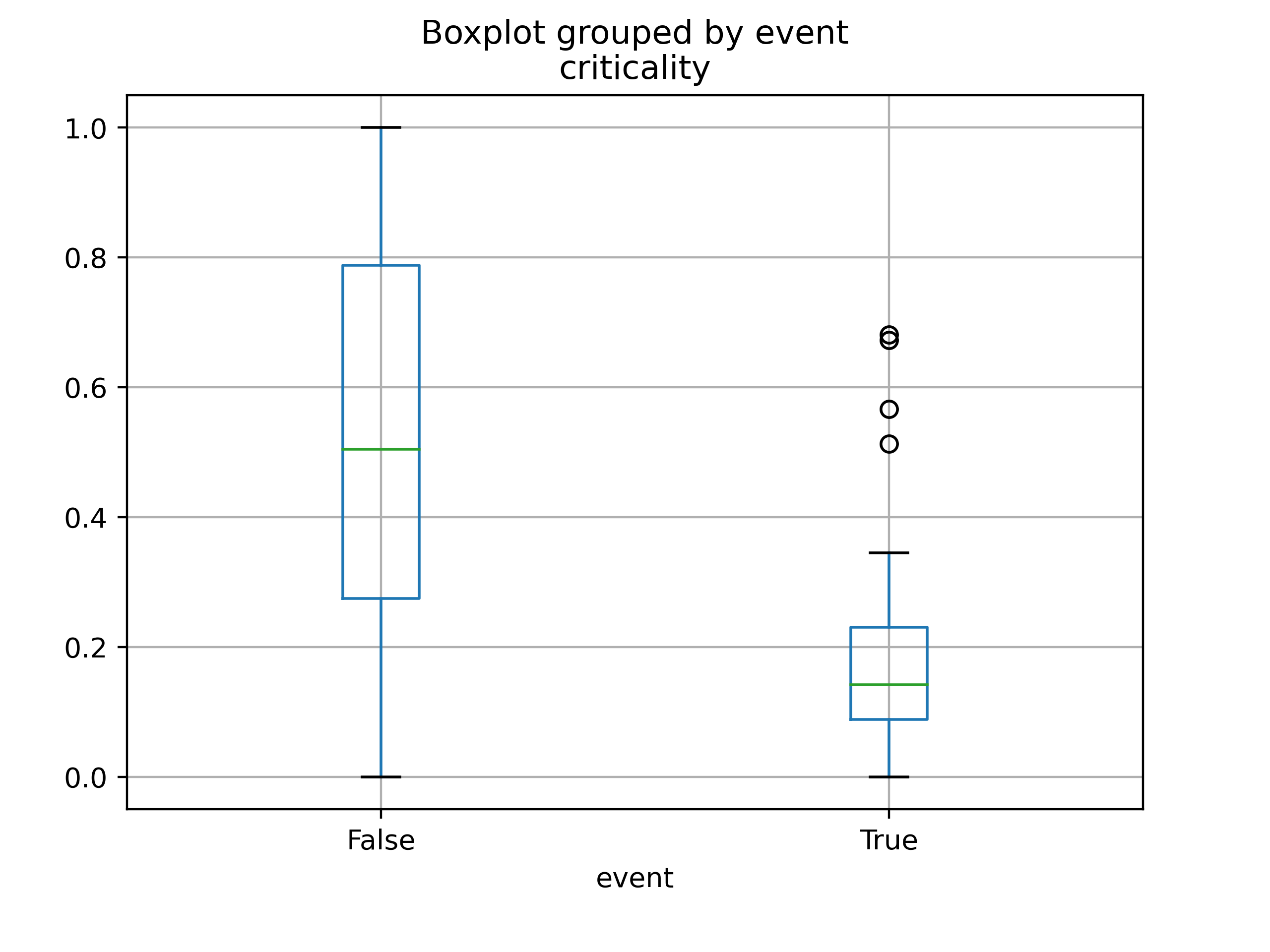}
         \caption{Pfizer/BioNTech Comirnaty}
     \end{subfigure}
     \hfill
 		\caption{\textbf{Criticality of documents cited by literature reviews.} Event: inclusion of network documents in literature review articles identified \emph{ex post}; true event: all documents within the network and identified in literature reviews; false events: all documents within the network and \emph{not} identified in literature reviews. The box extends from the Q1 to Q3 quartile values of the data, with a line at the median (Q2). The whiskers extend from the edges of box to show the range of the data. 
 		}
		\label{fig:boxplots}    
\end{figure}

\clearpage

\section{Discussion}\label{discussion}

One of the most useful measures in network science is the length of the \emph{shortest} path between two nodes as this is used in numerous situations as the distance between two nodes\footnote{The length of the shortest path between two nodes in a network satisfies all the criteria is what is formally defined as a `distance' function in mathematics. Indeed, it also satisfies the mathematical criteria to be a `metric' and the shortest paths are therefore `geodesics'.}. In many cases, these shortest paths are of practical relevance as we often look for the quickest, shortest route between two objects (nodes) in a network. For instance, in a social network, where people are nodes and edges are the connections between friends, the shortest path often represents the quickest way to get information between people. It is the basis for the popular idea of the six degrees of separation \cite{C21}. As a result, this shortest-path measure of distance between nodes is the basis for many other fundamental measures in network science, such as centrality measures \cite{C21}.

In most networks, the longest path between two nodes has little practical relevance. For instance, in a social network, the longest path between two people will usually be a path visiting almost everyone in the network once, and we can think of no practical use for the distance of such a path in that context\footnote{Paths which visit every node are sometimes of interest, e.g.\ in the classic travelling salesman problem.  However, in such cases the length of the paths is defined differently, in terms of the length of time to travel the network, the sums of the travel times associated with each edge. In such cases, there are many paths passing through all nodes, and the problem is still to find the shortest of these paths when distance is measured in terms of the sum of the time taken to travel each link in the path. So, this is still searching for the shortest paths out of a set of options, but using a different measure for distance from the one we are discussing at this point in the text.}.
However, the order encoded in a DAG means that the longest path between two nodes in such networks is not especially long; the length of the longest path in a DAG is rarely similar to the number of edges, as it is in the social network example above. So, now the question is, which path should we use when analysing the flow of information in our multilayer innovation citation networks, the shortest or the longest paths?

At a qualitative level, we can see that the longest path is likely to be more interesting for citation networks. The oldest papers we cite in our bibliography here are over sixty years old. In preparing this paper, it is likely that we did not learn much of direct relevance to the current paper by reading such papers. Such old papers are classics, but their influence is indirect, felt in our present work via more recent publications which apply these classic concepts in a modern context with modern terminology and notation. Conversely, it seems likely that the most recent papers give a much more powerful stimulus to authors.

Most papers in the bibliography of a journal article are recent, and the time difference between the citing document and the documents in one reference list decreases exponentially when the age difference is a couple of years or more, \cite{R04c,MRJU17,GS17,G19}. Further support comes from studies which show that the text for over 70\% of references in the bibliography of a journal article may have been copied from the publications of intermediate age suggesting that a large fraction of these older articles are not read when a new article \cite{SR03,SR05a,GAE14,CGLE14}. That is consistent with the idea that the ideas in these older texts were learnt from intermediary texts. In the case of innovation, the longest path embeds a chronological sequence of key technological advancements. The longest paths in a DAG embed the chronological and causal sequence of key scientific and technological advances contributing to a technological outcome.

The use of longest paths, not shortest paths, to analyse DAGs is common in other areas. The critical path method is used to schedule jobs in a project, such as the Manhattan Project~\cite{KW59,RN1655,RN1654} or independent parts of a numerical simulation running on multiple processors. In this method, the DAG captures the dependency (the edges) of one job (one node) on another. An innovation network path length, considered as a scheduling DAG, is the sum of the time needed to complete each job on the path (so not simply the number of edges in the path). The aim is to find the ``critical path'' which is the path that sets the least time needed to complete the project. The critical path is set by the longest path in the scheduling DAG\footnote{The critical path is of greater relevance to mission-oriented innovation programs than diffusion-oriented ones~\cite{RN960}. Innovation missions involve organizing multiple innovation projects and programs, all with intermediate outputs, to attain specific goals and would benefit from finding the minimum time to attain an innovation goal.}.

Badiru~\cite{RN1654} defines three important aspects of the critical path method that we test using larger and more complex innovation datasets:
\begin{enumerate}
	\item An activity is considered \emph{critical} if changing the start or finish time of the activity will affect the overall project schedule
	\item The series of critical activities connecting the start and end points of a project is known as the \emph{critical path}. Logically, the critical path ``turns out to be the \emph{longest path} in [a] network''.
	\item A delay in any critical activity delays the entire project. Therefore, the ``sum of durations for critical activities represents the \emph{shortest possible time} to complete the project''
\end{enumerate}
This means:
\begin{center}
	\emph{Critical schedule path $\equiv$  longest network path $\equiv$ shortest time path}
\end{center}

There is also a more formal basis for the use of longest paths in DAGs. The order in DAGs is often derived from the flow of time, so it makes most sense to look at embedding a DAG in space-time not simply in a space. Technically, DAGs are best embedded in a Lorentzian space-time, such as Minkowski space used in special relativity, rather than a Riemannian space, such as Euclidean space used for school-level geometry and in most traditional data science methods. For simple models, it is possible to show that the longest path in the DAG is the best approximation to the geodesic in the space-time\footnote{Analytical results are derived in Bollob\'{a}s and Brightwell \cite{BB91} and earlier papers cited therein. These results are applied in Brightwell and Gregory \cite{BG91}. Numerical results are in Rideout and Wallden \cite{RW09}.}
\cite{BB91,BG91,RW09}.
Geodesics represent paths of ``least resistance'', the path a freely moving particle would follow in the space. So, by analogy, it makes sense to think of longest paths in our innovation citation networks as representing the easiest route for knowledge to flow between two documents (nodes) in our citation network. This analogy has been successfully tested in the context of citation networks to show that geometric concepts like the dimension of a DAG can be defined using Minkowski space-time \cite{CE16}  or how to embed a DAG in Minkowski space-time \cite{CE17}.

Unsurprisingly, the study of innovation through citation networks is not a new subject.
The best known approach is ``main path analysis'' proposed by Hummon and Doreian~\cite{RN997} and, with variations, implemented in some popular analysis packages \cite{RN1245} (see \secref{main_path_analysis} for details).
Our critique of the main path analysis method is that there is no formal basis for the method, unlike the work on the relationship between the longest path and geodesics in space-time models. The weights used in main path analysis are formed by looking at all paths from a set of initial nodes (such as the first publications in the data set, sometimes all publications) to a set of dfinal destination nodes, each path equally weighted. However, we know some publications listed in a bibliography are more important than others, so it is unclear why giving all paths are equal weight is a good way to capture the flow of innovation. While popularity of main path analysis is one measure of a successful method, this popularity could be due to other factors such as the easy access to numerical implementations in widely used packages such as \texttt{pajek} \cite{RN1245}.
An alternative view of main path analysis can be found when it fails to identify the backbone of the technological trajectory of the semiconductor industry~\cite{RN1092}. Our interpretation is that the method may fail due to its reliance on a single path rather than the inability to look at good paths for innovation.

\section{Conclusion}\label{conclusion}

When studying innovation, using a citation network DAG as opposed to other econometric approaches allows us to causally trace every intermediate step between a complete set of innovation inputs and an innovation outcome. Rather than regressing a limited set of variables, a citation network is a DAG, so this encodes the geometry of innovation order. Theoretically, the longest paths in a DAG represent the critical causal routes which show the bottlenecks that constrain innovation; it is the most complex route to achieve because it is composed of the largest amount of linear components which cannot be parallelised. We hypothesise that the longest paths in these multiplayer citation networks where order matters are the critical paths of innovation.

To verify the usefulness of the longest path in describing critical innovations, we prototyped two methods: (i) a reproducible way to construct multilayer citation network thus representing basic research (publications), applied research (patents), development (clinical trials), and commercialisation (regulatory authorisations), and (ii) a simple way to quantify a document's closeness to the longest path. These methods allow us to analyse events that turn out to be in the longest paths of eight vaccine citation networks. We were able to observe how basic discoveries in the lab accumulated and got absorbed by clinical researchers, who used these phenomenological observations to hypothesise what could work for a vaccine. Once a prototype vaccine product was available, the technological community further applied basic discoveries to optimise the product, which was eventually validated through clinical trials and approved for marketing.

As seen in \figref{fig:multilayer_framework}, we can define the types of edge, in terms of the labels of the nodes at the end of the edge. This method of defining edges is useful in understanding innovation ``translation'' will be explored in a separate study. The proposed method to quantify criticality of innovation events empowers scientific understandings of technological change, particularly in: (1) comparing criticality patterns across industries and time spans, (2) comparing criticality patterns of funders, (3) measuring linearlity of innovation phases, (4) informing mission-oriented innovation planning, (5) attributing technological outcomes to events and entities and (6) forecasting new innovation outcomes based on intermediate outputs.

By assembling a list of innovation events in the observed technological past, we inform the ingredients needed in similar future technological programs. Using this innovation ruler, we demonstrated the possibility to measure the rate of innovation, division of innovation labour, and criticality of innovation funders.

\bibliography{orderInInnovation}

\clearpage




\begin{center}
\Large\textbf{Appendices}
\end{center}
\appendix
\renewcommand{\thesection}{\Alph{section}}
\renewcommand{\theequation}{\thesection\arabic{equation}}
\renewcommand{\thefigure}{\thesection\arabic{figure}}
\renewcommand{\thetable}{\thesection\arabic{table}}
\numberwithin{equation}{section}
\numberwithin{figure}{section}
\numberwithin{table}{section}
\setcounter{section}{0}
\renewcommand{\theHsection}{\Alph{section}}
\renewcommand{\theHequation}{\thesection\arabic{equation}}
\renewcommand{\theHfigure}{\thesection\arabic{figure}}
\renewcommand{\theHtable}{\thesection\arabic{table}}

\section{Formal Definitions}\label{a:defn}

\noindent Here we give the definitions of the network properties used in the main text using a formal language. The main aim is to arrive at a formal proof that nodes with zero criticality lie on a longest path in the DAG, \lemmaref{l:zerociticality}.

\begin{mydefs}{Network/Graph, Nodes, Edges}

A \tsedef{network}  (here synonymous with \tsedef{graph} ) $\Gcal = (\Vcal,\Ecal)$ is a set of \tsedef{nodes} (also known as vertices) $\Vcal$ and a set of \tsedef{edges} $\Ecal \subseteq \Vcal \times \Vcal$. An edge is denoted $(u,v)$ where $u,v \in \Vcal$.
\end{mydefs}

\begin{mydefs}{Layers}

A \tsedef{multilayer network} is a graph where nodes are connected by different types of edges (for example, see section 4.2 in \cite{C21}). In our case, the \tsedef{layers} are defined by a partition of the nodes into different types $\Vcal_\alpha$ so each layer contains nodes of only one type and each node exists on just one layer (a particular type of multilayer network). In our setting, we have an unweighted graph with four main types (Publication, Patent, Clinical trials, Regulatory approval) of nodes. The label of the node at each end of an edge, say $\alpha$ and $\beta$, leads to a partition of the edges into different types $\Ecal_{\alpha\beta}$ (the definition of a multilayer network), that is $\Ecal_{\alpha\beta} \subseteq \Vcal_{\alpha} \times \Vcal_{\beta}$.

\end{mydefs}

\begin{mydefs}{Directed Graphs and Edges}
	
A \tsedef{directed network} has \tsedef{directed edges} where the edge $(u,v)$ is distinct from the edge $(v,u)$.
\end{mydefs}

\begin{mydefs}{Predecessors and In-Degree}
	
	The \tsedef{predecessors} $N^-(v)$ of a node $v$ is the set of nodes which are connected by an edge \emph{to} $v$, so $\Ncal^-(v) = \{u | (u,v)\in \Ecal\}$
	
	The \tsedef{in-degree} $\kin_v$ of a node $v$ is the number of incoming edges, $\kin_v=|\Ncal^-(v)|$.
\end{mydefs}

\begin{mydefs}{Successors and Out-Degree}
	
	The \tsedef{successors} $N^+(v)$ of a node $v$ is the set of nodes which are connected by an edge \emph{from} $v$, so $\Ncal^+(v) = \{w | (v,w)\in \Ecal\}$.

	The \tsedef{out-degree} $\kout_v$ of a node $v$ is the number of outgoing edges, so $\kout_v=|\Ncal^+(v)|$.
\end{mydefs}

\begin{mydefs}{Walk, Path, Cycles and Path Length}

A \tsedef{walk} from node $u$ to node $v$, denoted $\Pcal(u,v)$, is a sequence of nodes starting at $u$ and finishing with $v$ which are connected sequentially by edges
\begin{equation}
 \Pcal(u,v) = \{ w_n | n=0,\ldots, \ell, w_n \in \Vcal, (w_n,w_{n+1}) \in \Ecal \mbox{ for } 0\leq n<\ell , w_0=u, w_n=v \} \, .
 \label{e:walkdef}
\end{equation}

The \tsedef{length of a walk} is the number of nodes minus one $\ell=|\Pcal|-1$.

A \tsedef{path} is a walk where all the nodes are distinct, so $w_n =w_m$ iff $m=n$ in \eqref{e:walkdef}.

A \tsedef{cycle} is a walk where the first and last node in are identical, $w_0=w_\ell$ in \eqref{e:walkdef}, but all other nodes are distinct.

The \tsedef{concatenation of two walks} is where a walk $\Pcal(u,v)$ from $u$ to $v$ is combined with a walk $\Pcal(v,w)$ from $v$ to $w$ to produce a walk $\Pcal(u,w)$ from $u$ to $w$ via $v$. We simply extend the sequence of nodes in the first walk $\Pcal(u,v)$ with the nodes in the second walk $\Pcal(v,w)$ keeping the nodes in the same order and we only include the common end/start node $v$ once. We denote this as $\Pcal(u,w)=\Pcal(u,v) \cdot \Pcal(v,w)$ and formally we may define this as
\begin{eqnarray}
 \Pcal(u,v) \cdot \Pcal(v,w)
  &=&
  \{ x_n |  x_n       = w_n \in \Pcal(u,v), \;  n=0, \ldots, L_1;
  \nonumber \\
  && \qquad x_{n+L_1} = w_n \in \Pcal(v,w), \;  n=0, \ldots, L_2 \} \, ,
  \nonumber \\
  &&
  \quad L_1=|\Pcal(u,v) |-1 \, ,
  \;\;\;
  L_2=|\Pcal(v,w) |-1       \, .
 \label{e:walkconcatdef}
\end{eqnarray}
\end{mydefs}

\begin{mydefs}{Directed Acyclic Graph --- DAG, Sources and Sinks}

	A \tsedef{Directed Acyclic Graph}, a \tsedef{DAG}, $\Dcal = (\Vcal,\Ecal)$ is a network/graph with directed edges containing no cycles.
	
	A node $s$ with no incoming edges, that is $\nexists \, u \in \Vcal \; \text{s.t.} \; (u,s) \in \Ecal$, is known as a \tsedef{source node}

	A node $t$ with no outgoing edges, that is $\nexists \, w \in \Vcal \; \text{s.t.} \; (t,w) \in \Ecal$, is known as a \tsedef{sink node}
\end{mydefs}

\begin{mydef}{The Partial order of a DAG}

A DAG always defines a unique \tsedef{partial order} on the set of nodes $\Vcal$ in which we have a binary relation between two nodes, denoted $u \prec v$, if and only if there is a path from $u$ to $v$.
\begin{equation}
  u \prec v \; \Leftrightarrow \; \exists \; \Pcal(u,v) \; \; \forall \; u, v \in \Vcal
\end{equation}
\end{mydef}
\noindent Note that under our definition of a path, every node is part of a trivial path of length zero $\Pcal(v,v) = \{ v \}$ so this  binary relation is reflexive, i.e.\ $v \prec v$, as required. Transitivity of a partial order comes from the fact that a concatenation of paths produces a path.

\begin{mydef}{Distance between nodes}

The distance from node $u$ to node $v$ in a DAG is defined to be the length of the longest path from $u$ to $v$, while the distance is left undefined if there is no such path.
\beq
 d(u,v) = \max (|\Pcal(u,v)|-1 | u \prec v ) \,.
 \label{e:lpdist}
\eeq
\end{mydef}
\noindent This definition of a `distance' in \eqref{e:lpdist} is not sufficient for this function $d$ to be a distance in the formal sense used in mathematics because (a) many pairs nodes in a DAG may not be connected and this distance is undefined for such pairs and (b) this definition is not symmetric since if $d(u,v)$ is defined then $d(v,u)$ is not defined unless $u=v$. Note that both of these issues are easy to fix if a distance in the formal mathematical sense is required\footnote{For instance, consider $d'(u,v)$ where $d'(u,v)=d(u,v)$ of \eqref{e:lpdist} when $u \prec v$. We then define $d'(v,u)=d(u,v)$ when $u \prec v$. Finally we set $d(u,v)=\infty$ whenever $u \not\prec v$ and $v \not\prec u$. This function $d'$ is a formal mathematical distance function.}.

Note that there are often many paths between two nodes $u$ and $v$ with the same length. This includes the longest paths, those with length equal to $d(u,v)$.

Also note that many other distances functions can be defined for pairs of nodes on a DAG such as the length of the shortest path between two nodes. We will only use the distance defined here in terms of the length of longest path.

\begin{mydef}{Reverse Triangle Identity}

The distance between two nodes satisfies the \tsedef{reverse triangle identity}
\beq
 d(u,w) \geq  d(u,v) + d(v,w) \quad \text{if} \;\; u \prec v, v \prec w \, .
 \label{e:revtriid}
\eeq
\end{mydef}
\noindent The reverse triangle identity has the opposite inequality from that found in the usual triangle inequality. The other difference is this identity does not apply for other permutations of the three sites $u$, $v$ and $w$.

Transitivity of the partial order guarantees that $u \prec w$. This means that we can consider one of the longest paths from $u$ to $v$, say $\Pcal_L(u,v)$, whose length gives the value to $d(u,v)$. Likewise, we know we have a path $\Pcal_L(v,w)$ which is a longest path from $v$ to $w$ of length $d(v,w)$. If we concatenate these two paths, we produce a path $\Pcal_L(u,v)\cdot \Pcal_L(v,w)$ from $u$ to $w$ which is of length $d(u,v) + d(v,w)$. The concatenated path is a path from $u$ to $w$, but it need not be a longest path between these two, even though it is made by combining two longest paths. Since the distance \eqref{e:lpdist} is set by the largest path length, it means the length of the concatenated path sets a lower bound on the distance from $u$ to $w$. By definition, if there is another path from $u$ to $w$ and if such a path is longer than the concatenated path $\Pcal_L(u,v)\cdot \Pcal_L(v,w)$, then this alternative path will set the distance $d(u,w)$ and that will be larger than $d(u,v) + d(v,w)$. Hence, the simple properties of paths and the maximum function in \eqref{e:lpdist} give us our reverse triangle identity.

\begin{mydefs}{Height, Depth and Criticality}

The \tsedef{height} $h(v)$ of a node $v$ in a DAG $\Dcal$ is the length of the longest path to that node from \emph{any} node. 
\begin{equation}
 h(v) = \max (|\Pcal(u,v)|-1 | u \prec v, u \in \Vcal) \,.
\end{equation}

The \tsedef{depth} $d(v)$ of a node $v$ in a DAG $\Dcal$ is the length of the longest path from that node $v$ to \emph{any} node.
\begin{equation}
 d(v) = \max (|\Pcal(v,w)|-1 | v \prec w, w \in \Vcal) \,.
\end{equation}

The \tsedef{height} $\hmax$ of a DAG $\Dcal$ is the largest height of any node
\begin{equation}
 \hmax = \max (h(v) | v \in \Vcal ) \,.
 \label{e:hmaxdef}
\end{equation}

The \tsedef{criticality} $c(v)$ of a node $v$ in a DAG $\Dcal$ is the height of the DAG minus the height and minus the depth of that node
\begin{equation}
 c(v) = \hmax - h(v) - d(v) \,.
 \label{e:critdef}
\end{equation}

\end{mydefs}

\noindent The terminology `height' and `depth' are common when working with DAGs but `criticality' is our own terminology for $c(v)$ in \eqref{e:critdef}. 

If the distance function $d$ satisfies the properties of a formal mathematical distance then the height and depth are always defined and they will have a value of at least zero, $h(v),d(v) \geq 0$. The only nodes with height zero are source nodes and the only nodes with depth zero are sink nodes.

With height and depth, we are effectively defining a distance between a node $v$ and the `beginning', some source node $s$, and the `end' of our DAG, some sink node $t$. We can formalise this in the following lemmas, which lead to our key result on the bounds for criticality of a node and on the interpretation of zero criticality value for nodes in a DAG.

\begin{lemma}[Path leading to the height of a node] \label{l:sourceheight}

The height of a node $v$ is always the length of a longest path from some source node $s$ to $v$.

\end{lemma}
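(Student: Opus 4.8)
The plan is to take a path that witnesses the height of $v$ and argue that its initial node can have no incoming edge, hence is a source. First I would use the definition of height to fix a node $u$ with $u \prec v$ together with a path $\Pcal(u,v)$ whose length $|\Pcal(u,v)|-1$ equals $h(v)$; such a maximiser exists because $\Vcal$ is finite, so the maximum in the definition of $h(v)$ is attained. It then suffices to show that this $u$ is a source node, i.e.\ that $\Ncal^-(u) = \emptyset$, since the path already realises $h(v)$ by construction.

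I would proceed by contradiction. Suppose $u$ is not a source, so there is some $w \in \Ncal^-(u)$, meaning $(w,u) \in \Ecal$. The idea is to prepend $w$ to $\Pcal(u,v)$ to obtain a strictly longer sequence ending at $v$, which would contradict maximality. Concretely, the single edge $(w,u)$ gives a length-one path $\Pcal(w,u) = \{w,u\}$, and the concatenation $\Pcal(w,u)\cdot\Pcal(u,v)$ is a walk from $w$ to $v$ of length $h(v)+1$.

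The one step that needs care --- and the main obstacle --- is verifying that this concatenation is genuinely a \emph{path}, i.e.\ that all its nodes are distinct, since only then does its length count towards the height in the sense of the definition. The only way distinctness could fail is if $w$ already appears somewhere in $\Pcal(u,v)$. But every node lying after $u$ on $\Pcal(u,v)$ is reachable from $u$, so if $w$ were on $\Pcal(u,v)$ there would be a path $\Pcal(u,w)$; concatenating it with the edge $(w,u)$ would then produce a cycle through $u$, contradicting the defining acyclicity of the DAG $\Dcal$. Hence $w \notin \Pcal(u,v)$, the concatenated sequence has distinct nodes, and it is a path of length $h(v)+1$ terminating at $v$. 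This contradicts the maximality of $h(v)$, so $u$ must in fact be a source node, which is precisely the claim. It is worth flagging that this is exactly where acyclicity is indispensable: in a general directed graph prepending a predecessor could revisit a node, and the statement would fail.
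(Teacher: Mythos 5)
Your proof is correct and follows the same overall strategy as the paper's: assume the initial node $u$ of a height-realising path is not a source, use a predecessor of $u$ to produce something strictly longer ending at $v$, and derive a contradiction with the maximality of $h(v)$. The difference is in how the "strictly longer" step is justified. The paper invokes its reverse triangle identity, $d(u',v)\geq d(u',u)+d(u,v)=1+d(u,v)$, and is done in one line; you instead perform the concatenation $\Pcal(w,u)\cdot\Pcal(u,v)$ explicitly and then verify that the result is genuinely a \emph{path} --- all nodes distinct --- by observing that if $w$ already lay on $\Pcal(u,v)$ then the segment from $u$ to $w$ together with the edge $(w,u)$ would form a cycle, contradicting acyclicity. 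This extra step is not pedantry: the paper's own justification of the reverse triangle identity asserts that concatenating two (longest) paths "produces a path" without checking distinctness, and that assertion is exactly what fails in a general directed graph and holds only because the graph is a DAG. So your more elementary route buys rigour on a point the paper leaves implicit, at the cost of a slightly longer argument; the paper's route buys brevity by delegating the work to a previously stated identity whose proof quietly relies on the same acyclicity argument you spell out.
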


\begin{proof}
Suppose this were not true and the height of $v$ is based on a path from some node $u$ to $v$. If $u$ is not a source node then there must be a node preceding $u$ connected by an edge $(u',u) \in \Ecal$.
By definition $d(u',u)=1$ and so by the reverse triangle identity \eqref{e:revtriid} we know that $d(u',v) \geq 1+ d(u,v)$ so $d(u',v) > d(u,v)$.
Thus, the node $u$ does give the longest path to $v$ and this node $u$ does not define the height of node $v$. We have a contradiction and so deduce that the height of the path must use a node with no predecessors, i.e.\ a source node.
\end{proof}

Using the same type of argument used to prove \lemmaref{l:sourceheight} we can quickly show the following lemma.
\begin{lemma}[Path leading to the depth of a node] \label{l:sinkdepth}

The depth $d(v)$ of a node $v$ comes from the longest path from $v$ to a sink node.

\end{lemma}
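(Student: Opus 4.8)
The plan is to mirror the argument used for \lemmaref{l:sourceheight}, exchanging the roles of predecessors and successors (and of sources and sinks). By the definition of depth, the value $d(v)$ is realised by some longest path from $v$ to a node $w$ with $v \prec w$. I would argue by contradiction: suppose the terminal node $w$ of such a depth-realising path is \emph{not} a sink node.

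If $w$ is not a sink, then by the definition of a sink node there must exist a successor $w'$ with $(w,w') \in \Ecal$. This single edge already gives a path of length one from $w$ to $w'$, so $d(w,w') \geq 1$ and in particular $w \prec w'$. Since we also have $v \prec w$ by assumption, the ordering hypotheses of the reverse triangle identity \eqref{e:revtriid} are satisfied for the triple $v, w, w'$, and applying it yields
\beq
 d(v,w') \geq d(v,w) + d(w,w') \geq d(v,w) + 1 > d(v,w) \,.
\eeq
Hence there is a path from $v$ terminating at $w'$ that is strictly longer than the chosen path from $v$ to $w$. This contradicts the assumption that the path ending at $w$ has length equal to the depth $d(v)$, since a longer path from $v$ exists. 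Therefore the terminal node of any depth-realising path can have no outgoing edges, i.e.\ it must be a sink node, which is the claim.

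I expect no genuine obstacle here, as the lemma is the exact dual of \lemmaref{l:sourceheight}. The only point requiring care is checking that the ordering conditions of \eqref{e:revtriid} are invoked in the correct direction — namely $v \prec w$ followed by $w \prec w'$ — so that the reverse triangle identity applies to the concatenation $v \to w \to w'$ and not to one of the permutations for which \eqref{e:revtriid} is explicitly noted to fail. Once that orientation is confirmed, the contradiction is immediate and the result follows.
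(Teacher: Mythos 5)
Your proof is correct and follows the same route as the paper, which itself just dualises the argument of \lemmaref{l:sourceheight}: extend a putative depth-realising path past a non-sink terminal node via an outgoing edge and invoke the reverse triangle identity \eqref{e:revtriid} to obtain a strictly longer path, a contradiction. Your care in checking the orientation $v \prec w$, $w \prec w'$ is exactly the right point to verify, and the paper's (much terser) proof relies on the same observation.
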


\noindent We show this using the same arguments used in  \lemmaref{l:sourceheight} but now applied to paths from node $v$ to some node $w$. If $w$ has a successor, $w'$, i.e.\ $(w,w')$ is an edge, then node $w$ is not involved in defining the depth.

\begin{lemma}[Path leading to the height of a node]

The height of a DAG, $\hmax$, comes from the length of a longest path from a source to a sink node.

\end{lemma}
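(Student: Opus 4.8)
The plan is to combine the two previous lemmas with the defining property of $\hmax$ as a maximum over all node heights. First I would invoke \eqref{e:hmaxdef} to pick a node $v^*$ whose height realises the maximum, so that $h(v^*) = \hmax$. By definition of $\hmax$ such a node exists (the maximum is attained over the finite node set $\Vcal$). This reduces the problem to exhibiting a source-to-sink path of length $\hmax$.

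Next I would apply \lemmaref{l:sourceheight} to this node $v^*$: that lemma guarantees that the height $h(v^*)$ is realised by a longest path $\Pcal(s, v^*)$ from some \emph{source} node $s$ to $v^*$, of length exactly $h(v^*) = \hmax$. The remaining task is to argue that this path already terminates at a sink, so that we genuinely have a source-to-sink path of the required length. Here I would argue that $v^*$ must itself be a sink node: if $v^*$ had a successor $w'$ via an edge $(v^*, w')$, then the reverse triangle identity \eqref{e:revtriid} would give a path from $s$ to $w'$ of length at least $h(v^*) + 1 = \hmax + 1$, so that $h(w') > \hmax$, contradicting the maximality of $\hmax$. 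Hence $v^*$ has no outgoing edges and is a sink, i.e.\ $d(v^*) = 0$, and the path $\Pcal(s, v^*)$ is a source-to-sink path of length $\hmax$.

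Finally I would close by noting the converse direction is immediate: any source-to-sink path has length at most $\hmax$, since its length is a lower bound on the height of its terminal node, which by definition cannot exceed $\hmax$. Together these establish that $\hmax$ equals the length of a longest source-to-sink path. The main obstacle, though a minor one, is being careful about \emph{which} endpoint realises the maximum: the cleanest route is to let the maximising node be the sink end of the path, which forces the argument above, rather than starting from an arbitrary source. One could alternatively phrase the whole argument symmetrically using \lemmaref{l:sinkdepth} together with the fact (asserted in the excerpt) that $\hmax$ equals the largest depth of any node, but the height-based route is the most direct given the lemmas already proved.
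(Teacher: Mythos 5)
Your proof is correct and follows essentially the same route as the paper: it uses the extension/contradiction argument from the preceding two lemmas to show the maximising node must be a sink, and invokes \lemmaref{l:sourceheight} to anchor the other end at a source. Your version is in fact slightly more explicit than the paper's (which compresses the argument and even transposes ``source'' and ``sink'' in its final sentence), and the added converse bound is immediate but harmless.
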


\begin{proof}
	We use the same arguments as we did for the last two lemmas. The only paths that can not be extended at either end to produce longer paths are those running from a source node to a target node.  Thus the nodes with the largest heights are the sink nodes. The height of the DAG comes from the largest of all heights \eqref{e:hmaxdef}, so this must run to a source node and, by \lemmaref{l:sourceheight}, run from a sink node.
\end{proof}

\noindent A couple of corollaries follow from this. First that the height of the DAG is the length of a longest path anywhere in the graph. Second that the height of the graph is also equal to the largest value of the depth which has to be for the depth of one (or more) of the source vertices. 

Finally, we can put these ideas together to show the following lemma that is the basis for our analysis.
\begin{lemma}[Zero criticality nodes] \label{l:zerociticality}

A node $v$ with zero criticality, $c(v)=0$, lies on a longest path in the DAG. Nodes $v$ with positive criticality $c(v)>0$ do not lie on a longest path of the DAG.

\end{lemma}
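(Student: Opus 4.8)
The plan is to prove both directions at once by constructing, from an arbitrary node $v$, a single path that realises the sum $h(v)+d(v)$ and then comparing its length with $\hmax$. First I would invoke \lemmaref{l:sourceheight} to obtain a longest path $\Pcal(s,v)$ from some source $s$ to $v$ of length exactly $h(v)$, and \lemmaref{l:sinkdepth} to obtain a longest path $\Pcal(v,t)$ from $v$ to some sink $t$ of length exactly $d(v)$. Concatenating these, $\Pcal(s,v)\cdot\Pcal(v,t)$, gives a walk from $s$ to $t$ through $v$ whose length is $h(v)+d(v)$.

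The main obstacle, and the only point at which acyclicity is genuinely used, is showing that this concatenation is a bona fide \emph{path} and not merely a walk, i.e.\ that $\Pcal(s,v)$ and $\Pcal(v,t)$ share no node other than $v$. I would argue by contradiction: if some $x\neq v$ lay on both pieces, then the sub-path of $\Pcal(s,v)$ running from $x$ to $v$ would give $x\prec v$, while the sub-path of $\Pcal(v,t)$ running from $v$ to $x$ would give $v\prec x$; concatenating those two sub-paths yields a closed walk of positive length through $x$, that is a cycle, contradicting that $\Dcal$ is a DAG. Hence $v$ is the unique common node and the concatenation is a path of length $h(v)+d(v)$. Since the height of the DAG is the length of a longest path anywhere, no path can exceed $\hmax$, and therefore $h(v)+d(v)\le\hmax$, i.e.\ $c(v)\ge 0$ for \emph{every} node. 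I would record this non-negativity as a preliminary fact.

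For the forward direction, when $c(v)=0$ we have $h(v)+d(v)=\hmax$, so the path constructed above has length exactly $\hmax$; being a path of maximal length it is a longest path of $\Dcal$, and $v$ lies on it. For the converse I would prove the contrapositive, showing that any $v$ on a longest path has $c(v)=0$. Such a longest path has length $\hmax$ and, by the source/sink lemmas, runs from a source to a sink; it splits at $v$ into an initial piece of length $\ell_1$ ending at $v$ and a terminal piece of length $\ell_2$ starting at $v$, with $\ell_1+\ell_2=\hmax$. By the definitions of height and depth as maxima over paths, $h(v)\ge\ell_1$ and $d(v)\ge\ell_2$, whence $h(v)+d(v)\ge\hmax$, i.e.\ $c(v)\le 0$. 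Combined with the preliminary fact $c(v)\ge 0$, this forces $c(v)=0$. Contrapositively, any node with $c(v)>0$ fails to lie on a longest path, which completes the argument.
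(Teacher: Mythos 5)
Your construction is the same as the paper's: take a height-realising longest path from a source $s$ to $v$ (\lemmaref{l:sourceheight}), a depth-realising longest path from $v$ to a sink $t$ (\lemmaref{l:sinkdepth}), concatenate, and compare the resulting length $h(v)+d(v)$ with $\hmax$. Where you depart from the paper is in two refinements, both to your credit. First, you explicitly verify that the concatenation is a genuine path rather than a walk, deriving a cycle from any repeated node; the paper silently treats $P_1\cdot P_2$ as a path, and your observation that this is the one place acyclicity is really used is exactly right. Second, and more substantively, you prove the converse by the contrapositive: you split an arbitrary longest path of the DAG at $v$ into pieces of lengths $\ell_1,\ell_2$ with $\ell_1+\ell_2=\hmax$, note $h(v)\ge\ell_1$ and $d(v)\ge\ell_2$, and combine with the preliminary bound $c(v)\ge 0$ to force $c(v)=0$. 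The paper instead argues by cases that the \emph{specific} concatenated path $P_1\cdot P_2$ fails to be a longest path of the DAG and then asserts that $v$ lies on no longest path --- a step that does not follow without precisely the splitting argument you supply, since $v$ could in principle sit on some other source-to-sink path of length $\hmax$. So your route closes a logical gap in the paper's own treatment of the second half of the lemma, at the cost of a slightly longer argument; the paper's version is more economical but leans on an unstated implication that your contrapositive makes explicit.
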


\begin{proof}
Consider a node $v$ with height $h(v)$ based on a longest path $P_1$ from source $s$ to $v$ (from \lemmaref{l:sourceheight}) and depth obtained from some longest path $P_2$ from $v$ to sink node $t$ (from \lemmaref{l:sinkdepth}).

The path $P_1 \cdot P_2$  from source $s$ via $v$ to sink node $t$ obtained by concatenating paths $P_1$  and $P_2$  has length $h(v)+d(v)$ by definition of concatenated paths in \eqref{e:walkconcatdef}.  This concatenated path $P_1 \cdot P_2$ need not be a longest path from $s$ to $t$, which has length $d(s,t)$, but in that case, $P_1 \cdot P_2$ must be shorter than this longest path so $d(s,t)>h(v)+d(v)$. Equality, $d(s,t)=h(v)+d(v)$,  can therefore only happen if the concatenated path $P_1 \cdot P_2$ is also a longest path from $s$ to $t$. 

If this longest path from source $s$ to source $t$ is one of the longest paths in the DAG, then $d(s,t)$ is the height $\hmax$ of the DAG. In this case we have $\hmax = d(s,t)=h(v)+d(v)$ and so $c(v)=0$ which is the first  of the lemma.  

The converse of this statement, the second part of the lemma, follows from the following cases. Case (i) is where the concatenated path $P_1 \cdot P_2$ is a longest path from source $s$ to source $t$ but is \emph{not} one of the longest paths in the DAG so $\hmax > d(s,t) \geq h(v)+d(v)$. Case (ii) is where the concatenated path $P_1 \cdot P_2$ is not a longest path from source $s$ to source $t$ so $d(s,t)>h(v)+d(v)$. In this second case, we know $\hmax \geq d(s,t)$ regardless of the nature of the concatenated path so again $\hmax > d(s,t) = h(v)+d(v)$. In either case, $v$ is not on a longest path in the DAG and $c(t)>0$ proving the second part of the lemma.
\end{proof}

As a corollary of this lemma, we can see that the \tsedef{height} of the DAG $h(\Dcal)$ is also the largest possible value of the depth of any node.

\newpage
\section{Empirical data} \label{a:datasource}

In our context, the sources of our DAGs are the nodes representing the FDA approval on one of eight vaccines. We have chosen to work with vaccines produced from one of four different methods, and we outline these different approaches and associated vaccines in this section.

\subsection{Viral vector (adenovirus vector) vaccine platform}

\textbf{Technical principle and bottlenecks.} A \tsedef{viral vector vaccine} (VVV) is a relatively novel vaccine platform that uses virus to infect host cells with the genes of pathogens; the infected host cells then transcribe and translate the genes into antigens of the pathogen. Following vaccination, T cells and B cells respond against both to the viral vector itself and, more importantly, the antigen encoded viral vector. Viral vector vaccines can rapidly adapt from one pathogen to another because only the gene of interest needs to be exchanged. An \tsedef{adenoviral vector vaccine} (AVV) is a subtype of the viral vector vaccines that exploits the high transduction efficiency and pervasive tropism of adenoviruses to facilitate the expression of target antigen. An adenoviral vector vaccine is produced by deleting the replication genes from an adenovirus serotype and inserting the genetic sequence of interest to the virus. This is followed by viral vector production in manufacturing cells and purification~\cite{RN1379}. Historically, the efficacy of a viral vector vaccine has be challenged by hosts' immunity against the viral vector~\cite{RN1642} and the surveyed vaccines should exhibit mechanism to circumvent this issue: finding rare or non-human adenovirus serotypes and vectorising adenoviruses from non-human primates~\cite{RN1644,RN1643}.

\textbf{Data.} As of October 2022, four viral vector vaccines have been cleared by the FDA (the US Food \& Drugs Administration) for use in humans; we consider two of them which represent the first uses of adenovirus as vaccine vector: Zabdeno\footnote{Mvabea, the second dose of the Zabdeno/Mvabea regiment uses modified vaccinia Ankara (a poxvirus) as vector.} (against Ebola, developed by Janssen, first authorised for use in 2020)~\cite{RN1721} and Vaxzevria (COVID-19, AstraZeneca, 2020)~\cite{RN1722}.

\subsection{Nucleic acid (mRNA) vaccine platform}

\textbf{Technical principle and bottlenecks.} Both DNA and RNA can elicit immune response, but to date, only \tsedef{mRNA} vaccines have been authorised by the FDA. Similar to adenoviral vector vaccines, nucleic vaccines are highly immunogenic, easily adopted, and readily manufactured compared to inactivated/attenuated vaccines. The two currently commercially available nucleic vaccines work by expressing antigens of interest via nucleoside-modified mRNA encapsulated in lipid nanoparticles (LNP)~\cite{RN1370}. To arrive at the mRNA vaccines we have, innovators had to understand how mRNA elicits an immune response, how to control the amount of innate inflammatory reactions to therapeutic mRNA, how to deliver mRNA to transfect cells; the synthesis, purification, and buffering of mRNA~\cite{RN1371,RN1610,RN998,RN1370}.

\textbf{Data.} Similar to adenoviral vector vaccines, only two mRNA vaccines are authorised by the FDA at the time of writing: Spikevax (COVID-19, Moderna, 2020)~\cite{RN1723} and Comirnaty (COVID-19, BioNTech/Pfizer, 2020)~\cite{RN1724}.

\subsection{Whole pathogen (attenuated) vaccine platform}

\textbf{Technical principle and bottlenecks.} The very first vaccine was an example of a \tsedef{whole pathogen vaccine} (WPV) and the word ``vaccine'', from the Latin ``vaccinus'', comes from Jenner's use of cow (``vacca'') pox to prevent smallpox. A typical whole pathogen vaccine contain microbes that are live but attenuated, meaning they are weakened strains, or they are inactivated, meaning they are killed or altered to prevent replication. For instance, the vaccines behind the eradication of polio contain inactivated poliovirus. Although whole pathogen vaccines are a two-century-old innovation, pathogenic attenuation or inactivation does not readily guarantee a viable vaccine due to immunogenicity, safety, and yield issues~\cite{RN1639}. This has led to newer vaccines to experiment with novel techniques such as the use of hydrogen peroxide and gamma irradiation as alternative means of inactivation~\cite{RN1641,RN1640}.

\textbf{Data.} Since whole pathogen vaccines are based on the oldest approach to vaccination, we use two examples as a baseline for comparisons with novel mRNA and viral vector vaccine platforms. We choose two whole pathogen vaccines that were recently cleared by the FDA: Dengvaxia (Dengue, Sanofi, 2019)~\cite{RN1725} which is one of the first vaccines against dengue, and Imvanex (Smallpox, Bavarian Nordic, 2013; aka. Jynneos)~\cite{RN1726} which is a third-generation smallpox vaccine that is being used to control monkeypox outbreak. Interestingly,  the modified vaccinia Ankara used by Imvanex is the same virus that serves as a viral vector vaccine vector for the Mvabea vaccine discussed above.

\subsection{Subunits (recombinant protein) vaccine platform}

\textbf{Technical principle and bottlenecks.} Compared to a whole bacterium or virion, \tsedef{subunit vaccines} contain one or more isolated constituents of a microorganism to stimulate a more targeted immune response. Subtypes of subunit vaccines make use of protein subunits isolated and modified whole proteins or partial peptides from pathogens; toxoid inactivated pathogenic poisons; polysaccharides or glycoproteins to mimic glycoproteins on cell surface of pathogens; or chemical conjugation of low-affinity polysaccharides with a high affinity protein carriers to improve B-cell recognition of the polysaccharides. Usually, subunit vaccines elicit lower immunogenicity than WPV. Strategies to improve subunit vaccine effectiveness include the use of adjuvants, multiple dosage regiments, condon optimisation to improve yield, and amino acid substitution stabilise the introduced peptide chains~\cite{RN1650}.

\textbf{Data}. We use recombinant protein subunit vaccine, which has been widely available since the 1980s~\cite{RN1638}, as another baseline to compare with the novel mRNA and adenoviral vector vaccine platforms, and for similarity with the even more, established WPVs. Following the same logic, we choose two recently FDA-cleared subunit vaccines that both contain recombinant protein subunits and use adjuvants to enhance immune response: Nuvaxovid (COVID-19, Novavax, 2022)~\cite{RN1727} and Shringrix (Shingles, GSK, 2017)~\cite{RN1728}.

\newpage
\section{All figures} \label{all_figures}

\begin{figure}[hbt]
     \centering
     \begin{subfigure}[b]{0.24\textwidth}
         \centering
         \includegraphics[width=\textwidth]{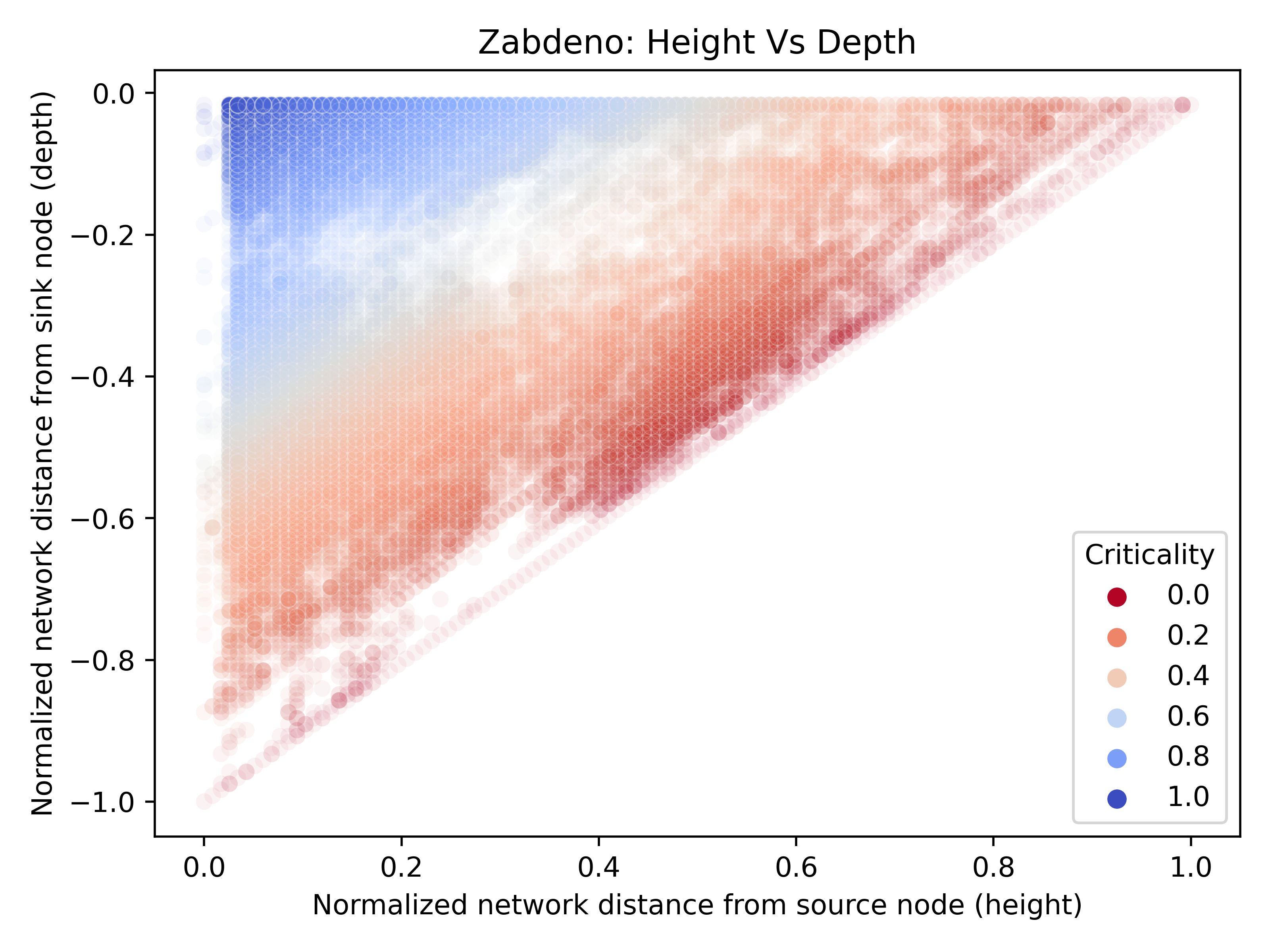}
         \caption{Zabdeno, Ebola, Janssen, 2020,\\AVV}
     \end{subfigure}
     \hfill
     \begin{subfigure}[b]{0.24\textwidth}
         \centering
         \includegraphics[width=\textwidth]{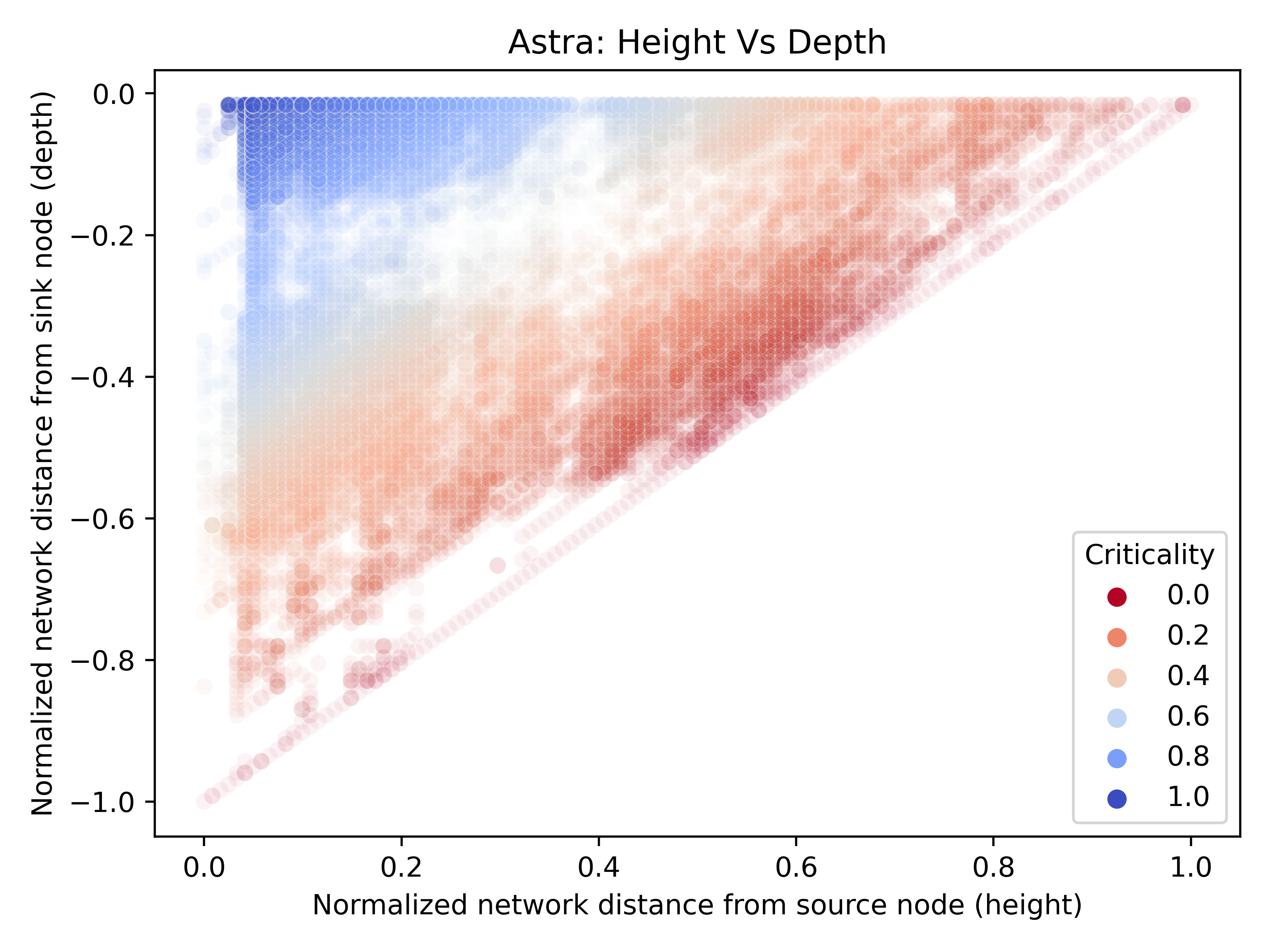}
         \caption{Vaxzevria, COVID-19, AstraZeneca, 2020, AVV}
     \end{subfigure}
     \hfill
     \begin{subfigure}[b]{0.24\textwidth}
         \centering
         \includegraphics[width=\textwidth]{hey_moderna.png}
         \caption{Spikevax, COVID-19, Moderna, 2020, mRNA}
     \end{subfigure}
     \hfill
     \begin{subfigure}[b]{0.24\textwidth}
         \centering
         \includegraphics[width=\textwidth]{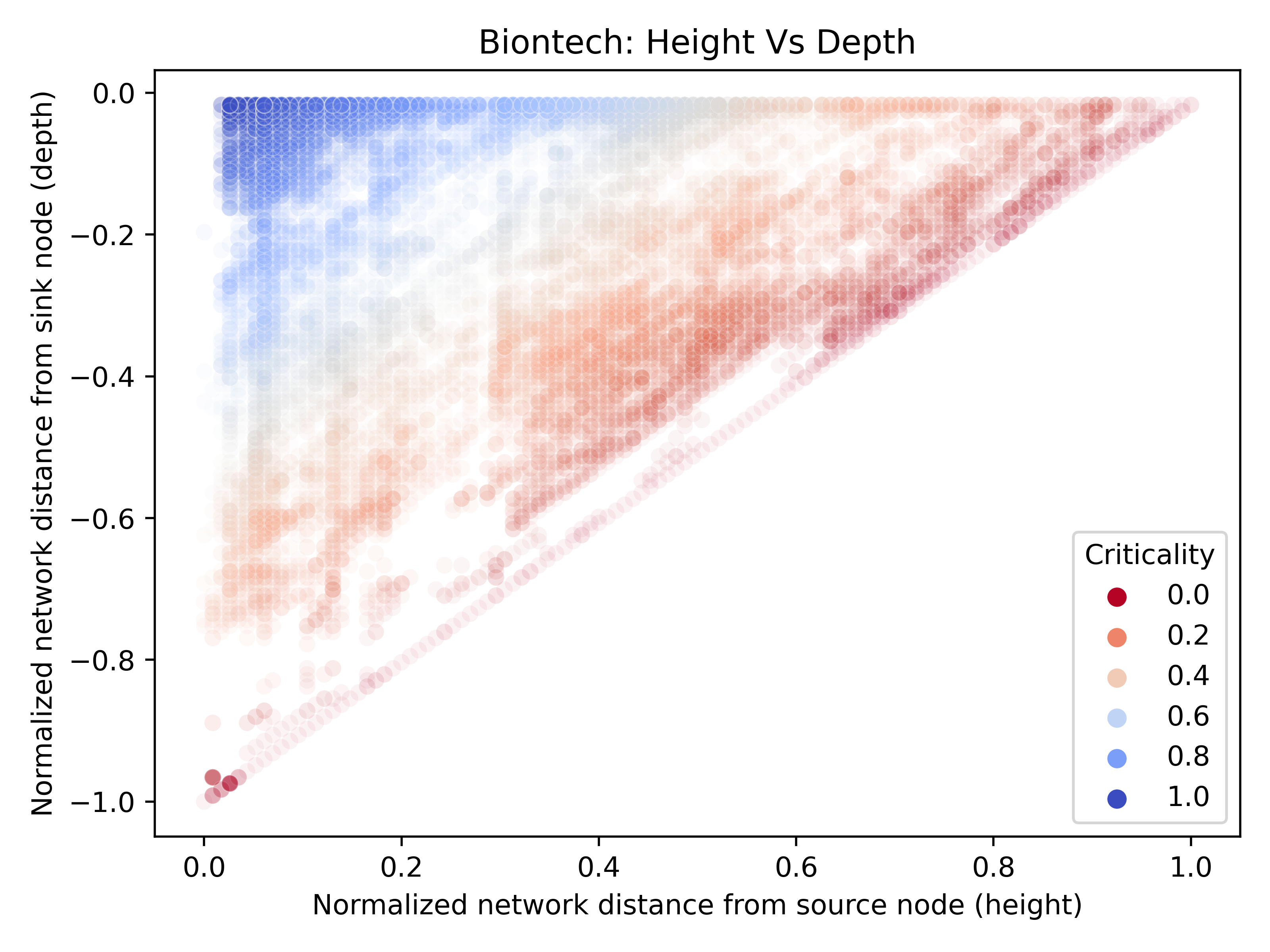}
         \caption{Comirnaty, COVID-19, BioNTech/Pfizer, 2020, mRNA}
     \end{subfigure}
     \hfill
     \begin{subfigure}[b]{0.24\textwidth}
         \centering
         \includegraphics[width=\textwidth]{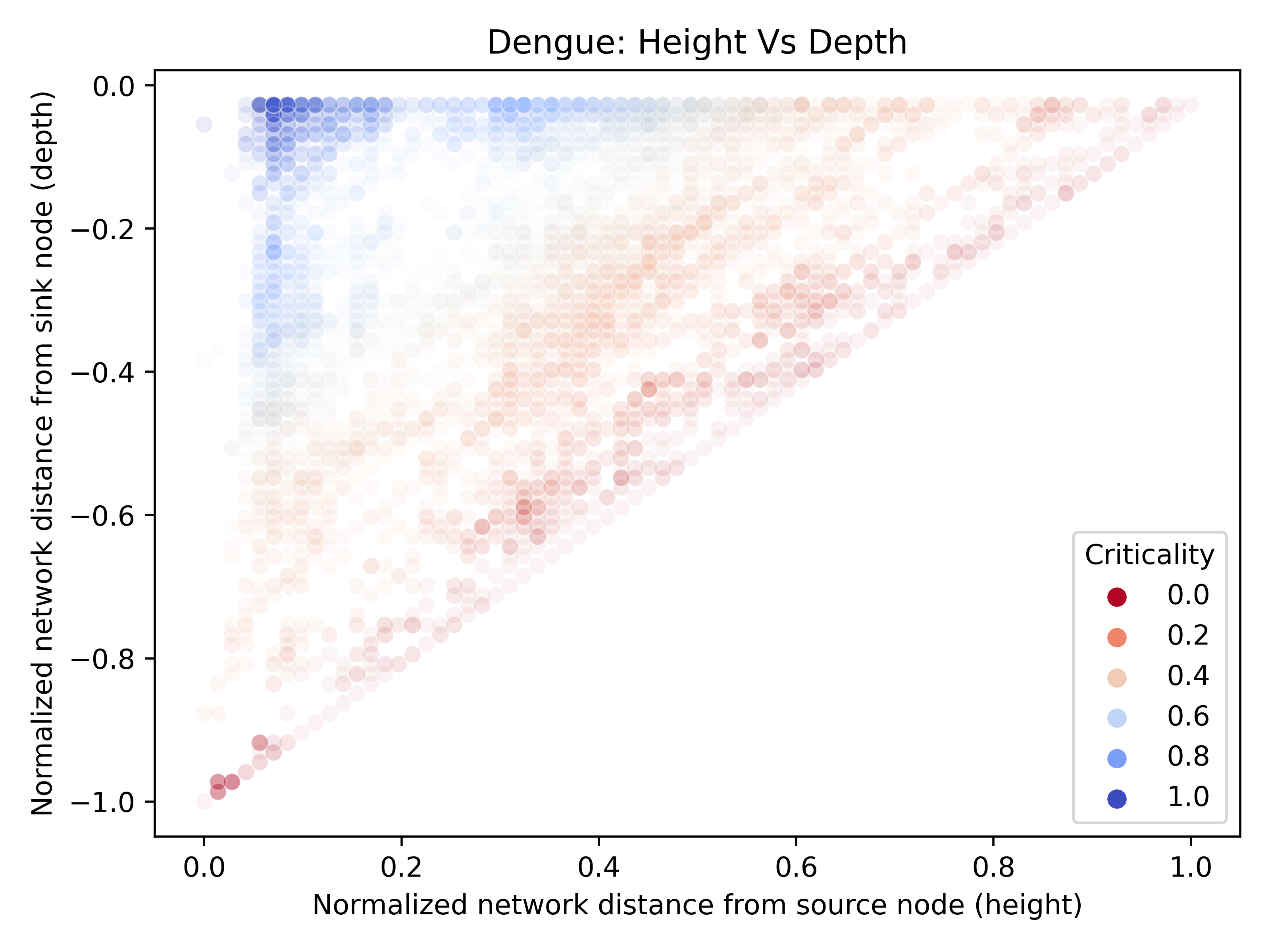}
         \caption{Dengvaxia, Dengue, Sanofi, 2019,\\ WPV}
     \end{subfigure}
     \hfill
     \begin{subfigure}[b]{0.24\textwidth}
         \centering
         \includegraphics[width=\textwidth]{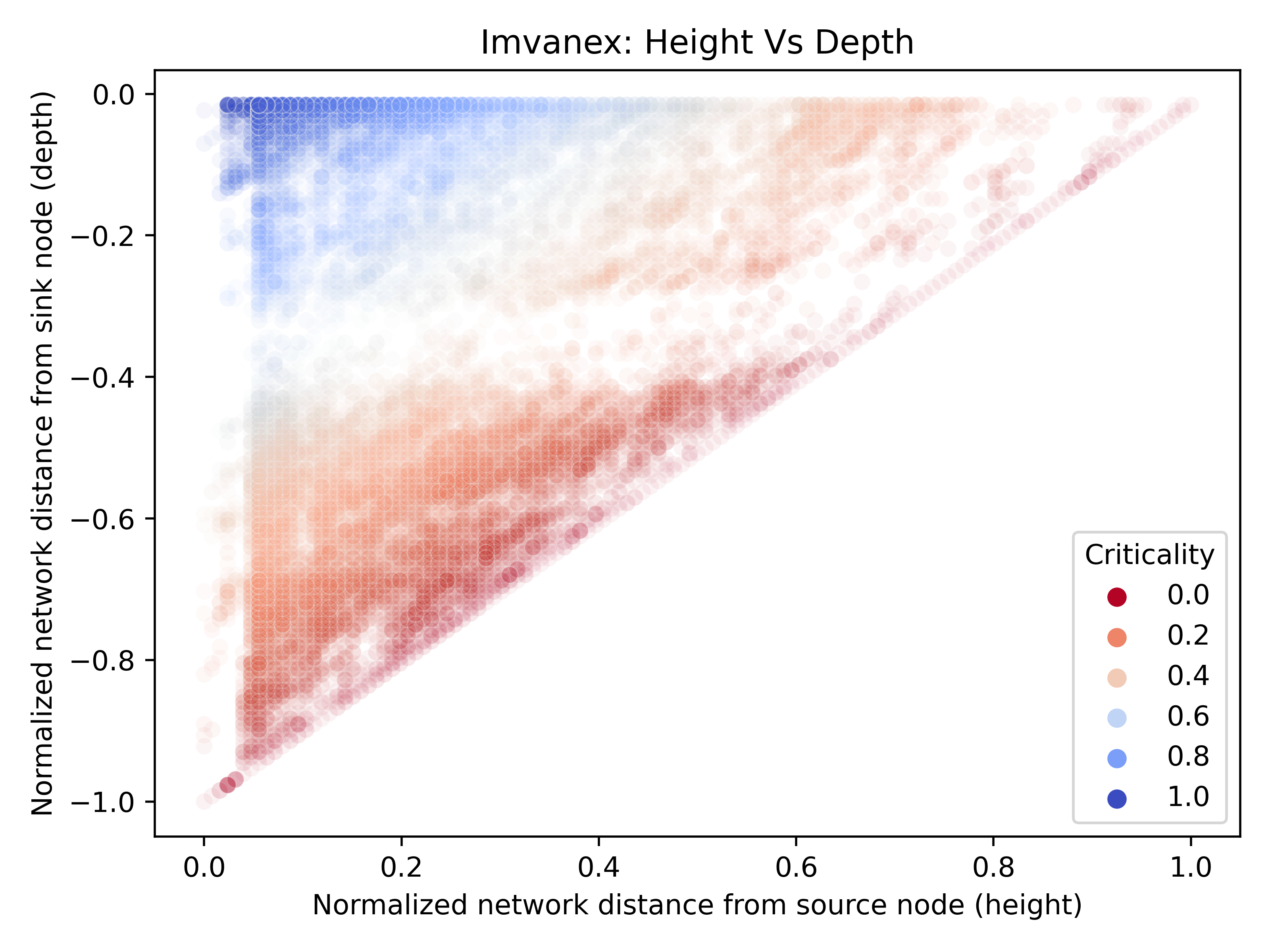}
         \caption{Imvanex, Smallpox, Bavarian Nordic, 2013, WPV}
     \end{subfigure}
     \hfill
     \begin{subfigure}[b]{0.24\textwidth}
         \centering
         \includegraphics[width=\textwidth]{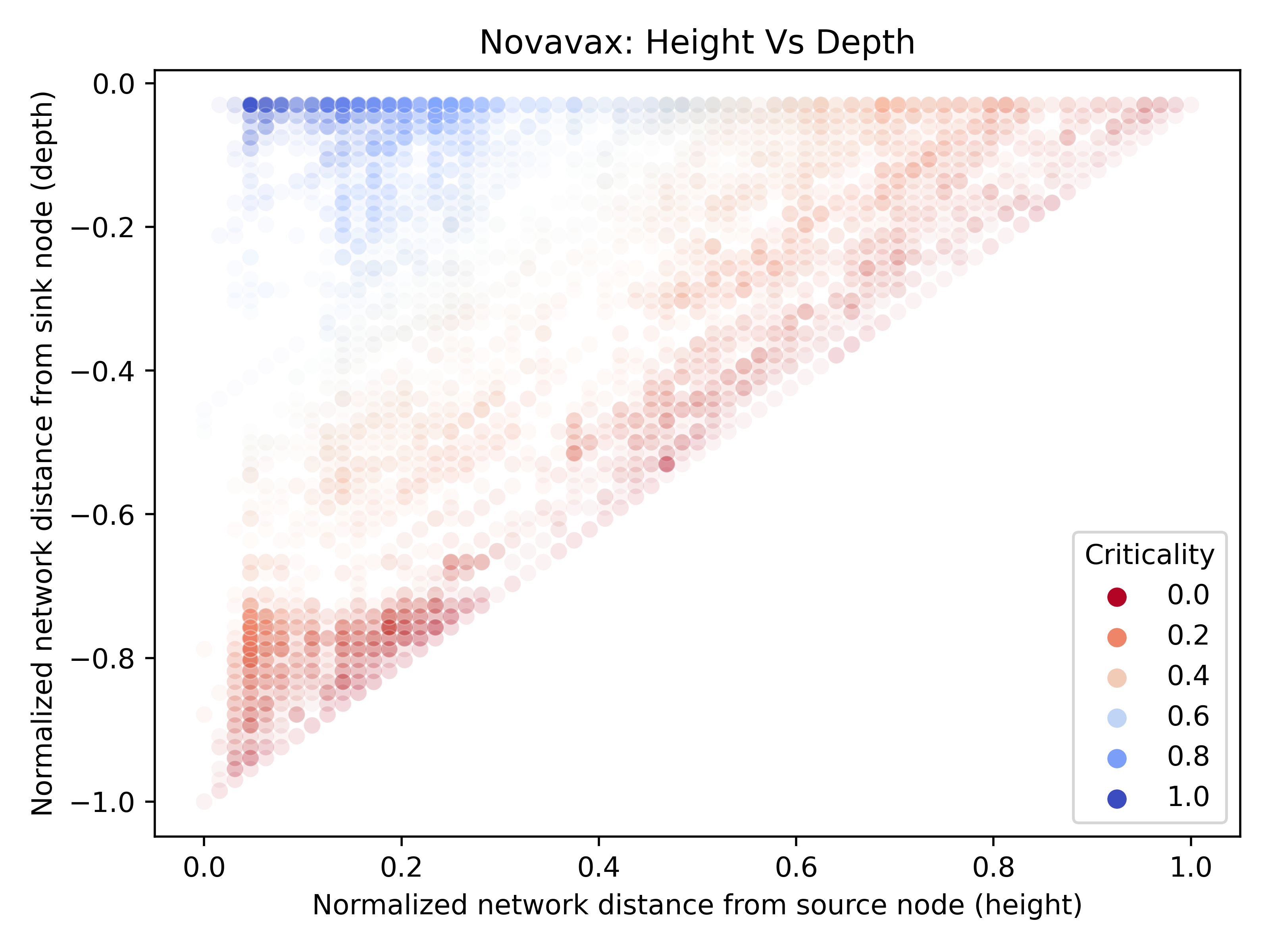}
         \caption{Nuvaxovid, COVID-19, Novavax, 2022, subunits}
     \end{subfigure}
     \hfill
     \begin{subfigure}[b]{0.24\textwidth}
         \centering
         \includegraphics[width=\textwidth]{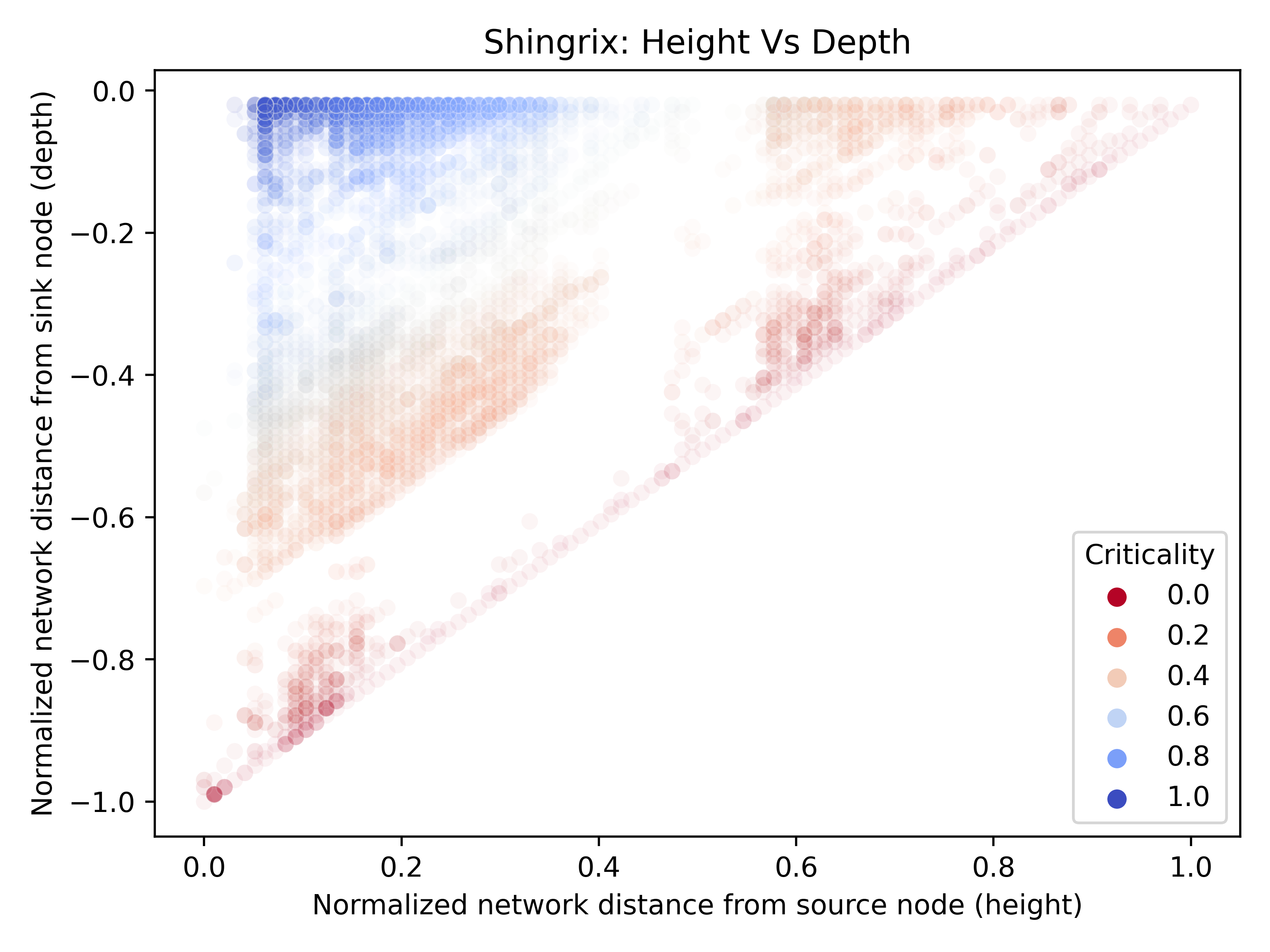}
         \caption{Shringrix, Shingles, GSK, 2017,\\ subunits}
     \end{subfigure}
     \hfill
        \caption{\textbf{Depth as a function of height} with hypotenuse showing longest network path and critical innovation path.}
        \label{fig:all_critical_paths}
\end{figure}

\begin{figure}[htb]
     \centering
     \begin{subfigure}[b]{0.24\textwidth}
         \centering
         \includegraphics[width=\textwidth]{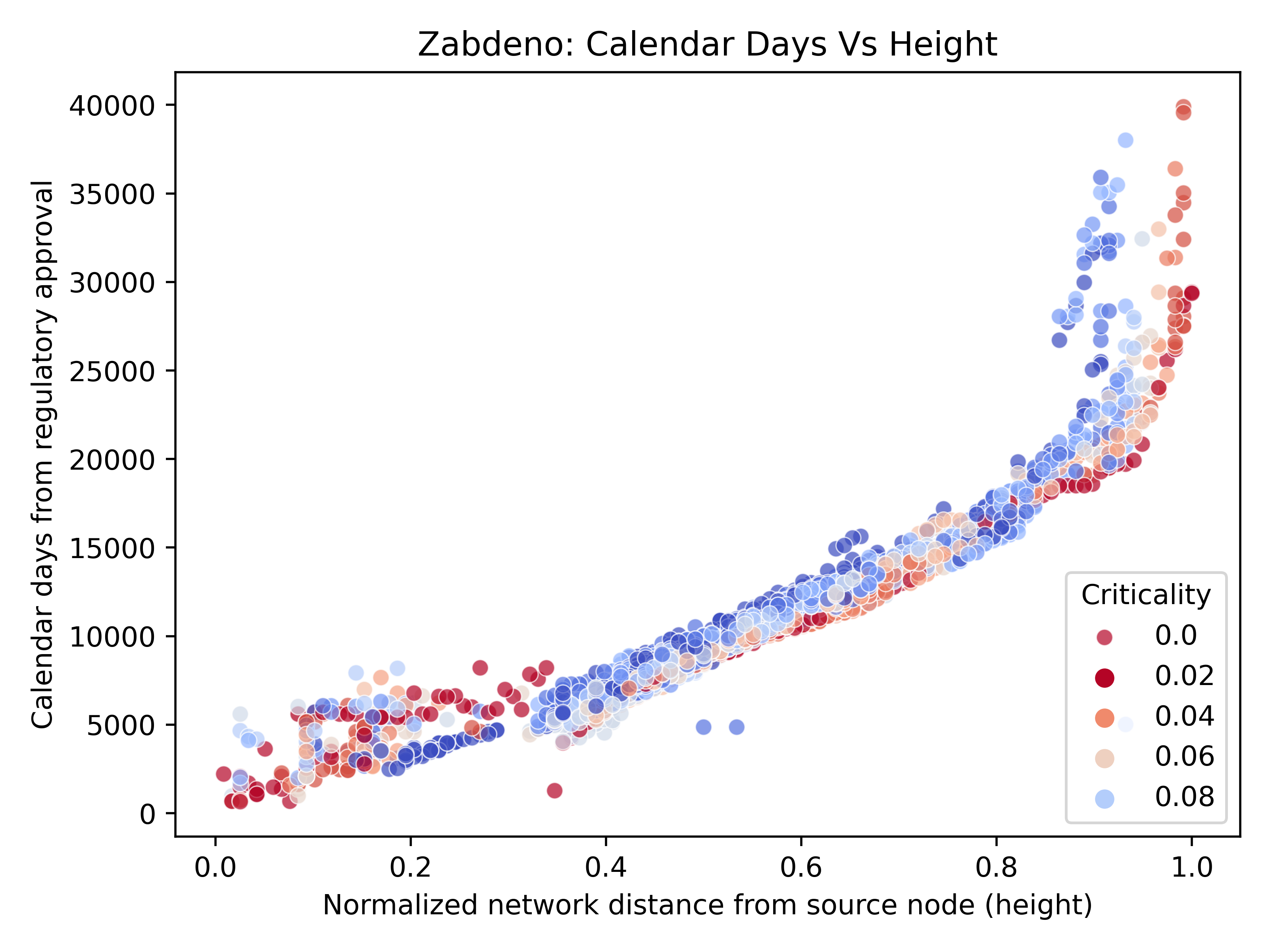}
         \caption{Zabdeno, Ebola, Janssen, 2020,\\AVV}
     \end{subfigure}
     \hfill
     \begin{subfigure}[b]{0.24\textwidth}
         \centering
         \includegraphics[width=\textwidth]{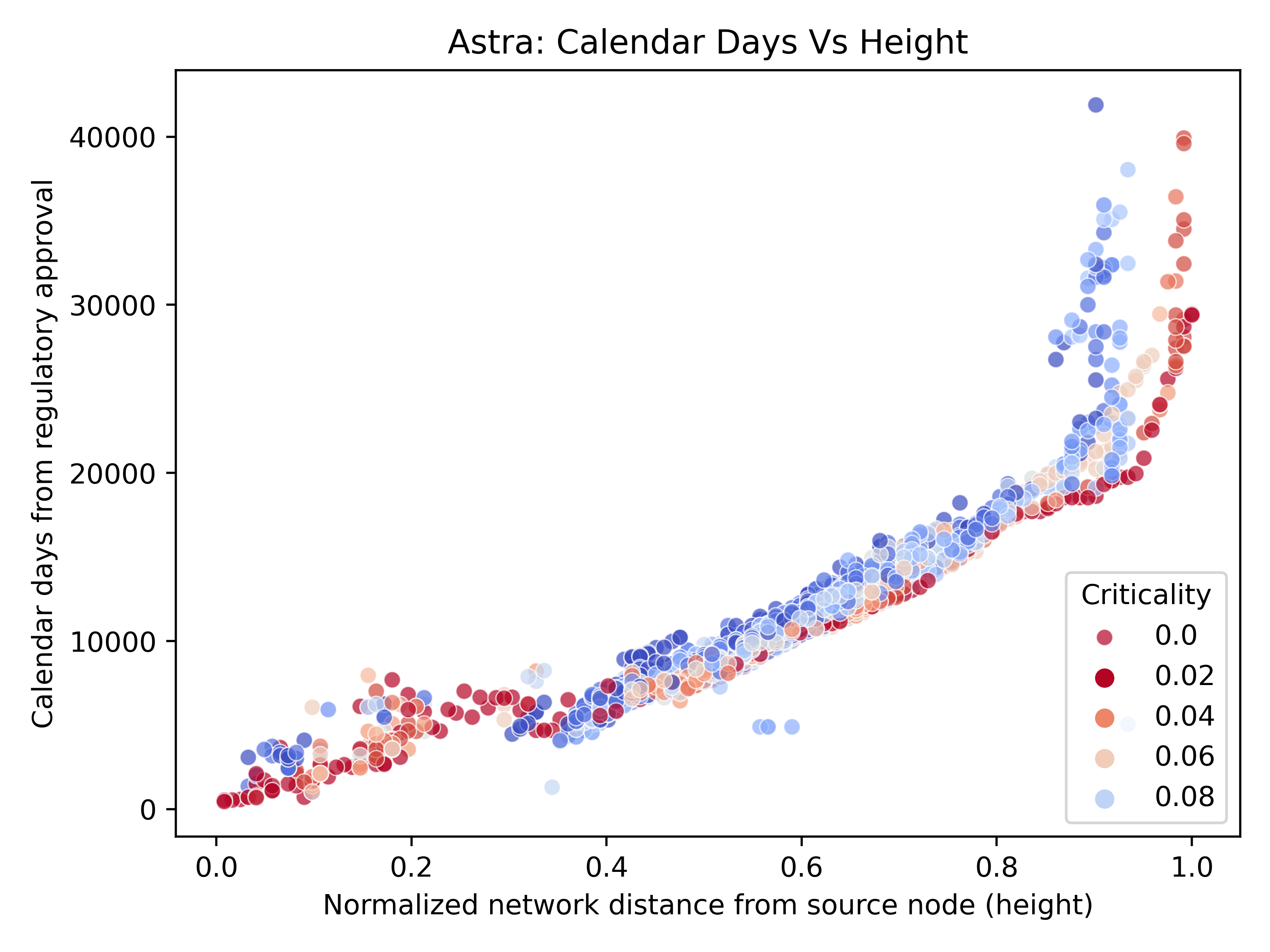}
         \caption{Vaxzevria, COVID-19, AstraZeneca, 2020, AVV}
     \end{subfigure}
     \hfill
     \begin{subfigure}[b]{0.24\textwidth}
         \centering
         \includegraphics[width=\textwidth]{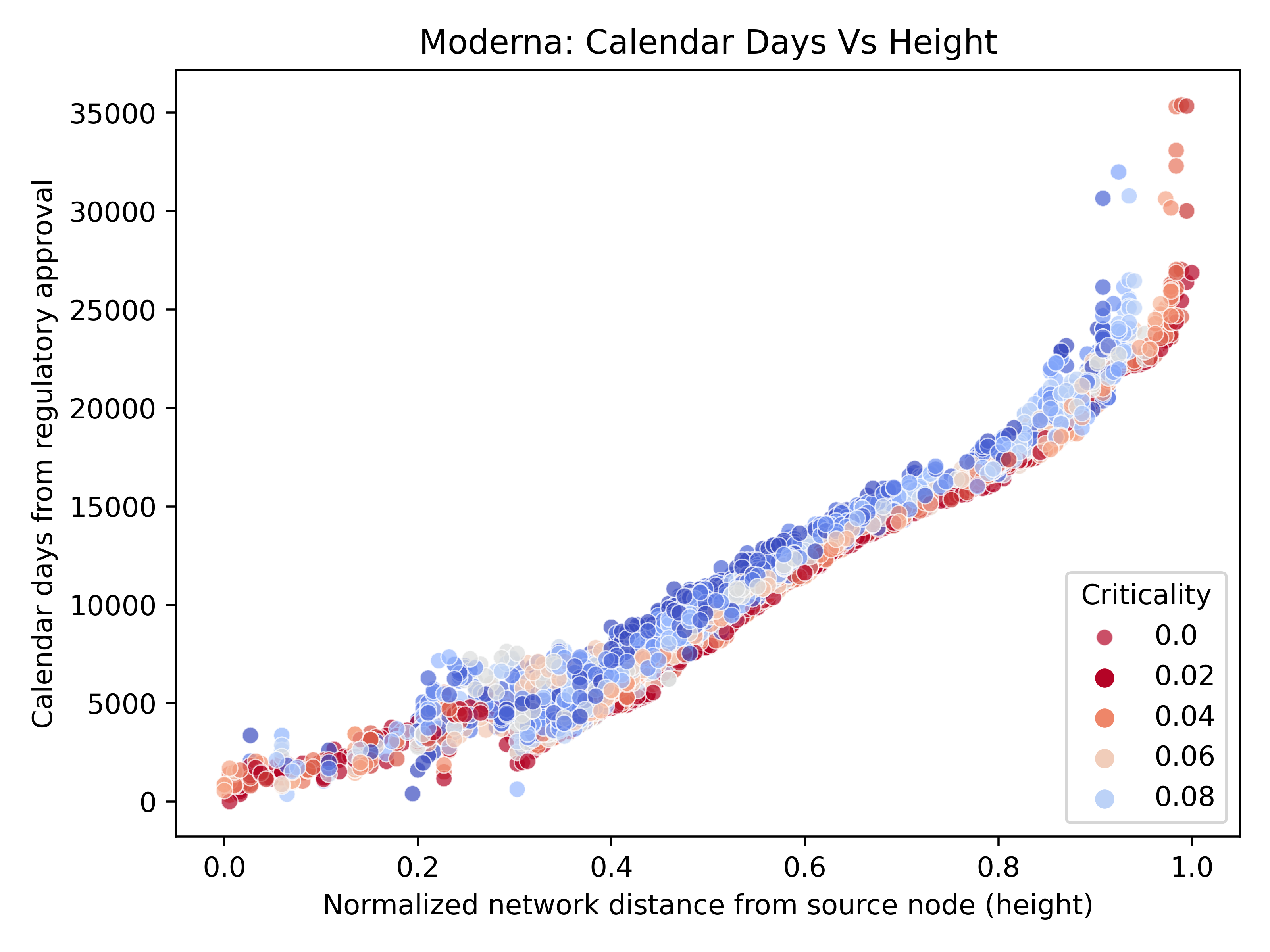}
         \caption{Spikevax, COVID-19, Moderna, 2020, mRNA}
     \end{subfigure}
     \hfill
     \begin{subfigure}[b]{0.24\textwidth}
         \centering
         \includegraphics[width=\textwidth]{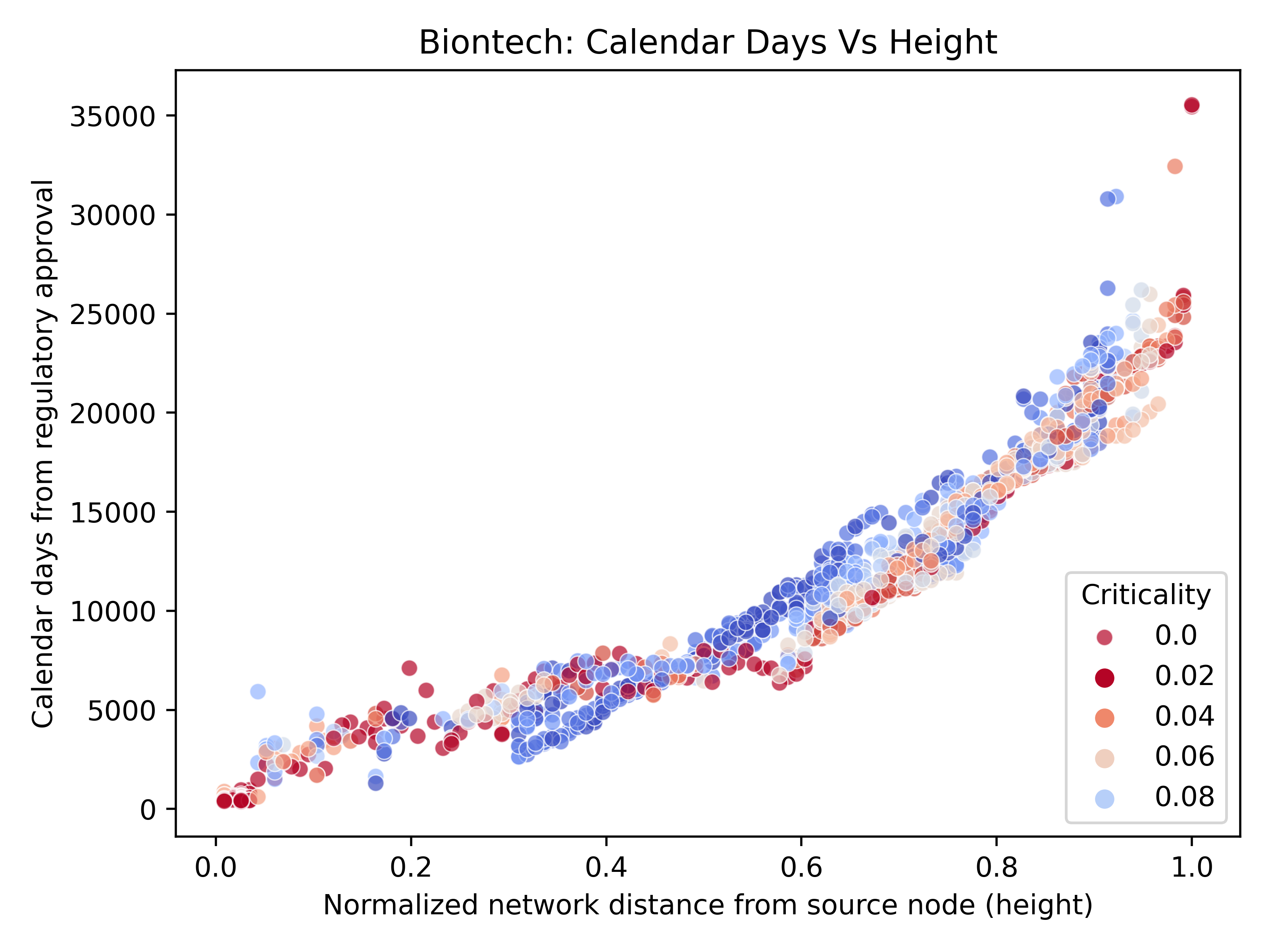}
         \caption{Comirnaty, COVID-19, BioNTech/Pfizer, 2020, mRNA}
     \end{subfigure}
     \hfill
     \begin{subfigure}[b]{0.24\textwidth}
         \centering
         \includegraphics[width=\textwidth]{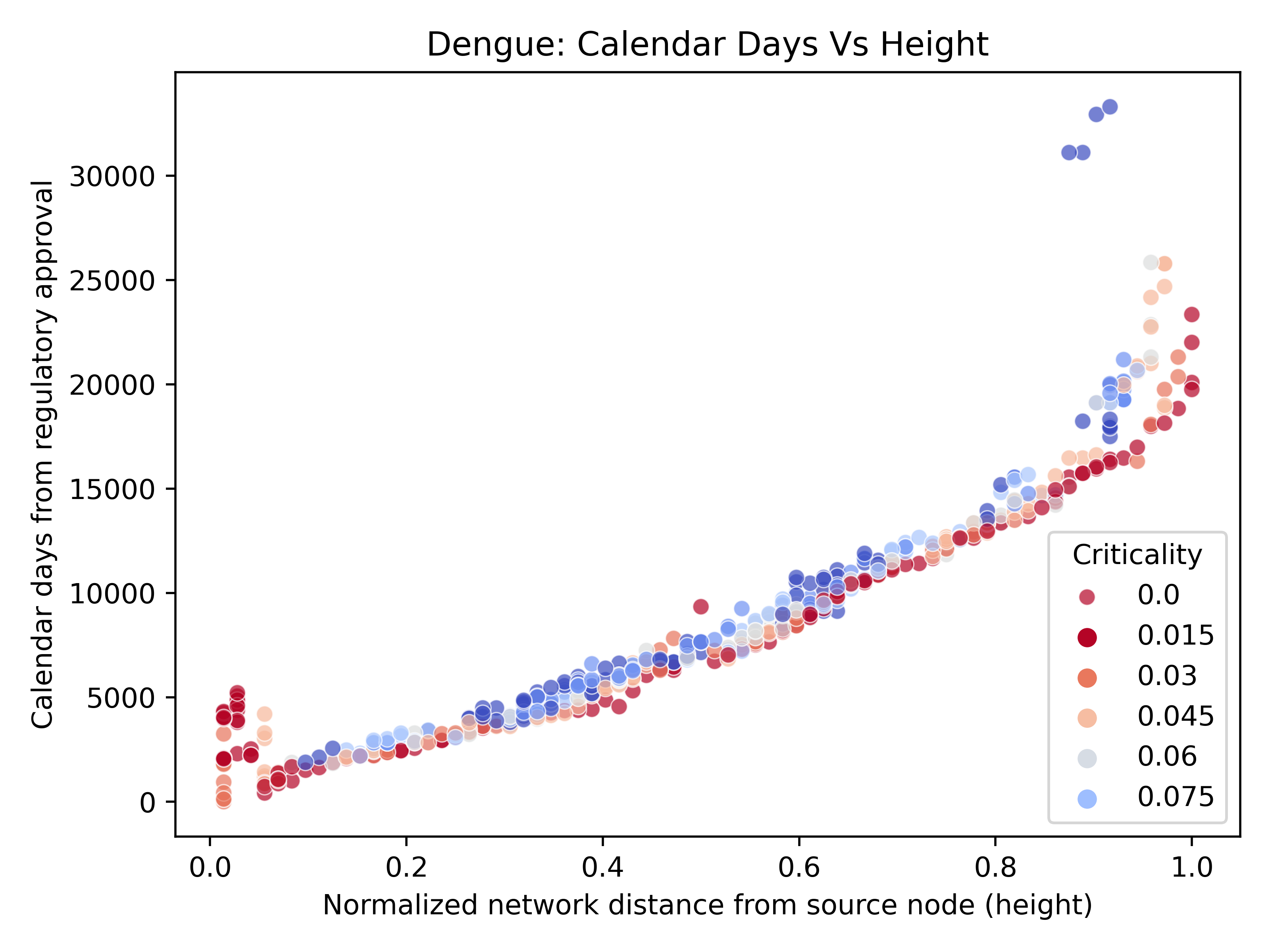}
         \caption{Dengvaxia, Dengue, Sanofi, 2019,\\ WPV}
     \end{subfigure}
     \hfill
     \begin{subfigure}[b]{0.24\textwidth}
         \centering
         \includegraphics[width=\textwidth]{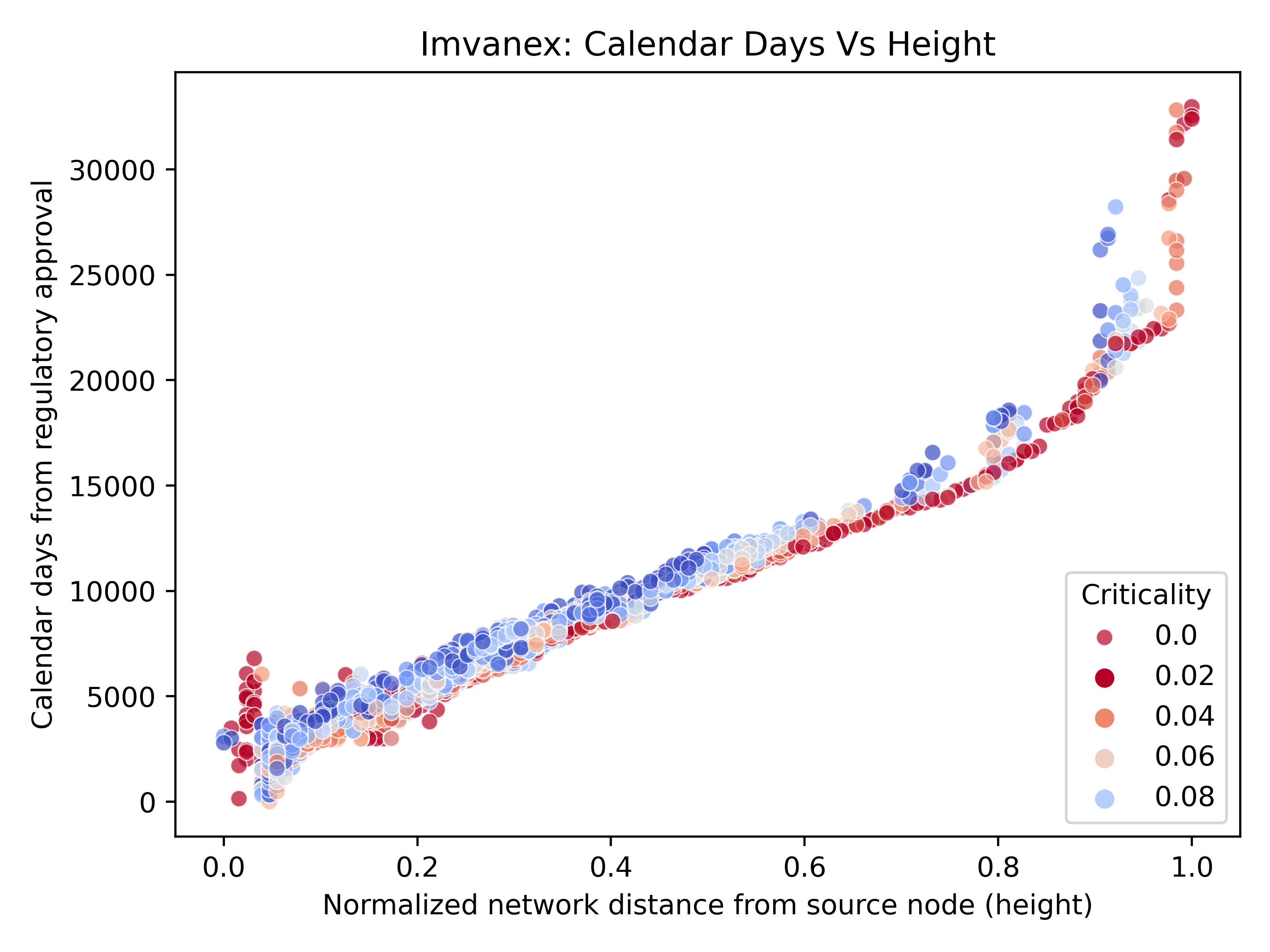}
         \caption{Imvanex, Smallpox, Bavarian Nordic, 2013, WPV}
     \end{subfigure}
     \hfill
     \begin{subfigure}[b]{0.24\textwidth}
         \centering
         \includegraphics[width=\textwidth]{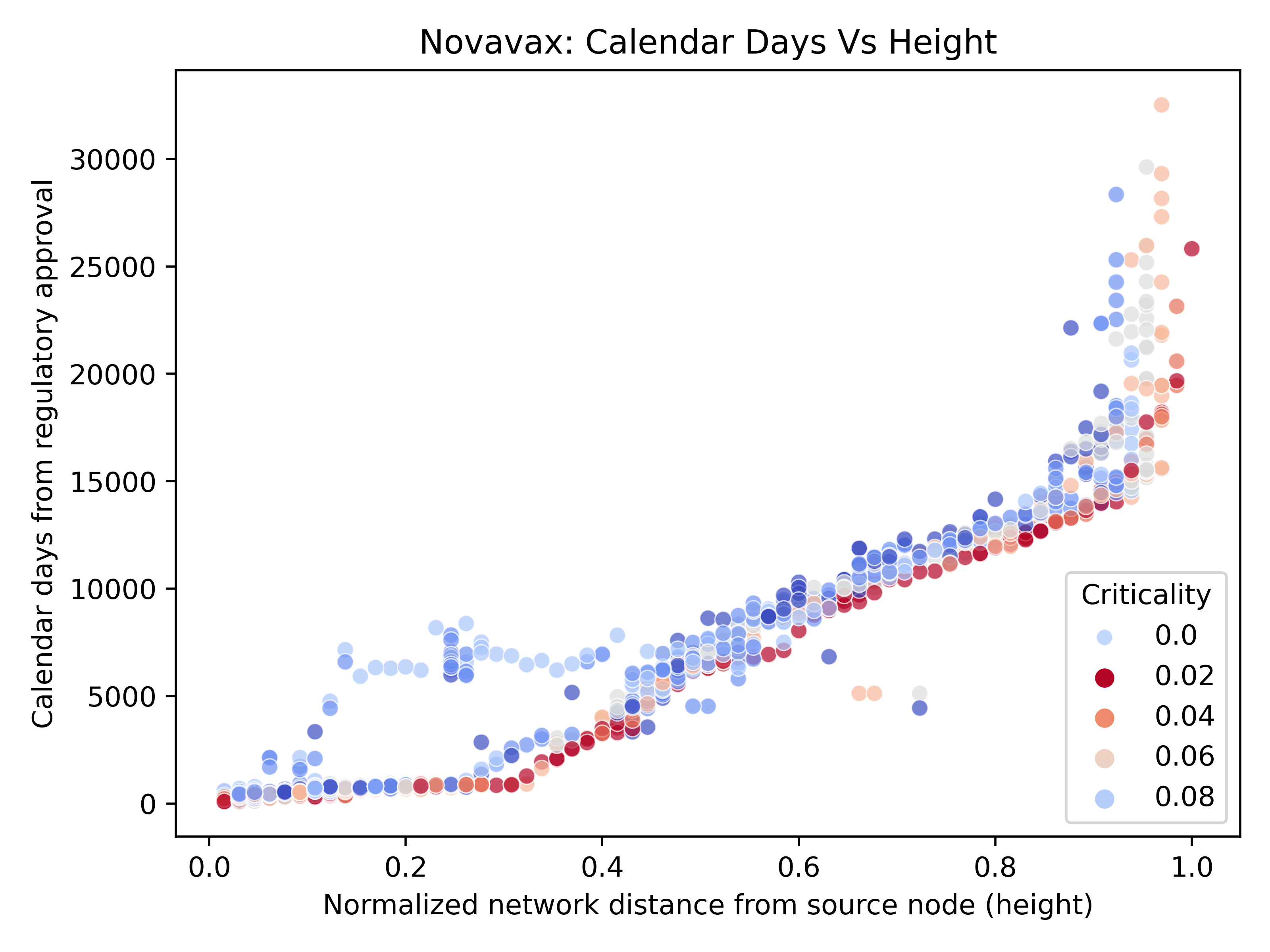}
         \caption{Nuvaxovid, COVID-19, Novavax, 2022, subunits}
     \end{subfigure}
     \hfill
     \begin{subfigure}[b]{0.24\textwidth}
         \centering
         \includegraphics[width=\textwidth]{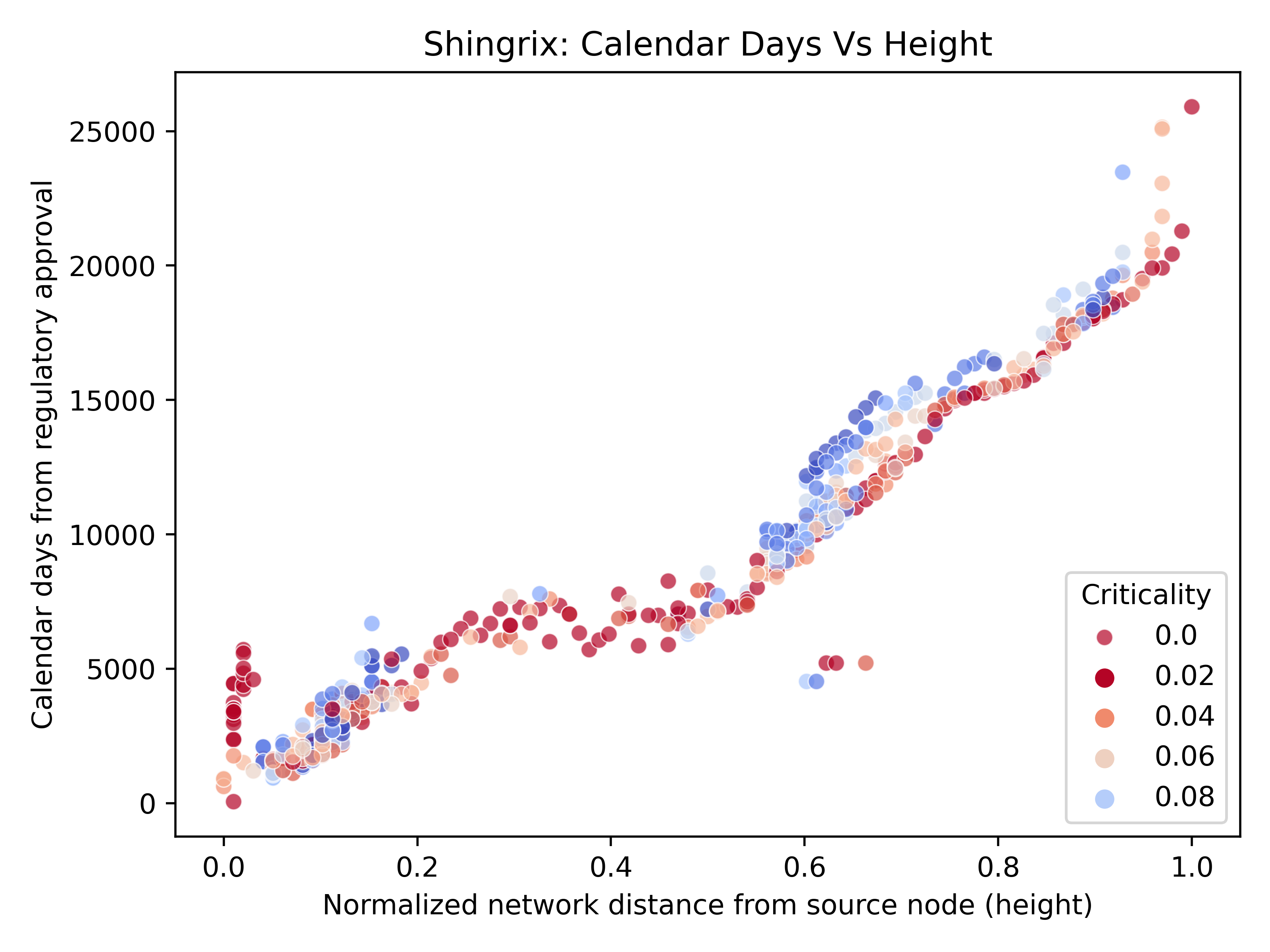}
         \caption{Shringrix, Shingles, GSK, 2017,\\ subunits}
     \end{subfigure}
     \hfill
        \caption{\textbf{Calendar days as a function of network height.}}
        \label{fig:all_date_v_height}
\end{figure}

\begin{figure}[htb]
     \centering
     \begin{subfigure}[b]{0.24\textwidth}
         \centering
         \includegraphics[width=\textwidth]{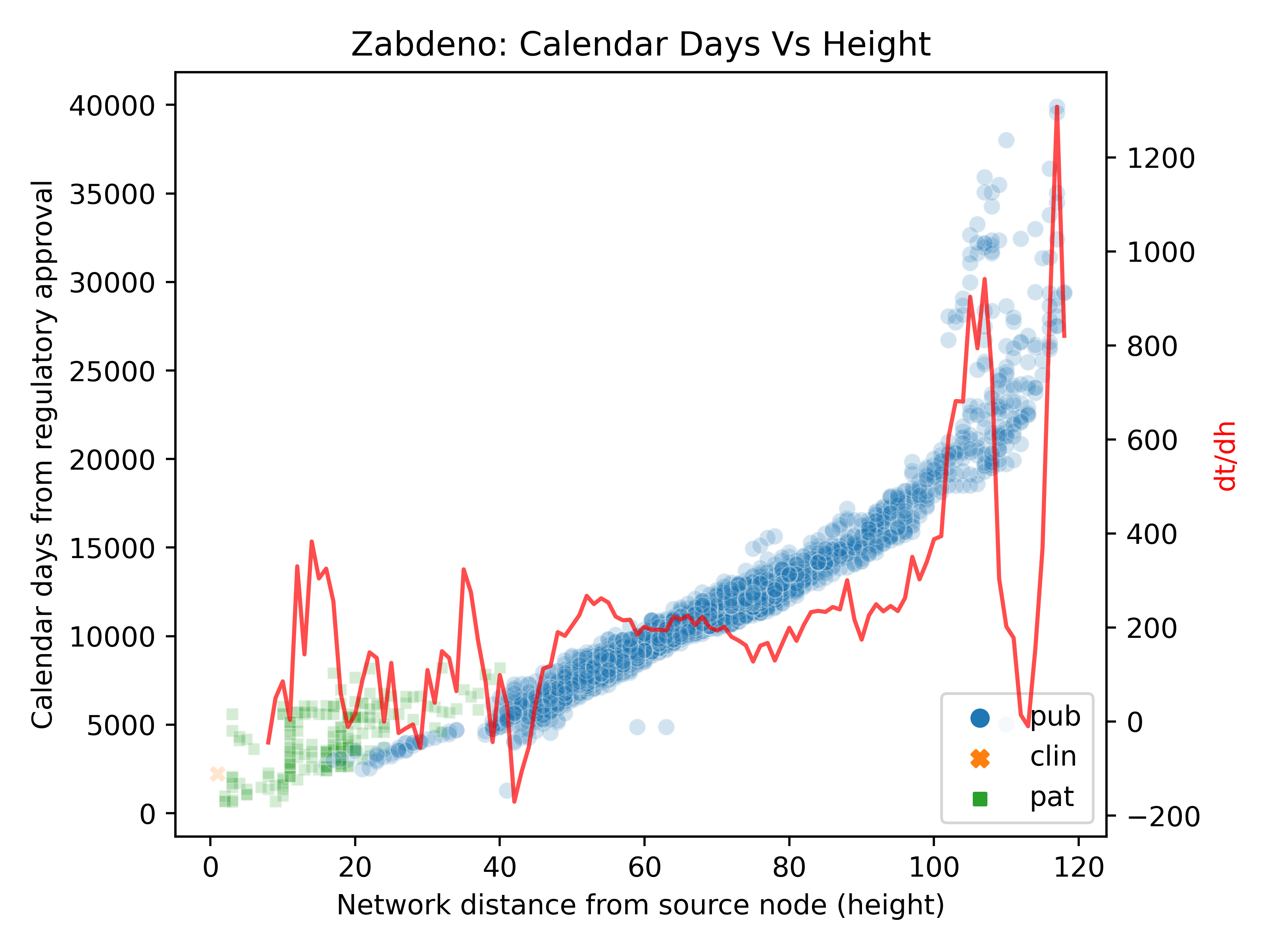}
         \caption{Zabdeno, Ebola, Janssen, 2020,\\AVV}
     \end{subfigure}
     \hfill
     \begin{subfigure}[b]{0.24\textwidth}
         \centering
         \includegraphics[width=\textwidth]{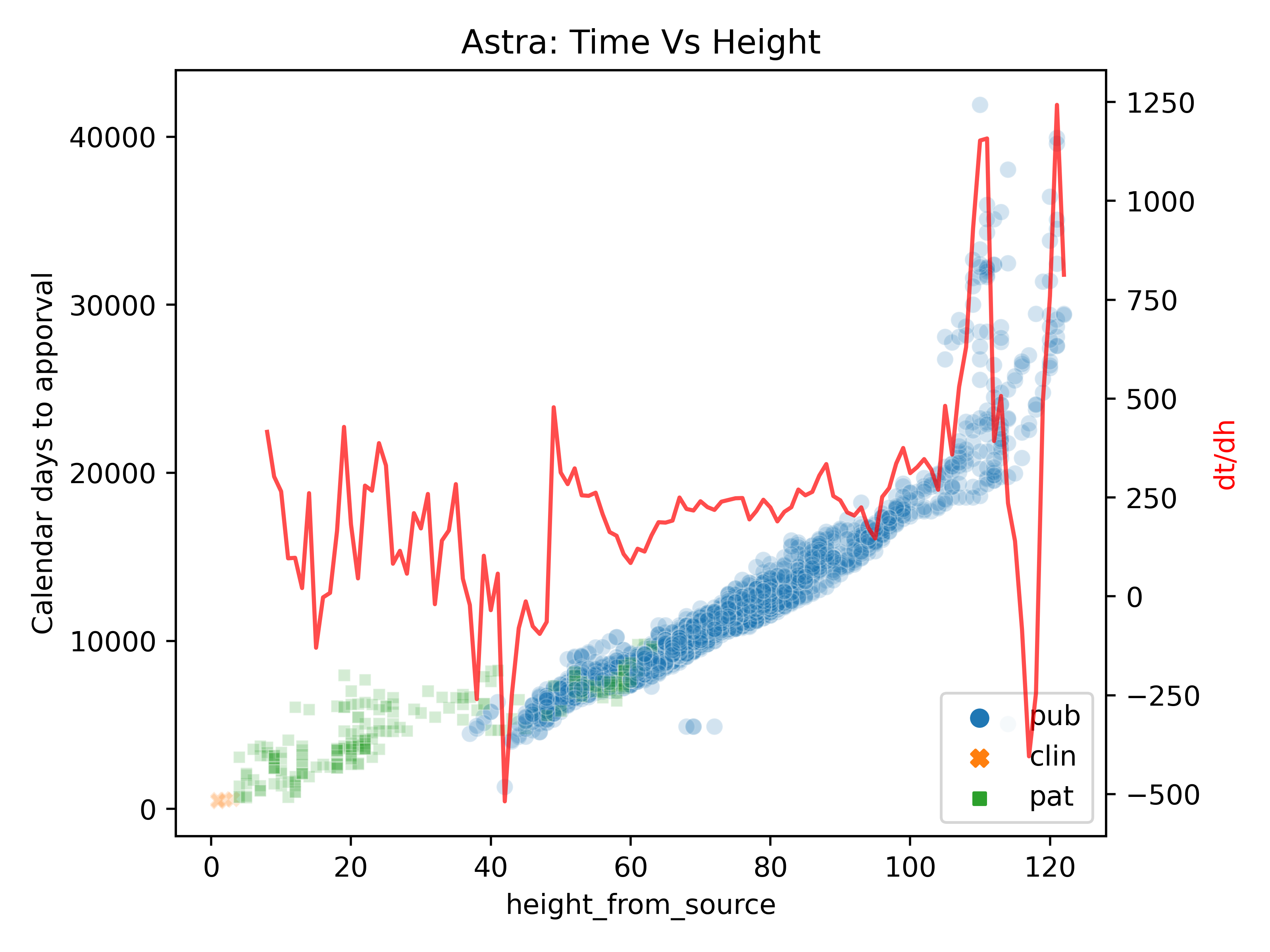}
         \caption{Vaxzevria, COVID-19, AstraZeneca, 2020, AVV}
     \end{subfigure}
     \hfill
     \begin{subfigure}[b]{0.24\textwidth}
         \centering
         \includegraphics[width=\textwidth]{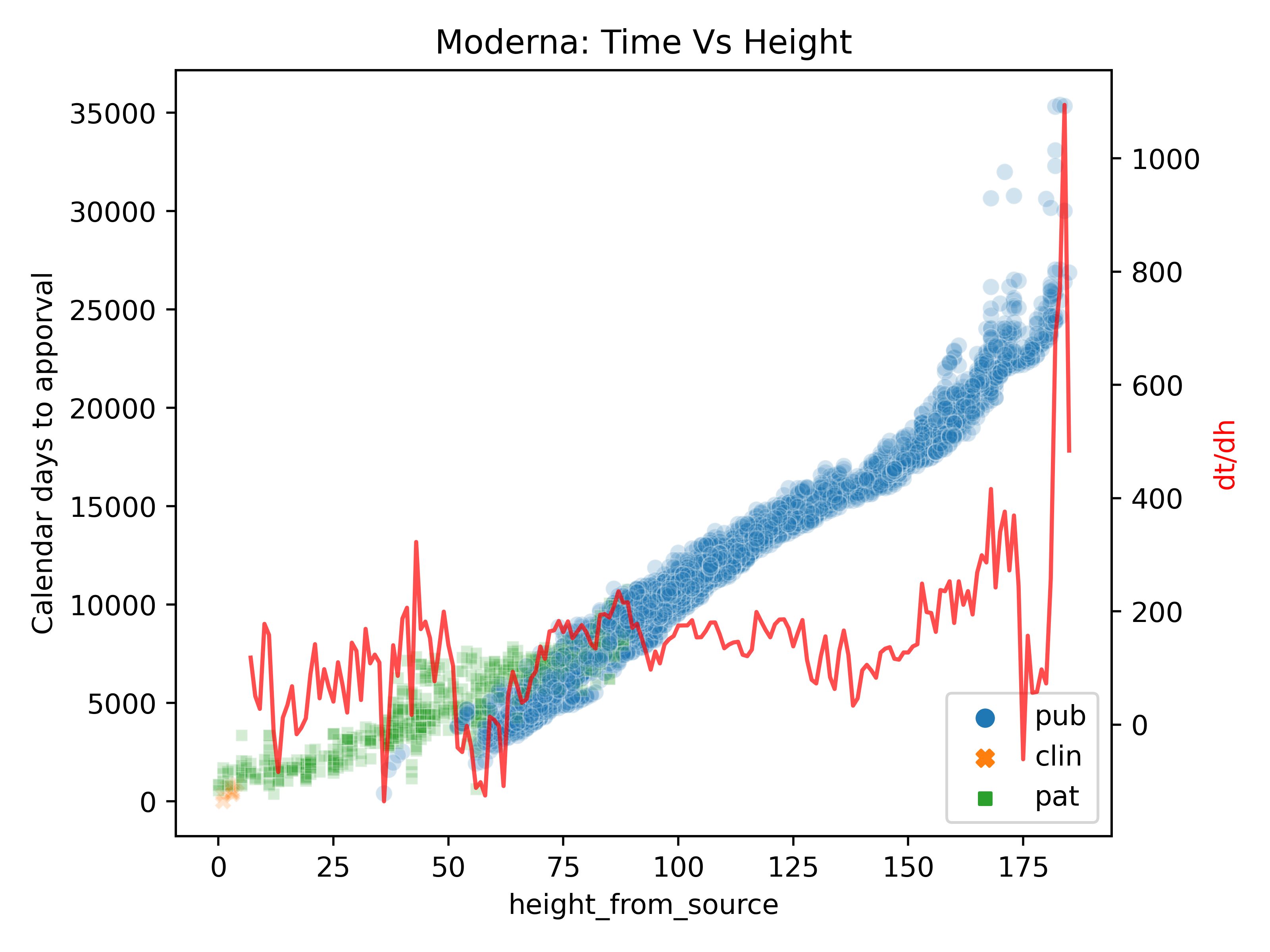}
         \caption{Spikevax, COVID-19, Moderna, 2020, mRNA}
     \end{subfigure}
     \hfill
     \begin{subfigure}[b]{0.24\textwidth}
         \centering
         \includegraphics[width=\textwidth]{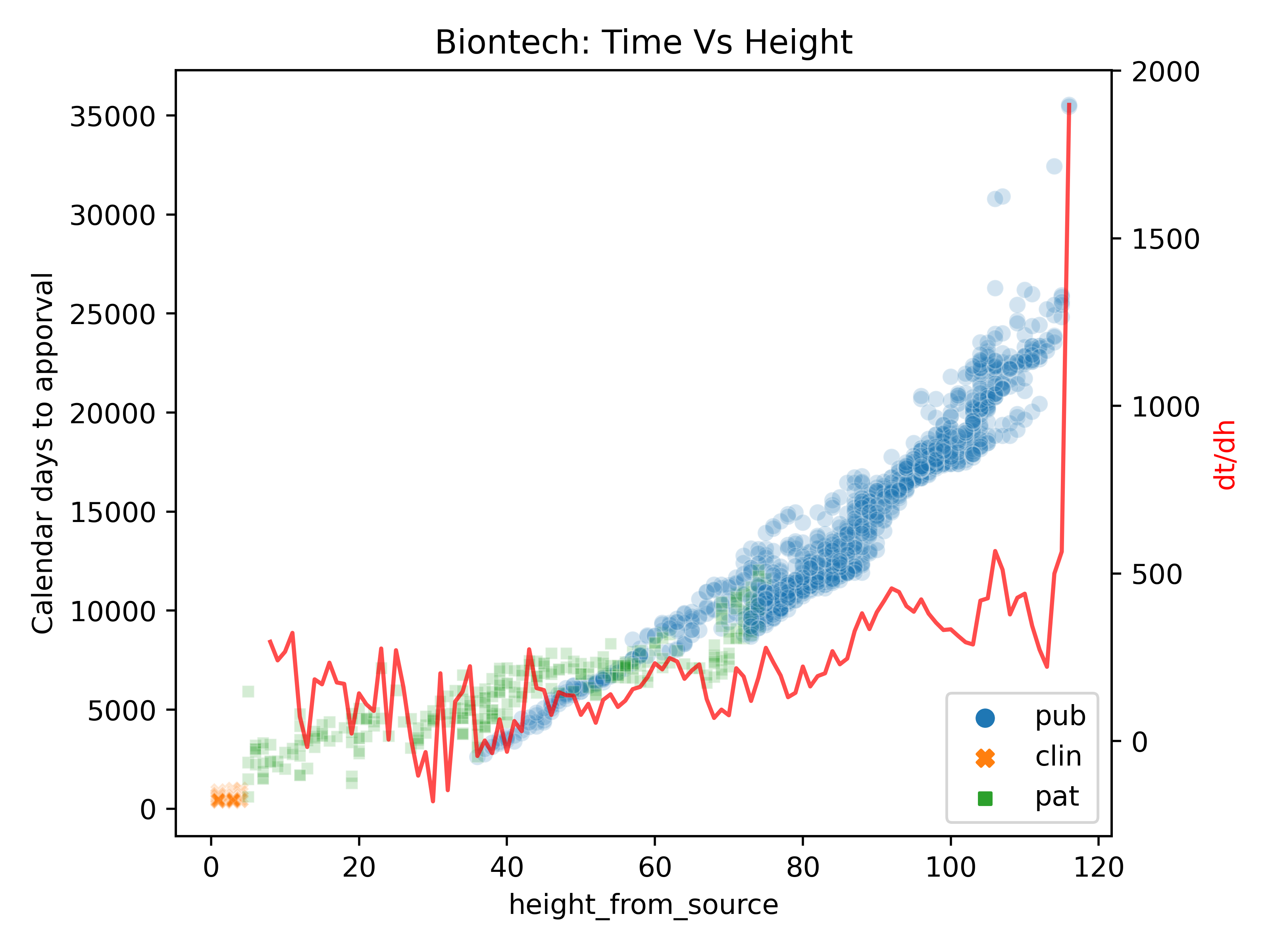}
         \caption{Comirnaty, COVID-19, BioNTech/Pfizer, 2020, mRNA}
     \end{subfigure}
     \hfill
     \begin{subfigure}[b]{0.24\textwidth}
         \centering
         \includegraphics[width=\textwidth]{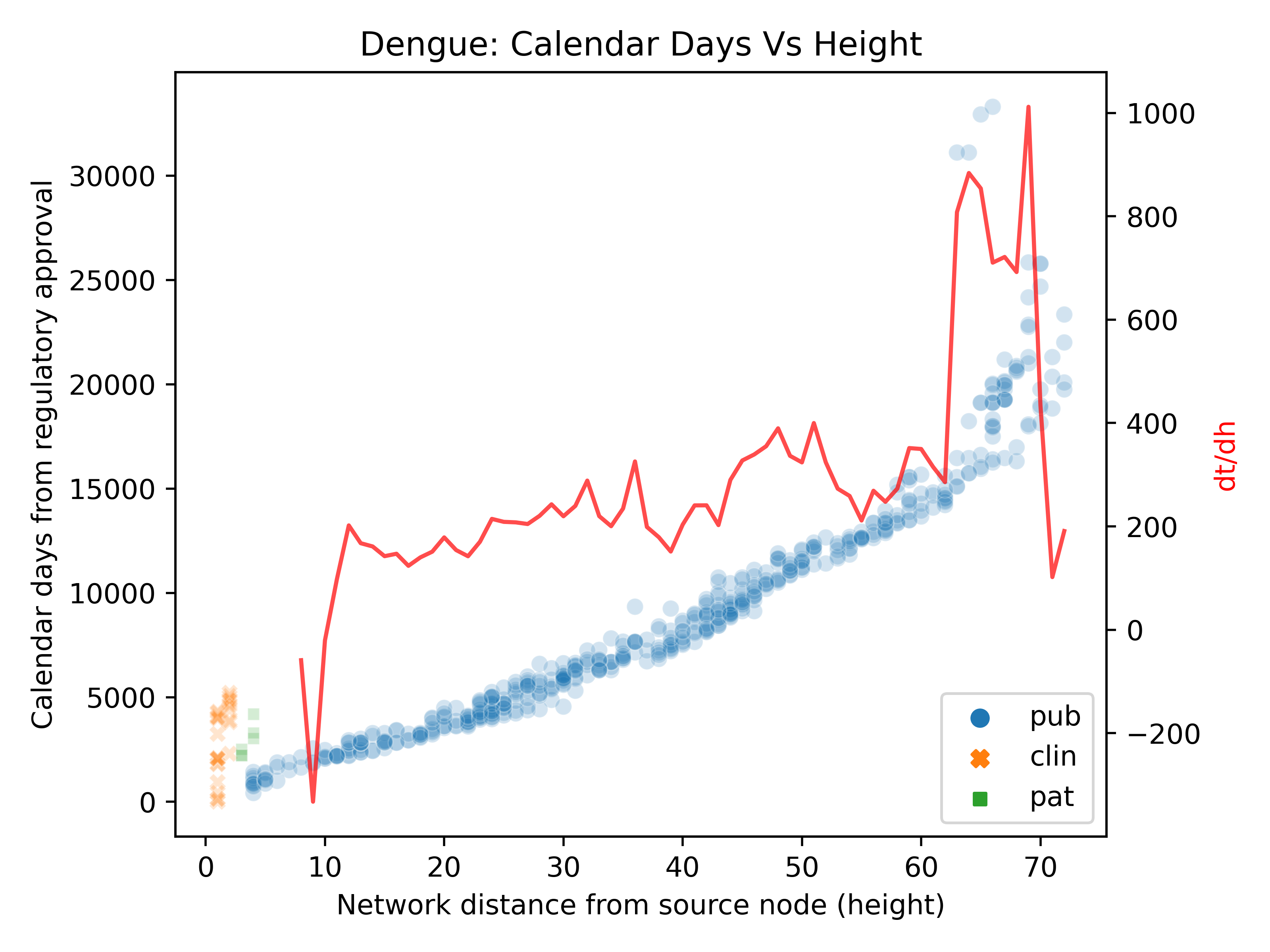}
         \caption{Dengvaxia, Dengue, Sanofi, 2019,\\ WPV}
     \end{subfigure}
     \hfill
     \begin{subfigure}[b]{0.24\textwidth}
         \centering
         \includegraphics[width=\textwidth]{height_vs_date_imvanex_rolling_diff.png}
         \caption{Imvanex, Smallpox, Bavarian Nordic, 2013, WPV}
     \end{subfigure}
     \hfill
     \begin{subfigure}[b]{0.24\textwidth}
         \centering
         \includegraphics[width=\textwidth]{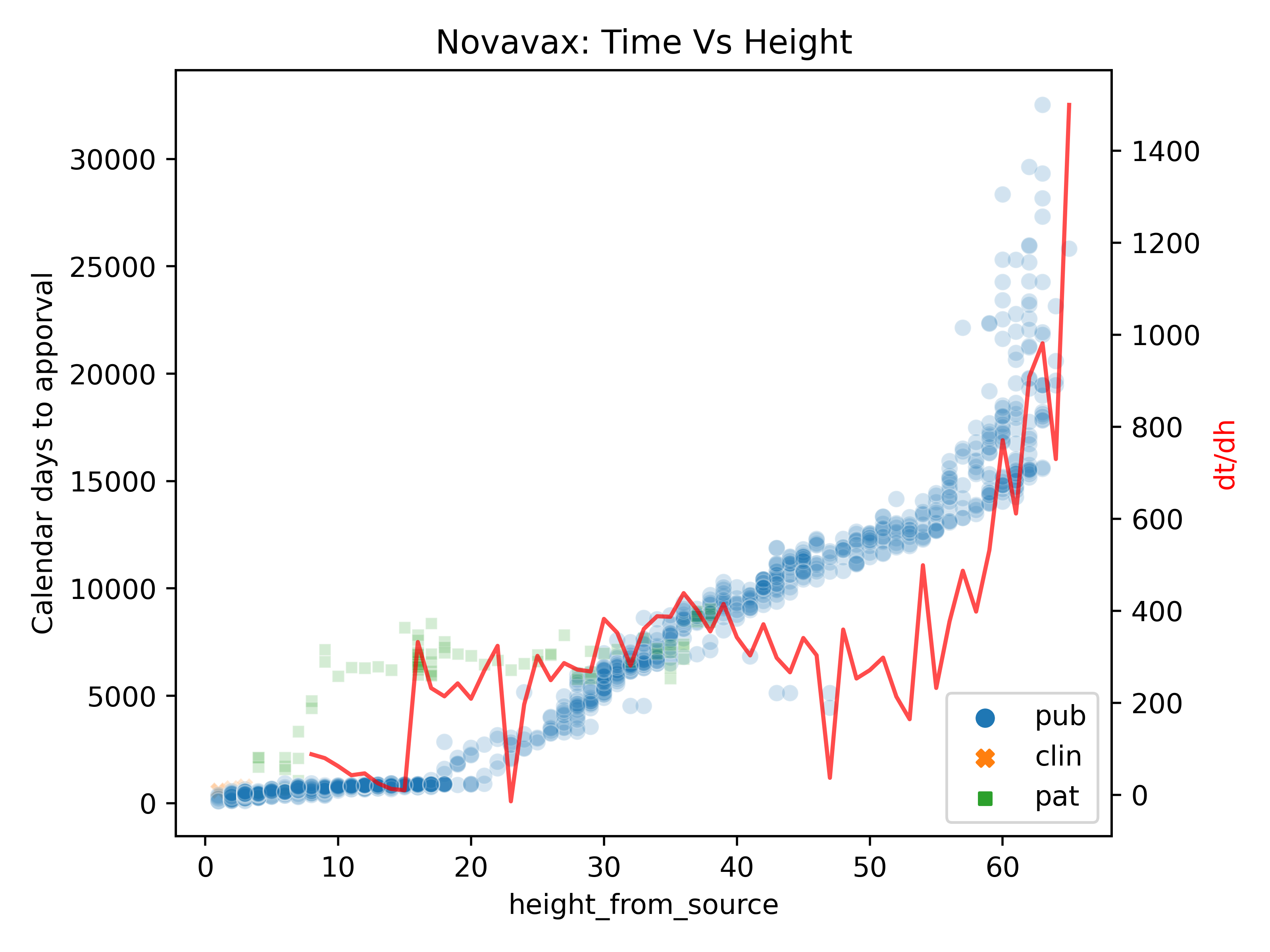}
         \caption{Nuvaxovid, COVID-19, Novavax, 2022, subunits}
     \end{subfigure}
     \hfill
     \begin{subfigure}[b]{0.24\textwidth}
         \centering
         \includegraphics[width=\textwidth]{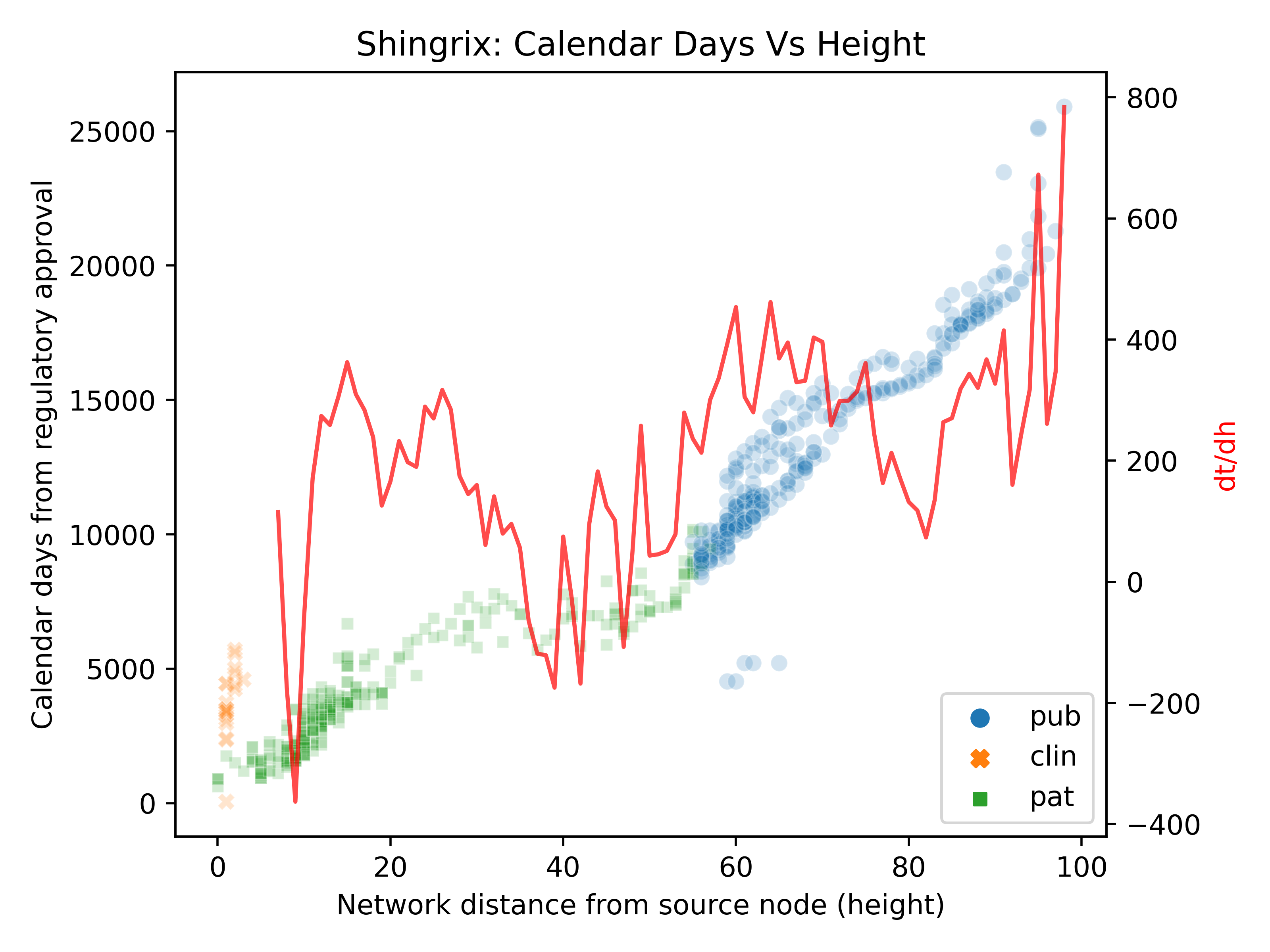}
         \caption{Shringrix, Shingles, GSK, 2017,\\ subunits}
     \end{subfigure}
     \hfill
        \caption{\textbf{Calendar days per height as a function of height.}}
        \label{fig:all_day_per_height_v_height}
\end{figure}

\clearpage
\section{Further methodological discussions}\label{validity}

\subsection{Natural experiment} \label{natural_experiment}

Randomized controlled trials
aim to eliminate selection bias, but are mostly only feasible in clinical trials. The estimation of average treatment effect comes with the important assumptions that the treatment of any participant does not have an effect on other participants and that all confounding covariates are accounted for. To mimic randomization in the economy where regions and stakeholders cannot be randomly assigned into groups, economists perform natural experiments where exogenous events ``as if'' randomly assign subjects into treatment and control groups. Notable examples of natural experiments include difference-in-differences, which estimates the average treatment effect by comparing treatment and control groups, who would otherwise move in parallel without the treatment, in multiple time panels~\cite{RN1014,RN1634}; and regression discontinuity, which exploits abrupt changes, cutoffs, or thresholds to ``as if'' randomly assign samples into treatment and control groups~\cite{RN1628,RN1629,RN1631,RN1630}. One validity requirement of natural experiments rests on the assumption that assignment is sufficiently  ``as if'' random.

\subsection{Main Path Analysis} \label{main_path_analysis}

Here we will give a brief summary of ``main path analysis''. This was proposed by Hummon and Doreian~\cite{RN997} and, with variations, has been implemented in some popular analysis packages, for example see \cite{RN1245}.

Main path analysis starts by looking at all possible paths between a specified set of nodes, a set which varies between various implementations of main path analysis. The number of these paths which pass through a given edge is used to assign the ``edge weight'' of that edge, i.e.\  a value assigned to that edge. Then the length of a path is defined to be the sum of the edges weights traversed in that path. Finally, the ``main path'' is defined using a greedy algorithm to find paths of high length as defined using the edge weights. That is to construct a main path, the current main path is extended by adding one more node to the end of the path such that the length increases by the largest amount. The main path will end when it reaches a sink node, a node with no outgoing edges.

The first main path analysis described the scientific advances that eventually led to the discovery of the DNA structure by Watsons and Crick \cite{RN997}. Main path analysis is later applied to other networks of academic publications~\cite{HC93,RN1246,RN361,RN1261,RN1248,RN1253,RN1254,RN1260,RN1255,RN1011,A23} and, more recently, patents~\cite{RN361,RN1247,RN1249,RN1093,RN1252,RN1001,RN1251,RN1231,RN1259,RN1092}. Case studies of main path analysis span from understanding the emergence of engineered products such as battery, nanotubes, automobiles, and semiconductors; to academic theories such as bioinformatics, social network analysis, absorptive capacity, Hirsch index, and peer review. Most of these case studies employ one of the four ``out-of-the-box" indices developed by Hummon and Doreian~\cite{RN997} and Batagelj~\cite{RN1245} with the objective to reduce the number of nodes in a citation network to a single chain of events to enable qualitative interpretation. However, these studies do not consider the longest path beyond a simplification device.

\subsection{Edge weighting}

We choose to assign a weighting of 1 to all edges because we assume each edge in the citation network, including in the longest path, represents a minimum viable increment of novelty. A weight of 1 allows equivalence in increments of innovation. We believe this assumption is valid because being published in a journal, accepted as a patent, approved to run a clinical trial, or authorised to market a therapeutic represents a minimum normalised threshold of originality from peer-review\footnote{Anecdotally also known as the LPI or Least Publishable Unit \url{https://en.wikipedia.org/wiki/Least_publishable_unit}}. i.e a group would not be able to publish any earlier and would not delay publication as they would seek to publish an increment as soon as possible. Conversely, any reweighting, such as in main path analysis, impedes interpretability. Future works can use funding amount as weight if data becomes more complete; whereas we do not recommend using time as edge weights because time is already implied in network height and an edge that consumes a long duration of time does not mean it is more novel.

\subsection{Citation behaviours}
\label{citation_behaviours}

When studying citation networks, it is important to note that citation practices vary. Different types of document may have different goals, and publishers set their own constraints on the bibliographies. The citation tradition in various fields can be very different, while individual authors add another source of variability. For instance, patent applicants need to strike the balance between minimising citations to demonstrate novelty and citing enough to not infringe prior arts~\cite{CE16, RN1710}. Based on \figref{fig:funder_network_order}, we can believe privately-funded publications are likelier to have end-uses in mind and may bias citations towards applied research; whereas publicly-funded publications may be preoccupied with phenomenological questions. In addition, funders enforce grant acknowledgements in publications and patents differently.
For example, the US \href{https://en.wikipedia.org/wiki/Bayh-Dole_Act}{Bayh-Dole Act} requires that all recipients of federal research funds report to the funding agency any patent they file and acknowledge on patent documents the existence of federal funding, while many other countries do not have similar requirements. The different citation behaviours are likely more pronounced in the multilayer citation network we use as it assumes publications, patents, clinical trials, and regulatory approvals cite in the same way.

\subsection{Patent family}
\label{patent_family}

A patent family is a collection of patent applications covering the same or similar technical content. Patent families usually arise from a single invention being filed in multiple countries (``simple patent family") and when an applicant files new applications for similar existing technical contents (``extended patent family").  \secref{patent_prosecution} below explains the importance of considering patent families in our network.

\subsection{Patent prosecution}
\label{patent_prosecution}

 Patent application often spans several years. Four key dates in chronological order are:

 \begin{enumerate}
     \item Priority date: date used to establish the novelty of an invention
     \item Filing date: when a patent application is first filed at a patent office
     \item Publication date: when a patent application is published
     \item Grant date: when a patent office grants a patent
 \end{enumerate}

 Patent prosecution is the interaction among patent applicants, patent offices including examiners, and other interested parties. Patent prosecution usually spans between (2) filing date and (4) grant date, but can extend after grant if there is opposition, corrections, or other post-grant proceedings. 
 
 Due to patent prosecution, the bibliography of almost every patent is updated with new references. Almost every patent gets citations added during their prosecution time. These can be added by the examiner, by the applicant, assignee, or the inventor. What occurs less frequently is for citations to be added after grant. These usually happen for more limited reasons, e.g. post grant opposition, corrections, reissues, etc.
 
 We use the initial patent submission date as our patent publication date. A year or two into the patent process, a recent paper can be added to the application, one that was published after the patent was submitted. As a result a patent may cite forward in time as well as the logically acceptable backwards in time. We could use the patent award date as our patent publication date which would solve the problem with the example just given.  However, now we run into problems with documents that cite a patent that is not yet approved yet is a critical part of the innovation process. This illustrates why our using the height of a node in our citation network can be a more consistent record of the logical order in the innovation process compared to calendar time. We also address this issue by considering patent families rather than single patents when possible to capture references added to a patent during patent prosecution.

\subsection{Critical path hit rate}
\label{critical_path_hit_rate}

The criticality of a given node is clearly defined in this paper by equation \eqref{eq:criticality}. Critical innovation path, on the other hand, depends on a threshold -- nodes with a criticality below an arbitrary value would be considered residing on the critical innovation path. To provide a fair comparison across funders and across vaccines in \tabref{tab:funder_criticality}, we define the critical path as nodes whose criticality is below the maximum height in a DAG multiplied by a criticality threshold $x$. To determine the value of $x$, we conducted a robustness check (\figref{fig:robustness_shingrix}) to determine that the criticality threshold for the Shingrix network would be 0.35 as this is when most funders become present on the critical path. In \tabref{tab:funder_criticality}, we only include funders who funded more than one node on the critical path and more than ten nodes in the entire network for meaningful comparison.

\begin{figure}
  \centering
    \includegraphics[width=1\textwidth]{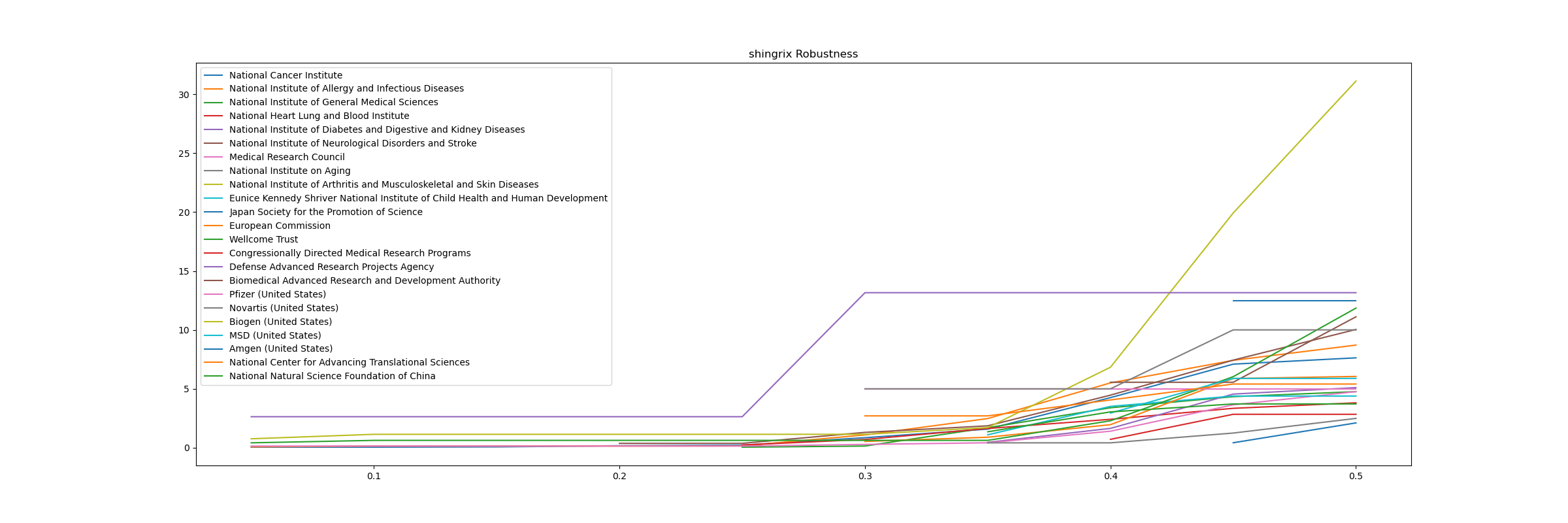}
  \caption{\textbf{Number of critical path nodes funded by funders as a function of criticality threshold.} Illustrative data from Shingrix vaccine. x-axis: critical path threshold, y-axis: number of critical path documents for funders.}
  \label{fig:robustness_shingrix}
\end{figure}

\subsection{Network density} The density of nodes reflects ambiguity in the networks' local and global order. \figref{fig:all_density} shows the citation networks are densest at low height and sparsest at high height. The latter is due to dangling nodes, potentially due to incomplete citation data in early years, meaning these regions are sensitive to change. On the other hand, observations drawn from other heights are more stable.

\begin{figure}
     \centering
     \begin{subfigure}[b]{0.24\textwidth}
         \centering
         \includegraphics[width=\textwidth]{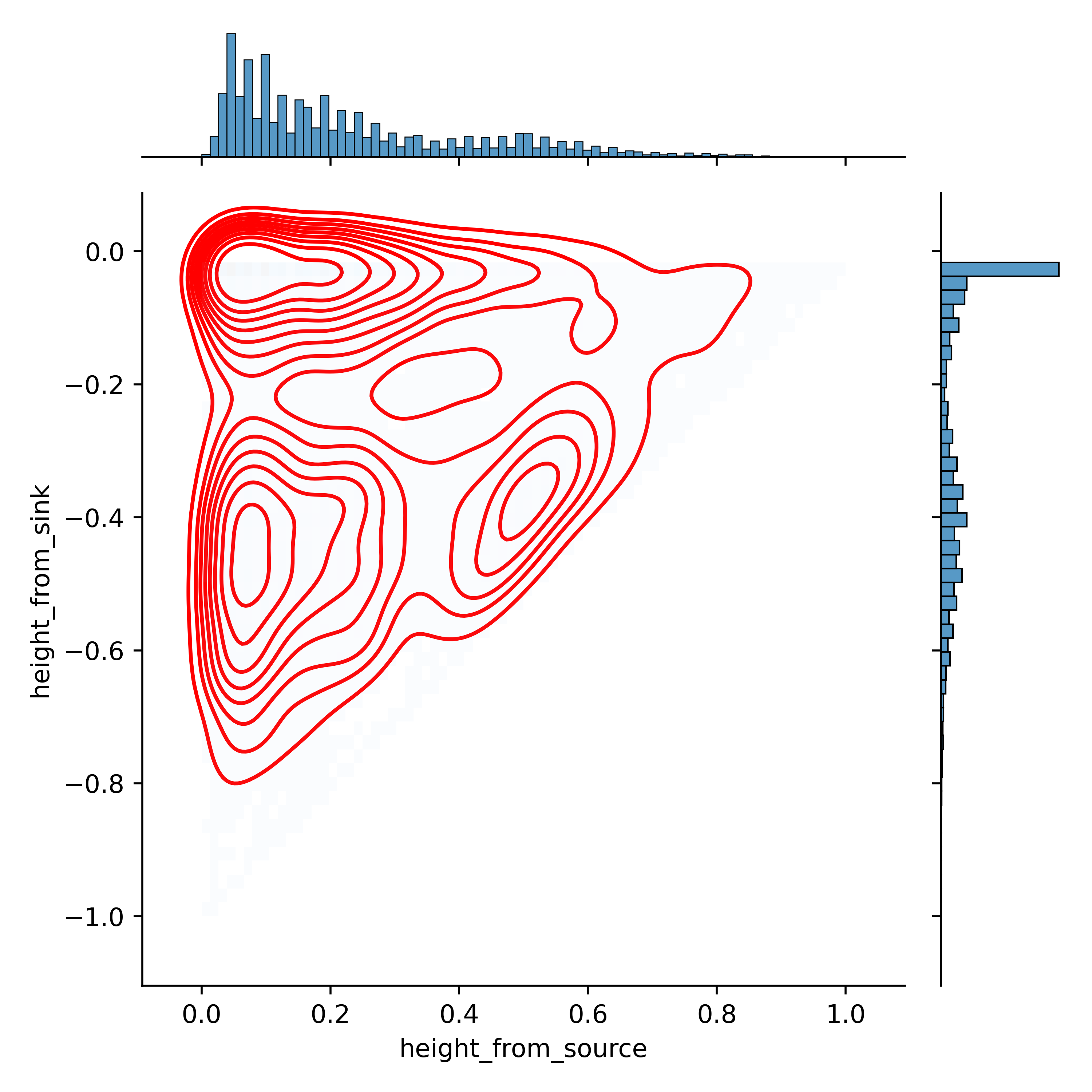}
         \caption{Zabdeno, \\Ebola, Janssen, 2020,\\AVV}
     \end{subfigure}
     \hfill
     \begin{subfigure}[b]{0.24\textwidth}
         \centering
         \includegraphics[width=\textwidth]{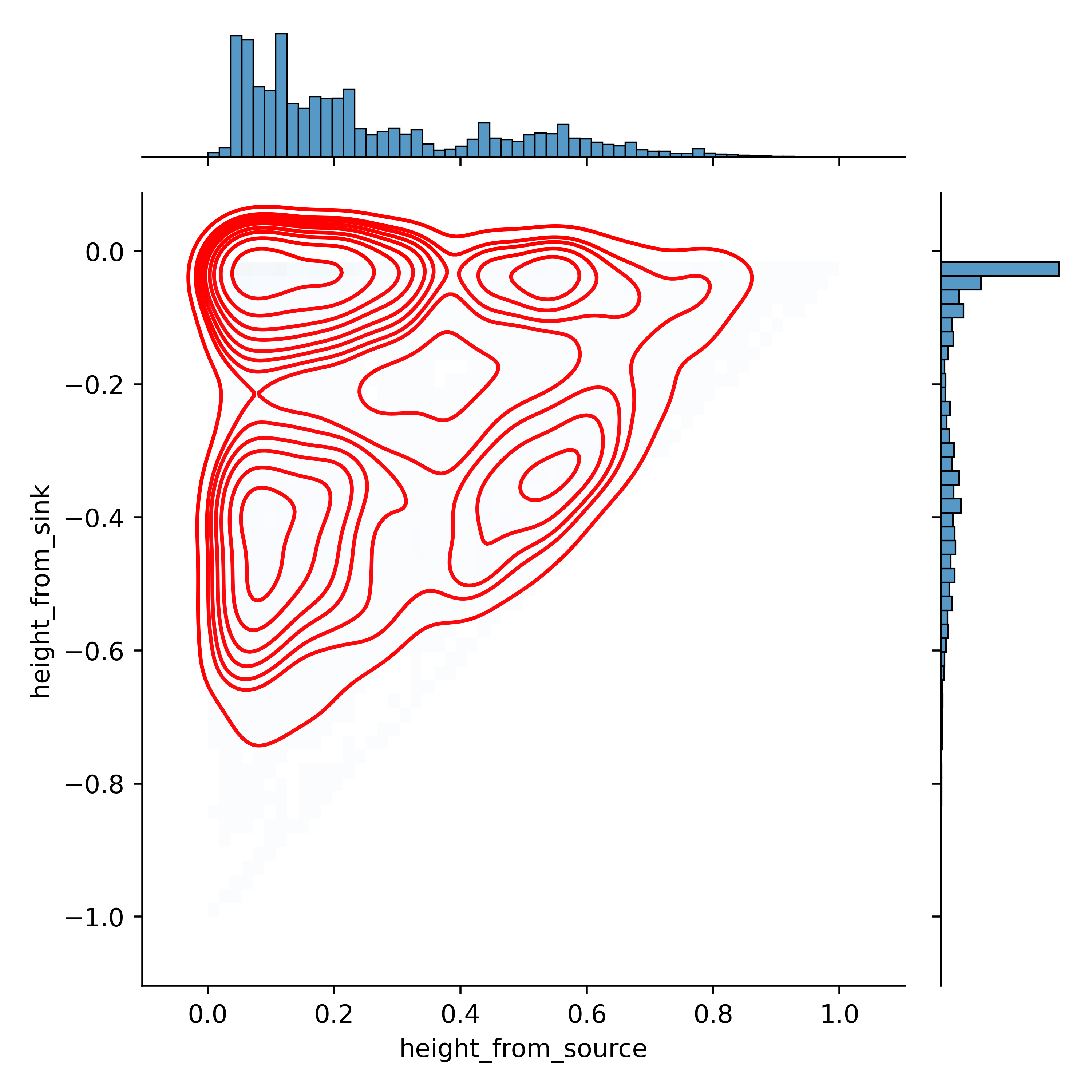}
         \caption{Vaxzevria, COVID-19, AstraZeneca, 2020, AVV}
     \end{subfigure}
     \hfill
     \begin{subfigure}[b]{0.24\textwidth}
         \centering
         \includegraphics[width=\textwidth]{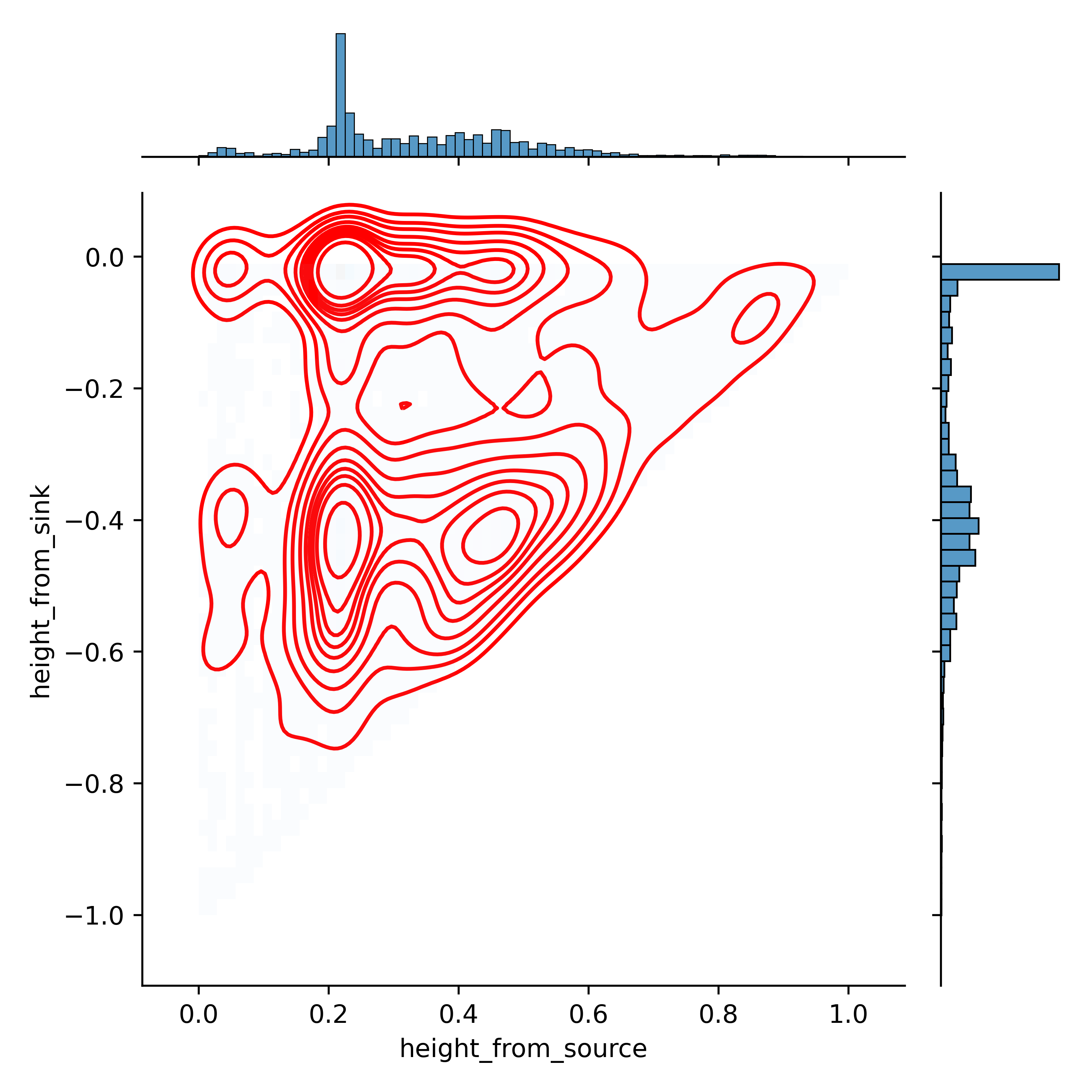}
         \caption{Spikevax, COVID-19, Moderna,\\ 2020, mRNA}
     \end{subfigure}
     \hfill
     \begin{subfigure}[b]{0.24\textwidth}
         \centering
         \includegraphics[width=\textwidth]{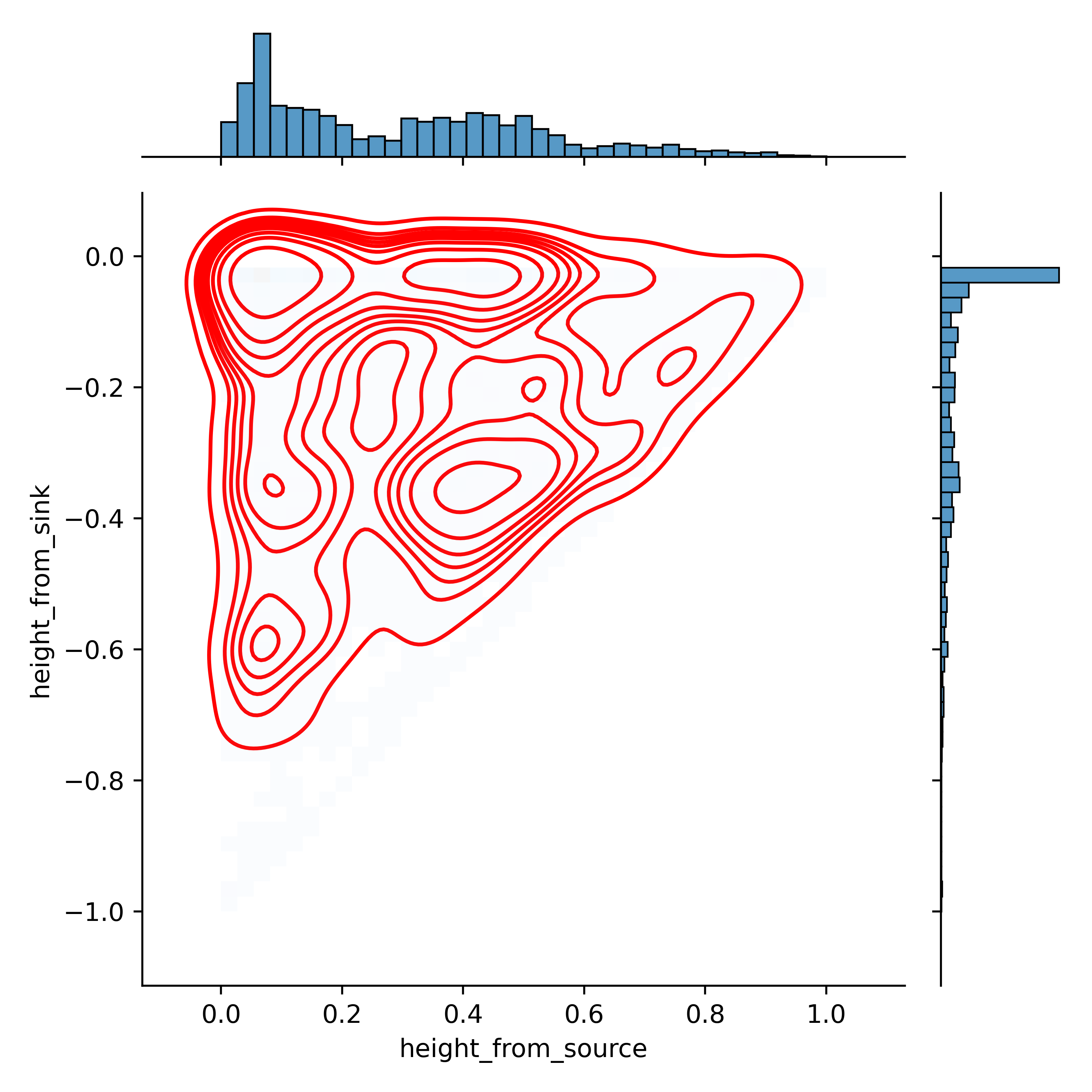}
         \caption{Comirnaty, COVID-19, BioNTech/Pfizer, 2020, mRNA}
     \end{subfigure}
     \hfill
     \begin{subfigure}[b]{0.24\textwidth}
         \centering
         \includegraphics[width=\textwidth]{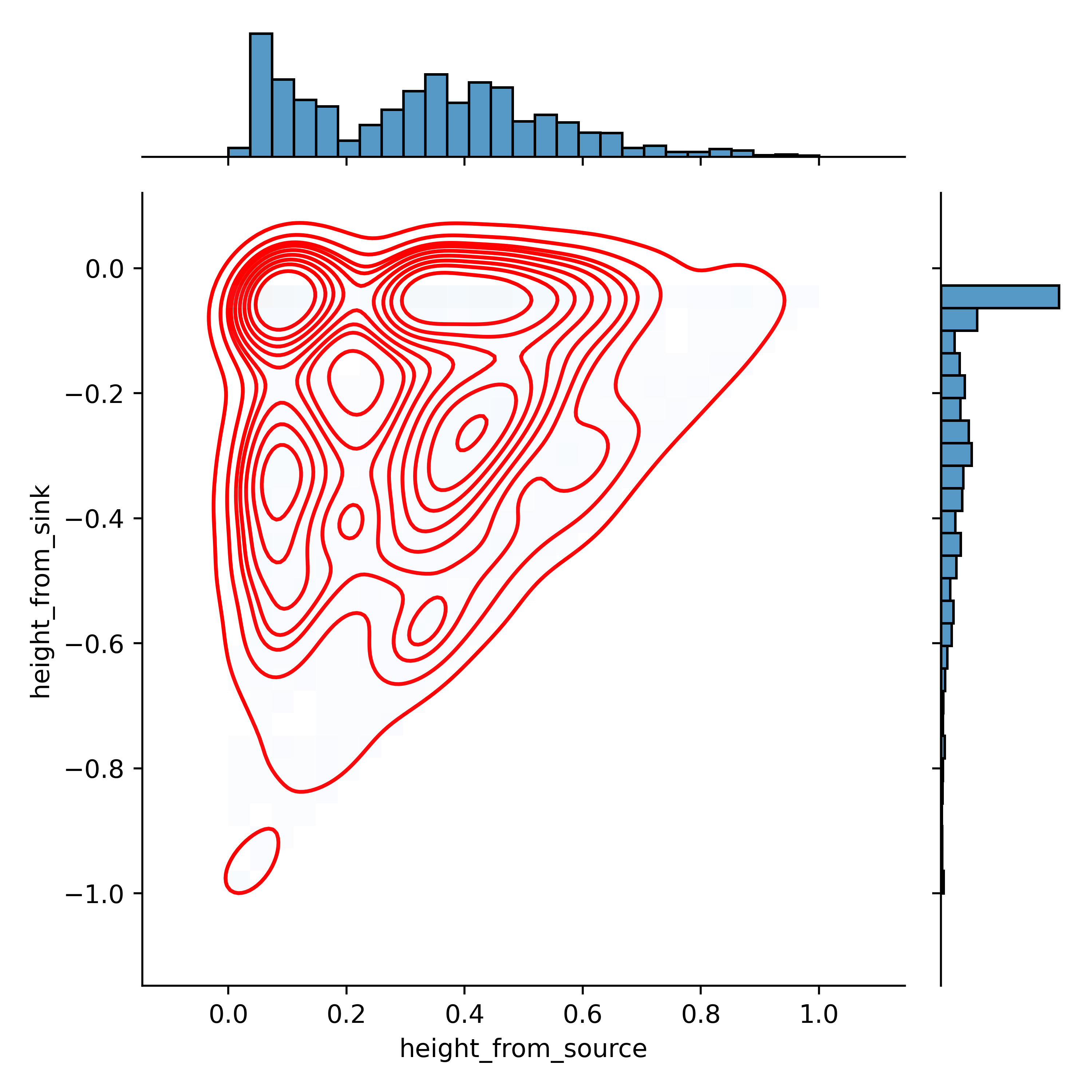}
         \caption{Dengvaxia, Dengue, Sanofi, 2019,\\ WPV}
     \end{subfigure}
     \hfill
     \begin{subfigure}[b]{0.24\textwidth}
         \centering
         \includegraphics[width=\textwidth]{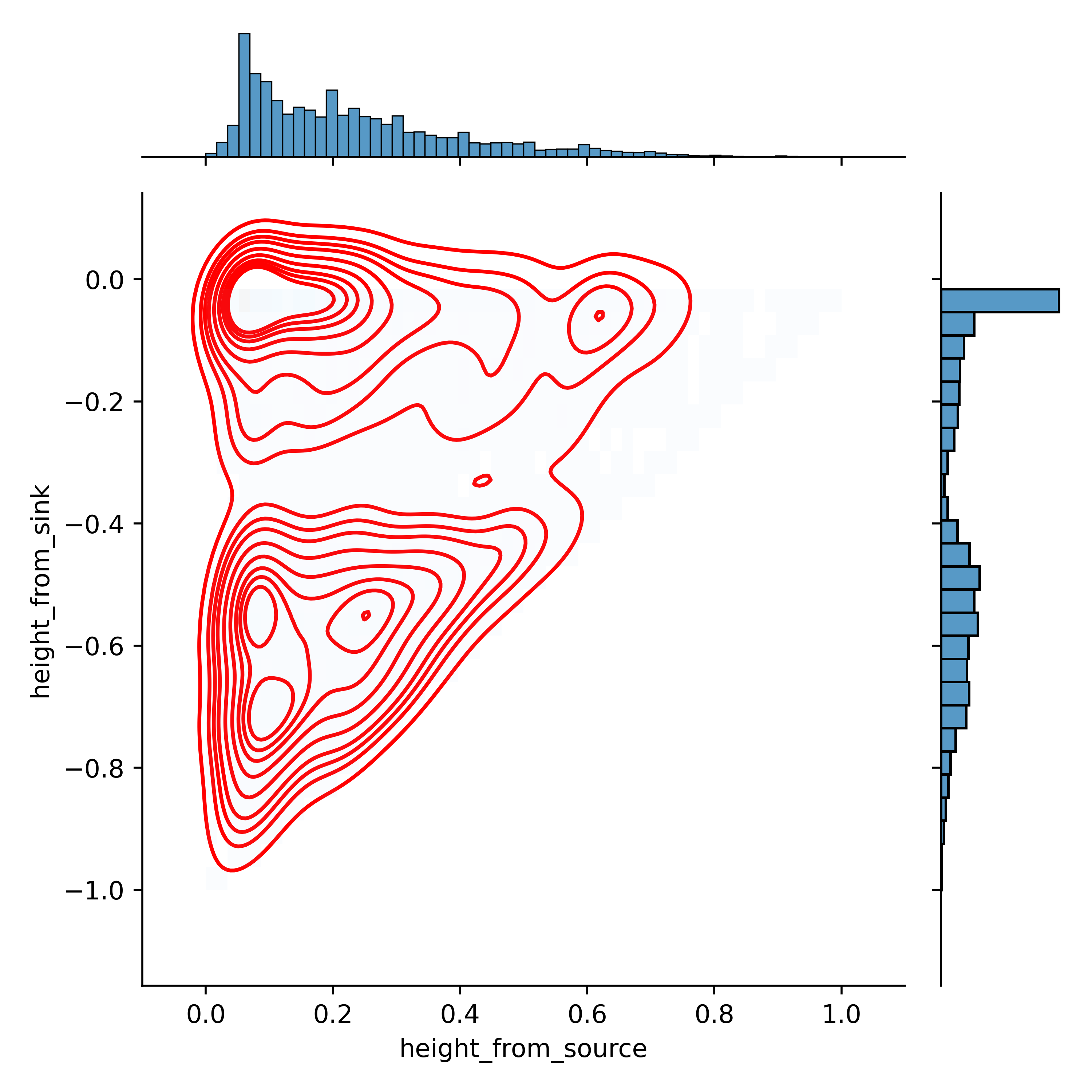}
         \caption{Imvanex, Smallpox, Bavarian Nordic, 2013, WPV}
     \end{subfigure}
     \hfill
     \begin{subfigure}[b]{0.24\textwidth}
         \centering
         \includegraphics[width=\textwidth]{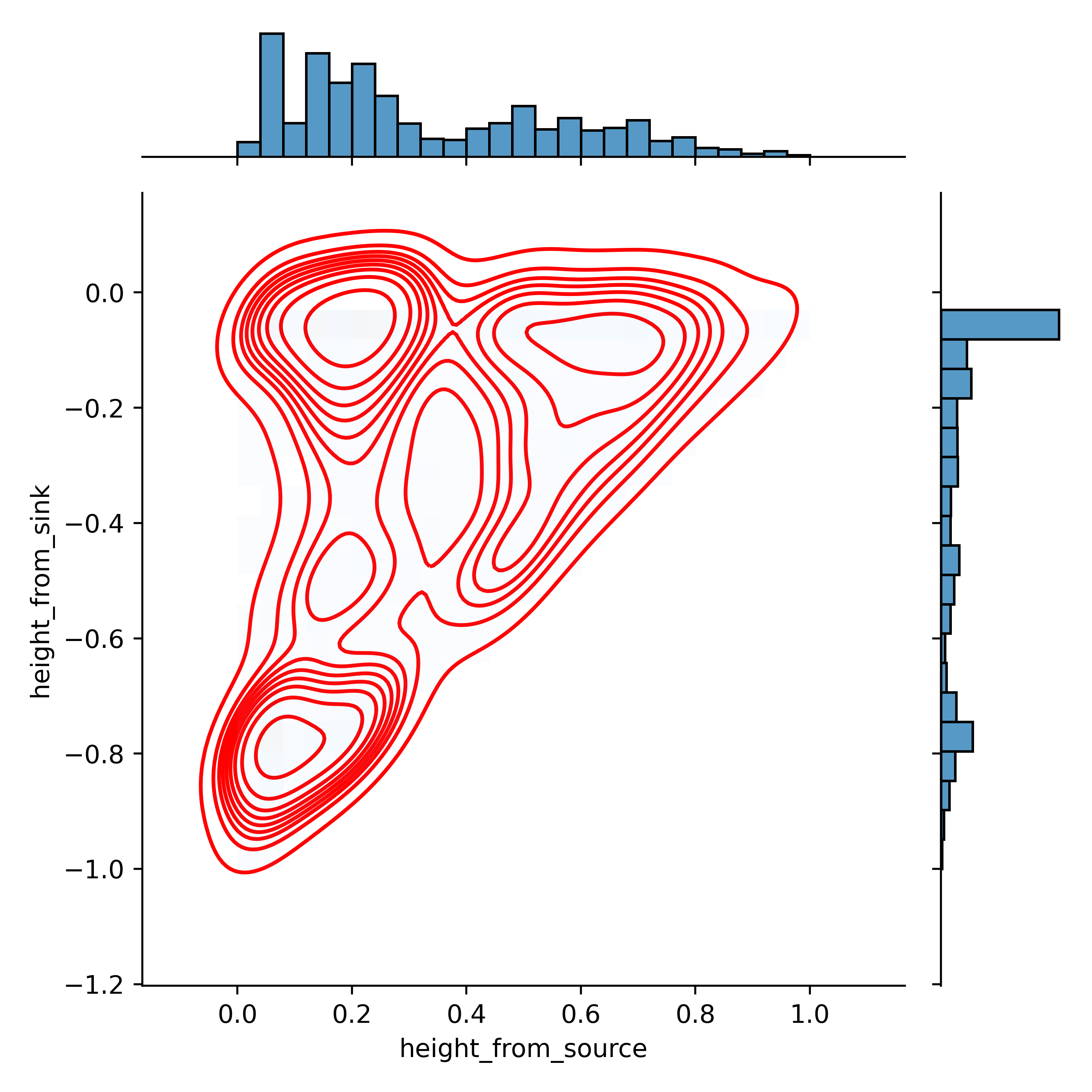}
         \caption{Nuvaxovid, COVID-19, Novavax, 2022, subunits}
     \end{subfigure}
     \hfill
     \begin{subfigure}[b]{0.24\textwidth}
         \centering
         \includegraphics[width=\textwidth]{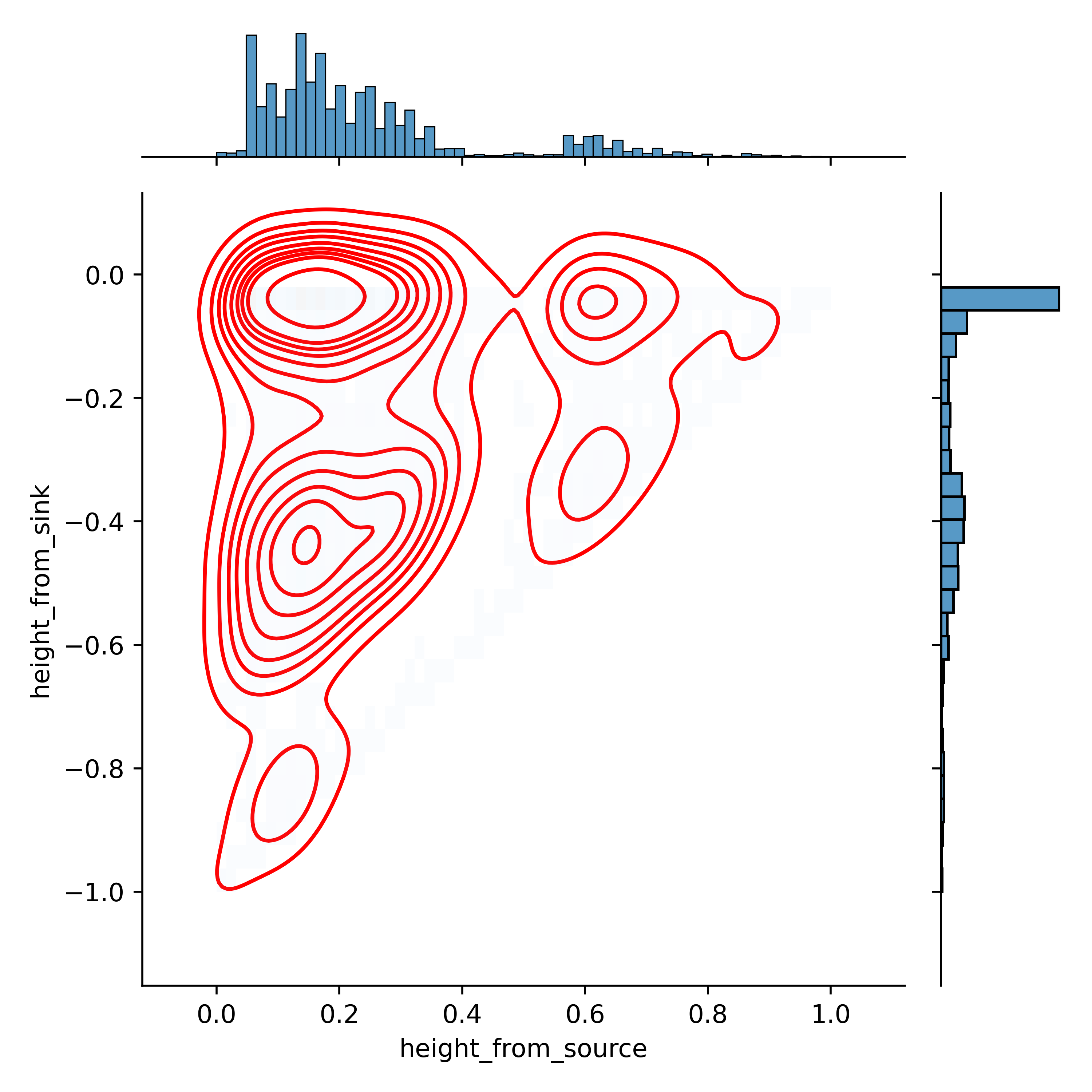}
         \caption{Shringrix, Shingles, GSK, 2017,\\ subunits}
     \end{subfigure}
     \hfill
        \caption{\textbf{Kernel density estimation of height and depth.}}
        \label{fig:all_density}
\end{figure}

\end{document}